\newtheorem{theorem}{Theorem}
\newtheorem{lemma}[theorem]{Lemma}
\newtheorem{definition}[theorem]{Definition}
\newtheorem{corollary}[theorem]{Corollary}
\newtheorem{proposition}[theorem]{Proposition}
\newtheorem{observation}[theorem]{Observation}
\newcommand{\eps}{\varepsilon}
\newcommand{\gdy}{\textsc{greedy}}
\newcommand{\alg}{\textsc{alg}}
\newcommand{\opt}{\textsc{opt}}
\newcommand{\cP}{\mathcal{P}}
\newcommand{\minD}{D^{\min}}
\newcommand{\omb}{\textsc{ob}}
\newcommand{\In}{\mathcal{I}_n}
\newcommand{\kCtr}{\textsc{k-Center}}
\newcommand{\Cls}{\mathbb{C}}
\newcommand{\lG}{1.25}
\definecolor{darkred}{rgb}{0.5,0,0}
\newenvironment{packed_item}{
\begin{itemize}
  \setlength{\itemsep}{1pt}
  \setlength{\parskip}{0pt}
  \setlength{\parsep}{0pt}
}{\end{itemize}}
\title{Stochastic dominance and the bijective ratio of online algorithms\footnote{This work was supported by project ANR-11-BS02-0015 
``New Techniques in Online Computation'' (NeTOC). The second author is supported in part by the European Research Council  under the European Union's Horizon 2020 research and innovation programme (grant agreement No 648032).}}
\author[1]{Spyros Angelopoulos}
\author[2]{Marc P. Renault}
\author[3]{Pascal Schweitzer}
\affil[1]{CNRS and Universit\'e Pierre et Marie Curie,  Paris, France\\ {\tt spyros.angelopoulos@lip6.fr}}
\affil[2]{CNRS and Universit\'e Paris Diderot, Paris, France\\ {\tt mrenault@liafa.univ-paris-diderot.fr}}
\affil[3]{RTWH Aachen University, Aachen, Germany\\ {\tt schweitzer@informatik.rwth-aachen.de}}
\date{}
\begin{document}

\maketitle

\begin{abstract}
{\em Stochastic dominance} is a technique for evaluating the performance of online algorithms that provides an intuitive, yet powerful stochastic order between the compared algorithms. Accordingly this holds for {\em bijective analysis}, which can be interpreted as stochastic dominance assuming the uniform distribution over requests. These techniques have been applied in problems such as paging, list update, bin coloring, routing in array mesh networks, and in connection with Bloom filters, and have provided a clear separation between algorithms whose performance varies significantly in practice. However, despite their appealing properties, there are situations in which they are not readily applicable. This is due to the fact that they stipulate a stringent relation between the compared algorithms that may be either too difficult to establish analytically, or worse, may not even exist.

In this paper we propose remedies to both of these shortcomings. First, we establish sufficient conditions that allow us to prove the bijective optimality of a certain class of algorithms for a wide range of problems; we demonstrate this approach in the context of well-studied online problems such as weighted paging, reordering buffer management, and 2-server on the circle. Second, to account for situations in which two algorithms are incomparable or there is no clear optimum, we introduce the {\em bijective ratio} as a natural extension of (exact) bijective analysis. Our definition readily generalizes to stochastic dominance. This renders the concept of bijective analysis (and that of stochastic dominance) applicable to all online problems, is a broad generalization of the Max/Max ratio due to Ben-David and Borodin, and allows for the incorporation of other useful techniques such as amortized analysis. We demonstrate the applicability of the bijective ratio to one of the fundamental online problems, namely the continuous k-server problem on metrics such as the line, the circle, and the star. Among other results, we show that the greedy algorithm attains bijective ratios of $O(k)$ consistently across these metrics. These results confirm extensive previous studies that gave evidence of the efficiency of this algorithm on said metrics in practice, which, however, is not reflected in competitive analysis.
\end{abstract}

\newpage

\section{Introduction}
\label{sec:introduction}

Competitive analysis provides a simple yet effective framework for evaluating the performance
of online algorithms. Given a cost-minimization problem, the 
{\em competitive ratio} of the online algorithm $A$ is defined as $\sup_\sigma \frac{A(\sigma)}{\opt(\sigma)}$, 
where $A(\sigma)$ and $\opt(\sigma)$
denote the cost of $A$ and the optimal cost on a request sequence $\sigma$, respectively.  
This concept of comparing the worst-case performance of an online algorithm (with no advance knowledge of the sequence) 
to an optimal solution (with full access to the sequence) was first used by Graham in 1966~\cite{Graham1966} to analyze algorithms for the job shop scheduling problem. 
Following the seminal work of Sleator and Tarjan in 1985~\cite{ST85}, competitive analysis became the standard yardstick in the evaluation of online algorithms, and it  
has been instrumental in shaping
online computing into a well-established field of theoretical computer science. 
Overall, competitive analysis is broadly applicable and gives valuable insight into the performance of online algorithms.

Notwithstanding the undeniable success of competitive analysis, certain drawbacks have long been known. 
Most notably, due to its pessimistic (i.e., worst-case) evaluation of algorithms, it often fails to  
distinguish between algorithms for which experimental evidence (or even plain intuition)
would show significant differences in terms of performance. 
One definitive illustration of this undesirable 
situation is the well-known {\em paging problem}; here, paging strategies that are very efficient in practice, such as Least-Recently-Used 
have the same competitive ratio as extremely naive and costly strategies, such as Flush-When-Full~\cite{ST85}. 
Generally, competitive analysis is particularly meaningful when the obtained ratios are small; however, when the ratios are large 
(and even more so, in the case of an unbounded ratio) it risks no longer reflecting what is observed in practice.
Such disconnects between the empirical and the theoretical performance evaluation have motivated a substantial line of research on measures 
alternative to the competitive ratio. Some of the known approaches are: 
the {\em Max-Max ratio}~\cite{BenBor94:maxmax}; 
the {\em diffuse adversary model}~\cite{diffuse,diffuse1,diffuse2}; 
{\em loose competitiveness}~\cite{loose1,caching1}; 
the {\em random order ratio}~\cite{kenyon96bestfit};
the {\em relative worst-order ratio}~\cite{BFL05:relative,BEL06:LRU2}; 
the {\em accommodation function model}~\cite{BLN01:accommodating}; and 
{\em stochastic dominance} as well as {\em bijective and average analysis}~\cite{DBLP:journals/ipl/Hofri83,MR812817,MR793474,DBLP:journals/ipl/SeshadriR96,DBLP:journals/jcss/Mitzenmacher96,DBLP:journals/im/LumettaM07,ADLO07:paging,DBLP:conf/esa/HillerV08,ADLO08:list,AS:bijective}.
We refer the reader to the surveys~\cite{survey,DBLP:journals/ife/HillerV12} for an in-depth discussion of such techniques. 

Of particular interest in this work is (first order) \emph{stochastic dominance} which defines a partial order on random variables. A random variable~$X$ is 
stochastically dominated by a random variable~$Y$ if, for all~$c\in\mathbb{R}$, we have~$\Pr[X\leq c]\geq \Pr[Y\leq c]$. Clearly, if $X$ is stochastically dominated by $Y$,
then $\mathbb{E}(X) \leq \mathbb{E}(Y)$; however a much stronger conclusion can be drawn, namely that, for the cumulative distribution functions of the distributions from which $X$ and $Y$ are drawn, denoted by $F$ and $G$ respectively, $F(c) \ge G(c)$ for all $c$. That is, informally, $F$ has more probability mass towards lower values than $G$.
Moreover, it can be shown that $\mathbb{E}(h(X)) \leq \mathbb{E}(h(Y))$ for every increasing function $h$. If we think of $h$ as a utility function, then stochastic dominance provides a kind of {\em unanimity rule} which informally states that $X$ should be preferred
to $Y$ under any monotone utility function (assuming these variables denote costs). For this reason, stochastic dominance 
has been very useful in the context of decision theory and microeconomics, with applications varying from portfolio selection to measuring income inequality in society.
For a comprehensive discussion, see Chapters 4 and 5 in the textbook~\cite{Wolfstetter}.  

Hiller and Vredeveld~\cite{DBLP:conf/esa/HillerV08} applied the concept of stochastic dominance in the context of online computing.
More precisely, an algorithm~$A$ is \emph{stochastically no worse} 
than an algorithm~$B$ with respect to a given distribution over the request sequences
if the random variable corresponding to the cost of~$A$ is stochastically dominated by that of~$B$. 
In particular, assuming the uniform distribution over all request sequences of a given size,  
stochastic dominance is equivalent to {\em bijective analysis}~\cite{DBLP:journals/ife/HillerV12}. 
This latter notion was first introduced in~\cite{ADLO07:paging} in the context of the paging problem and was shown to be
consistent with some natural, ``to-be-expected'' properties of efficient online algorithms 
(e.g., the effect of locality of reference as well as lookahead) which competitive analysis fails to yield. 
For a further discussion of the appealing aspects of bijective analysis, see~\cite{ADLO07:paging,AS:bijective}.

\begin{definition}[\cite{ADLO07:paging}] 
Let ${\cal I}_n$ denote the set of all request sequences of size $n$.
The online algorithm $A$ is {\em no worse} than 
the online algorithm $B$ on inputs of size $n$ according to bijective analysis
if there exists a bijection $\pi: {\cal I}_n \rightarrow {\cal I}_n$
satisfying $A(\sigma) \leq B(\pi(\sigma))$ for each $\sigma \in {\cal I}_n$. 
Moreover, $A$ is bijectively {\em optimal} if the above holds for all online algorithms $B$.
\label{def:bijective}
\end{definition}

\paragraph{The bijective ratio of online algorithms.}
Despite the appealing properties of bijective analysis, its biggest deficiency is a rather serious one: 
given two online algorithms it may be very difficult to compare them, in that it may be very hard to prove analytically the existence of the required bijection; 
even worse, such a bijection may not even exist. Thus, this analysis technique may deem
algorithms incomparable across a wide variety of problems, and, in this sense, it does not give rise to a real performance measure. 
This drawback implies that bijective analysis (and by extension, stochastic dominance more generally) 
lacks the most desirable property of the competitive ratio; namely the amenability of any given online problem to analysis.
Such an observation could also help explain why these techniques did not become as popular as competitive analysis, even though the fundamentals and 
some limited applications can be traced to work contemporary of competitive 
analysis. 
Calderbank et al.~\cite{MR812817} point towards such difficulties when observing in the context of the $k$-server problem that
``The prospects for successful analysis would seem to be better for the circle.
However, even in this case optimization questions may well be intractable since rules
of the simplicity of the [greedy]  rule are unlikely to be optimal''
(see also the discussion in Section~\ref{sec:preliminaries}).
For this reason, \cite{ADLO07:paging} introduced a substantially weaker technique termed {\em average analysis} 
that compares the average cost of two algorithms over requests of the same length. 
In particular,  we say that $A$ is {\em no worse} than $B$ on inputs of size $n$ according to average analysis
if $\sum_{\sigma \in {\cal I}_n} A(\sigma) \leq \sum_{\sigma \in {\cal I}_n} B(\sigma)$. 
Note that, if $A$ is no worse than $B$ according to bijective analysis, then the same relation holds for average analysis; however the opposite is not necessarily true.

In this paper, we propose (and apply) an extension of bijective analysis that makes the technique applicable to any given online problem: this extension gives rise to a performance measure which we call the {\em bijective ratio}. 
This is equivalent to an approximate stochastic dominance under a uniform distribution and can be readily generalized to the \emph{stochastic dominance ratio} for any distribution. 
\begin{definition}
Given an online algorithm $A$ and an algorithm $B$, and $n \in {\mathbb N^+}$, we say that 
the {\em bijective ratio of $A$ against $B$} is at most $\rho$ if there exists a bijection $\pi: {\cal I}_n \rightarrow {\cal I}_n$ satisfying 
$A(\sigma) \leq \rho \cdot B(\pi(\sigma))$ for all $n \geq n_0$. We denote this by $A \preceq_b \rho \cdot B$. 
The bijective ratio of an online 
algorithm $A$ is at most $\rho$ if, for every algorithm $B$, the bijective ratio of $A$ against $B$
is at most $\rho$. The bijective ratio of an online cost-minimization problem is the minimum $\rho$ 
for which there exists an online algorithm with bijective ratio at most $\rho$. 
\label{def:bijective.ratio}
\end{definition} 
We note that, in Definition~\ref{def:bijective.ratio}, one may allow $B$ to be either an online, or an offline algorithm (and in particular, the offline
optimum). We can thus distinguish between the bijective ratio of an online algorithm {\em against online or offline algorithms}.
We clarify that unless explicitly specified,``the bijective ratio of an algorithm $A$'' assumes a comparison
against the offline optimum. 

The above distinction is motivated similarly to the Max/Max ratio introduced by Ben-David and Borodin~\cite{BenBor94:maxmax}, which is defined as
the ratio of the maximum-cost 
sequence for algorithm $A$ over the maximum-cost sequence for algorithm $B$ (which may be online or offline), 
for a given sequence length. 
We emphasize that the bijective ratio is a strong generalization of the Max/Max ratio; namely it
implies that 
the cost of the $i$-th most expensive sequence of $A$ is at most $\rho$ times the cost of the $i$-th most expensive 
sequence of $B$ {\em for all $i$}, and not just for the most expensive sequences of $A$ and $B$. 

Definition~\ref{def:bijective.ratio} is a natural extension of bijective optimality in the spirit of measures such as the competitive and the approximation
ratio. It also upholds the essential aspect of bijective analysis in that every sequence for which $A$ incurs a certain cost 
can be bijectively mapped to a sequence on which $B$ is at most $\rho$ times as costly. 
Furthermore, a bijective ratio of $\rho$ implies that the average-cost ratio of the two algorithms is at most $\rho$, 
but also the far stronger conclusion that the contribution of sequences to the average costs of the two
algorithms can be attributed in a local manner, as argued above.
This aspect extends to any stochastic dominance ratio for any distribution.
Both properties are desired extensions of bijective optimality, in the sense that they provide
a much stronger comparison than the one induced by average-case analysis.

Last, we note that in the above definitions, the performance ratios are {\em strict}; however, as with the competitive ratio, one can easily define {\em asymptotic} ratios. For instance, the asymptotic ratio of $A$ against $B$ 
is at most $\rho$ if there exists a constant $c$ such that  $A(\sigma) \leq \rho \cdot B(\pi(\sigma))+c$ for all $n \geq n_0$. 

\paragraph{Contribution.} 
Our main objective is to expand the applicability of stochastic dominance and, more  specifically, bijective analysis. We accomplish this in two ways:
first, by giving general, sufficient conditions for bijectively optimal algorithms; second, by applying 
the measure of bijective ratio to one of the canonical online problems, namely the $k$-server problem, as a case study. 

We begin our study of bijective analysis in Section~\ref{sec:optimality}, in which we extend the techniques 
applied in~\cite{AS:bijective} so as to prove the bijective optimality of certain types of greedy algorithms for a much wider class of 
problems than paging and list update. In particular, we identify some essential conditions under which a  certain subclass of greedy-like algorithms
(as formally defined in~\cite{DBLP:journals/algorithmica/BorodinNR03}) are optimal.
We then apply this general framework to  
the {\em 2-server problem on the continuous circle}, the {\em weighted paging} problem, and the {\em reordering buffer management} problem.
The above are all widely studied problems in online computing.
In particular, the result for the 2-server problem on the continuous circle  
improves on the result of Calderbank et al.~\cite{MR812817} which holds for the average case only. We also note that 
Anagnostopoulos et al.~\cite{anagnostopoulos:stochastic.k.server} studied the steady state of a stochastic version of the $k$-server problem on the circle, 
and reproved the optimality, in the average case, of the greedy algorithm when $k=2$. 

Our second contribution addresses the situation in which, according to stochastic dominance or bijective analysis, optimal algorithms may not necessarily exist.    
More precisely, we demonstrate the applicability of the bijective ratio 
in the analysis of the continuous $k$-server problem on the line, circle, and star metrics. 
Our main focus is on the performance of the greedy algorithm which is motivated by several factors. First, and most importantly, there is 
ample experimental evidence that in practice the greedy algorithm performs well in several settings~\cite{MR812817,DBLP:journals/cit/BaumgartnerMH07,DBLP:journals/cit/RudecBM10,DBLP:journals/cejor/RudecBM13}. However, these results are in stark contrast with competitive analysis since the greedy algorithm
has an unbounded competitive ratio even on the line.  
As noted in~\cite{DBLP:journals/cit/RudecBM10}, ``the [experimental] results demonstrate that [Work Function Algorithm (\textsc{wfa})] performs similarly or
only slightly better than a simple heuristic such
as the greedy algorithm, although according to
theory it should perform much better''.
In this sense, there is a big disconnect between theoretical and practical
behaviour which, perhaps surprisingly, has not received as much attention from the theoretical computer science community as other problems such as the paging problem. 
Our results demonstrate that bijective analysis can help bridge this gap. 
More precisely, we show that the greedy algorithm has bijective ratio $O(k)$ in the considered metric spaces.
Note that, for the $k$-server problem on the circle, we obtain a bijective ratio of
$k$, while the best-known competitive ratio is $2k-1$ by the analysis of the \textsc{wfa}~\cite{Borodin:textbook}.

Another appealing property of the greedy algorithm for the $k$-server problem, which is also true for the other online problems we study, is that they are among the 
simplest {\em memoryless} algorithms one can devise. Memoryless algorithms are very desirable, in general, and 
are particularly important for paging problems \cite{DBLP:journals/ibmrd/RaghavanS94}, and controlling disk heads \cite{MR812817}. 
In the context of the $k$-server problem, in particular, it is known that \textsc{wfa} is prohibitive in practice as it requires a full history of the requests and is much more complicated to implement than the simple greedy algorithm \cite{DBLP:journals/cit/BaumgartnerMH07,DBLP:journals/cit/RudecBM10,DBLP:journals/cejor/RudecBM13}.
Our result on the bijective optimality of a greedy policy concerning the weighted paging problem is of note given 
that~\cite{DBLP:journals/siamdm/ChrobakKPV91} showed that no deterministic memoryless algorithm has bounded competitive ratio for this problem.

In Section~\ref{sec:lineandcircle}, we first show that the greedy algorithm (denoted by \gdy) 
is not an optimal online algorithm for 2-server on the line, even for average-case analysis
(which implies the same result for bijective analysis). This improves on a result of Calderbank et al.~\cite{MR812817} that 
showed that there exists a {\em semi-online} algorithm (that knows the length of the sequence) that outperforms \gdy \  only on the last two requests. 
We also show that no online algorithm has a strict bijective ratio better than 2 for this problem. 
This immediately raises the question: How good (or bad) is \gdy? We address this question by 
showing that \gdy \ has a strict bijective ratio of at most $k$  and $2k$ for the circle and the line,
respectively; for the line, we also obtain an asymptotic bijective ratio at most $4k/3$.  
This analysis is almost tight, since we show that the asymptotic bijective ratio of \gdy \ is at least 
$k/3-\epsilon$ and $k/2-\epsilon$, for the circle and the line, respectively.
We also consider the algorithm $\kCtr$~\cite{BenBor94:maxmax}, which anchors its servers at $k$ points of the metric so as to minimize the maximum distance of any point in the metric to a server; it then serves each request by moving the closest server which subsequently returns to its anchor position.
In contrast to the results for $\gdy$, $\kCtr$ has an asymptotic bijective ratio of 2 for the line and the circle which generalizes the known bound on the Max/Max ratio of this algorithm~\cite{BenBor94:maxmax}.
In terms of a direct comparison of online algorithms, we obtain that \gdy \ has a bijective ratio of at most $2k/3$ against \kCtr. 
It is  worth mentioning that 
our results expand the work of Boyar et al.~\cite{Boyar:measures} who showed that $\gdy$ is bijectively optimal for the 2-server problem on a very simple, albeit discrete metric consisting of three colinear points (termed the {\em baby server problem}). 

Last, in Section~\ref{sec:star} we consider the continuous $k$-server problem on star-like metrics. Here, we show that the bijective ratio of 
the greedy algorithm is at most $4k$. On the negative side, we show that \kCtr \ is unbounded for such metrics. This raises
an interesting contrast between the bijective ratio and the Max/Max ratio: while \kCtr \  has Max/Max ratio at most $2k$ for $k$-server
on any bounded metric space, when considering the bijective ratio (which, as noted earlier, generalizes the Max/Max ratio), this algorithm 
becomes very inefficient.

In terms of techniques, the transition from exact to approximate bijective analysis 
necessitates a new approach that combines bijective analysis and amortization arguments.
In particular, we note that all previous work that establishes the bijective optimality of 
a given algorithm~\cite{ADLO07:paging,ADLO08:list,AS:bijective,Boyar:measures} is based on inductive arguments 
which do not immediately carry over to the bijective ratio. For instance~\cite{AS:bijective} 
crucially exploits the fact that for $\rho=1$, if $A \preceq_{b} \rho \cdot B$ and $B \preceq_{b} \rho \cdot C$,
then $A \preceq_b C$. This obviously only holds for $\rho=1$, i.e., for optimality.  
We thus follow a different approach that is based on a {\em decoupling} of the costs incurred by the compared algorithms
(stated formally in 
Lemmas~\ref{lem:genApproach} and~\ref{lem:amort}) by formulating two desirable properties: the first property captures the ``local'' 
efficiency of the greedy algorithm (but also potentially other good algorithms), while the second property allows
us to define best and worst server configurations (or approximations thereof) that provide insights into the choice of the appropriate
bijection. Combining these properties yields the desired results. 
For line and star metrics, in particular, 
we resort to amortized analysis using explicit potential functions which is the first example of a combination
of bijective and amortized analysis.

We conclude this section with two observations. First, we use the bijective ratio both 
to compare algorithms against the optimal offline algorithm (similar to the competitive ratio) and to directly and indirectly compare online algorithms. As an example of indirect comparison, 
Theorem~\ref{thm:kcenter.2} implies that \kCtr \ has a bijective ratio of at most 2 against \gdy, which, in combination with 
Theorem~\ref{lem:linek2} implies that \gdy \ has a bijective ratio of $\Omega(k)$ against $\kCtr$ for the line and the circle. 
The latter is asymptotically tight due to a direct-comparison result (stated in Theorem~\ref{thm:circleK}).
Second, while our focus is mainly on the greedy algorithm for reasons argued earlier, our techniques 
(in particular the decoupling Lemmas~\ref{lem:genApproach} and \ref{lem:amort}) are not tied to \gdy \ or \kCtr, and are potentially applicable to a wider class of algorithms.

\section{Related work and preliminaries}
\label{sec:preliminaries}

\paragraph*{Related work.}
Stochastic dominance (cf.\ \cite{ShakedBook,MullerBook,Wolfstetter,DBLP:journals/ife/HillerV12}) is a widely established concept in decision theory. Optimal algorithms, 
assuming certain pertinent distributions, have been identified for various online problems such as the paging problem~\cite{ADLO07:paging,AS:bijective,HV:2009.tech.report}, the list update problem~\cite{ADLO08:list,AS:bijective}, routing in array mesh networks~\cite{DBLP:journals/jcss/Mitzenmacher96}, bin colouring~\cite{DBLP:conf/esa/HillerV08} and in the online construction of Bloom filters~\cite{DBLP:journals/im/LumettaM07}. The first application of stochastic dominance for the analysis of online algorithms 
can be traced back to~\cite{DBLP:journals/ipl/Hofri83,DBLP:journals/ipl/SeshadriR96} in the context of the {\em two-headed disk} problem. 
This problem is related to the $k$-server problem but with a different cost function. Given $k$ mobile servers on a metric space, request appear on the points of the metric space and the goal is to minimize the distance travelled to serve these requests. During the time a request is being served, the other servers can re-position themselves at no cost. This renders the decision of which server to use trivial (it will always be the closer server) and puts focus on the question of where to place the other servers. 
Hofri showed that the natural greedy algorithm for this problem on the line is optimal in average and conjectured that 
it is stochastically dominated by every other algorithm under a uniform distribution~\cite{DBLP:journals/ipl/Hofri83} which
was proven by Seshadri and Rotem~\cite{DBLP:journals/ipl/SeshadriR96}.

The $k$-server problem
, originally proposed by Manasse et al.~\cite{MMS88},
involves $k$ mobile servers over a metric space. 
Upon a request to a node, a server must be moved to it. 
The incurred cost is the overall distance traversed by the servers. 
The $k$-server
problem generalizes the paging problem and has motivated an outstanding body of research (see the
surveys~\cite{ChrobakL92} and~\cite{DBLP:journals/csr/Koutsoupias09}). 
In general metric spaces, 
the Work Function Algorithm (\textsc{wfa}) of Koutsoupias and Papadimitriou is $(2k-1)$-competitive~\cite{KP95};
the best-known lower bound on the deterministic competitive ratio is $k$~\cite{MMS88}. \textsc{wfa} is 
also known to be $k$-competitive on the line~\cite{BartalK04} as well as for two servers in 
general metric spaces~\cite{ChrobakL92}, and it is the best known algorithm for the circle~\cite{Borodin:textbook}. Chrobak and Larmore showed that the algorithm Double Coverage,
which moves certain servers at the same speed in the direction of the request until a server reaches the requested point, 
is  $k$-competitive for the tree metric~\cite{Chrobak91tree}.
Calderbank et al.\ studied the $2$-server problem on the line and circle~\cite{MR812817}, and the $n$-dimensional sphere~\cite{MR793474}. 
They focused on the average case and, in particular, calculated the expected cost of $\gdy$ on the circle. 
Moreover, \cite{MR812817} presents experimental data that show that $\gdy$ is relatively close in performance to the offline optimal algorithm on the line. Similar experiments, for a variety of metric spaces and algorithms, including \gdy, are presented 
in~\cite{DBLP:journals/cit/BaumgartnerMH07,DBLP:journals/cit/RudecBM10,DBLP:journals/cejor/RudecBM13}. 
In a related work, Anagnostopoulos et al.~\cite{anagnostopoulos:stochastic.k.server} studied the steady-state distribution of $\gdy$ for the $k$-server problem on the circle.

Boyar et al.~\cite{Boyar:measures} provided a systematic study of several 
measures for a simple version of the $k$-server problem, namely, the two server problem on three colinear points. In particular, 
they showed that $\gdy$ is bijectively optimal.
Concerning the Max/Max ratio,~\cite{BenBor94:maxmax} showed that the algorithm \kCtr \
is asymptotically optimal up to a factor of 2 among all online algorithms and up to a factor of $2k$ from the optimal offline algorithm.

\paragraph{Preliminaries.}
We denote by $\sigma$ a sequence of requests, and by 
$\In$ the set of all request sequences of size $n$. Following~\cite{AS:bijective}, we denote by $\sigma[i,j]$ 
the subsequence $\sigma[i] \ldots \sigma[j]$.
We also use sometimes $\sigma_i$ to refer to the $i$-th request of $\sigma$, namely $\sigma[i]$. 
For the $k$-server problem, we denote the 
distance between two points $x,y$ by $d(x,y)$. Unless otherwise noted, we assume that both the line and the circle have unit lengths. 

Since, for continuous 
metrics, $\In$ is infinite, one needs to be careful about the allowable bijections. 
We model the continuous $k$-server problem using discrete 
metrics in which nodes are placed in an equispaced manner; as the number of nodes approaches infinity, 
this model provides a satisfactory approximation of the continuous problem. 
For instance, we approximate the continuous line (resp. circle) by a path (resp. cycle) in which vertices are uniformly spaced,
i.e., all edges have the same length which may be arbitrarily close to zero.
However, we note that the techniques we use in this paper are applicable even for the formal definition of the 
continuous problem, i.e., even when the set of all request sequence of size $n$ is infinite. 
However, in this case one needs to be careful about the allowable bijections. For instance, we should not
allow bijections that map the unit line to segments of measure strictly smaller than one. For this 
reason, we restrict the allowable bijections to {\em interval exchange transformations}~\cite{keane:1975}. 
These transformations induce bijections of the continuous space $[0,1]$ to itself that preserve 
the Lebesgue measure. Note that an interval exchange transformation is continuous with the exception 
of a finite number of points. In particular, we apply such transformations when constructing the bijection 
request-by-request.

Given an online algorithm $A$, we say that the \emph{configuration} of $A$ after serving any sequence of requests $\sigma$ 
is the state of the algorithm immediately after serving $\sigma$, where the notion of ``state'' will be implicit in the definition 
of the online problem. For example, in the $k$-server problem, this would be the position of the servers in the metric space.

\section{A sufficient condition for optimality of greedy-like algorithms}
\label{sec:optimality}
In this section, we show how the techniques of~\cite{AS:bijective} can be applied in a variety of online problems,
so as to prove that certain greedy algorithms are bijectively optimal among all online algorithms. 
To this end,
we first need a criterion that establishes, in a formal manner, the greedy characteristic. 
More precisely, consider an online algorithm that must serve request $\sigma_i$ after having served the sequence $\sigma[1,i-1]$. 
We say that an algorithm is {\em greedy-like} if it serves each request $\sigma_i$ in a way that minimizes the cost objective, 
assuming this request $\sigma_i$ is the final request. 
This definition is motivated by a similar characterization
of ``greediness'' in the context of {\em priority algorithms} as defined by Borodin et al.~\cite{DBLP:journals/algorithmica/BorodinNR03}. 

Naturally, not all greedy-like algorithms are expected to be bijectively optimal. For instance, for the classic 
paging problem, all lazy algorithms are greedy-like, however, as shown in~\cite{ADLO07:paging,AS:bijective}, assuming
locality of reference, only LRU is optimal. Therefore, one needs to chose a ``good'' algorithm in this class of
greedy-like algorithms. Let $G$ denote such a greedy-like algorithm. 
We say that $A$ is $G$-like on $\sigma_i$ if, after serving $\sigma[1,i-1]$, $A$ serves request $\sigma_i$ as $G$ would. 
Note that the $G$-like notion cannot be characterized in general for all online problems. It needs to be defined for specific problems
and specific greedy-like algorithms (e.g., the definition of an ``LRU-like'' algorithm in~\cite{AS:bijective}).
Given sequences over ${\cal I}_n$, Algorithm $A$ is  {\em $G$-like on the suffix $[j,n]$}
if $A$ serves all requests $\sigma_j \ldots \sigma_n$ in a $G$-like manner. 
The following definition formally describes algorithms for which the $G$-like decision can be moved
``one step earlier'' without affecting performance with respect to bijective analysis. 
\begin{definition}
Suppose that $A$ is an online algorithm over sequences in ${\cal I}_n$ such that $A$ is $G$-like on the suffix $[j+1,n]$. 
We say that $A$ is {\em $G$-like extendable on $j$} if there exists  
a bijection $\pi:{\cal I}_n \rightarrow {\cal I}_n$ and 
an online algorithm $B$ with the following properties. 
\vspace{-0.5em}
\begin{packed_item}
\item For every $\sigma \in {\cal I}_n$, $B$ makes the same decisions as $A$ on the first $j-1$ requests of $\sigma$.
\item For every $\sigma \in {\cal I}_n$, $B$ is $G$-like on $\sigma_{j}$.
\item $\pi(\sigma)[1,j]=\sigma[1,j]$ and $B(\pi(\sigma)) \leq A(\sigma)$.
\end{packed_item}
\label{property:locally.optimal}
\end{definition}
\vspace{-0.75em}
Informally, $A$ is $G$-like extendable if it can be transformed to another algorithm $B$
that is ``closer'' to the statement of a $G$-like algorithm and is not inferior to $A$ according to bijective analysis.
We note that Definition~\ref{property:locally.optimal} is motivated by the statement of Lemma 3.4 in~\cite{AS:bijective}; 
in contrast to the latter, it applies not only to paging (and the LRU algorithm) but to all online problems for which 
there is a well-defined $G$ algorithm with the above properties (and in particular, is greedy-like). 
This definition is instrumental in proving the optimality of $G$;
in particular, we obtain the following theorem. The proof follows along the lines of the proof of Lemma 3.7 and Theorem 3.8 in~\cite{AS:bijective}.

\begin{theorem}
If every online algorithm $A$ (over requests in ${\cal I}_n$) that is $G$-like on the suffix $[j+1,n]$
is also $G$-like extendable on $j$, for all $1\leq j\leq n$, then $G$ is optimal.
\label{thm:optimality}
\end{theorem}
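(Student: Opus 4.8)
The plan is to prove optimality by a backward induction on the position $j$, using the extendability hypothesis to transform an arbitrary online algorithm, step by step, into $G$ while never increasing cost in the sense of bijective analysis. Fix an arbitrary online algorithm $B$; the goal is to exhibit a bijection $\pi^{*}:{\cal I}_n\to{\cal I}_n$ with $G(\sigma)\le B(\pi^{*}(\sigma))$ for every $\sigma\in{\cal I}_n$. I would build a finite chain of algorithms $A_{n+1},A_n,\dots,A_1$ maintaining the invariant that $A_j$ is $G$-like on the suffix $[j,n]$. The base of the chain is $A_{n+1}:=B$, which is $G$-like on the empty suffix $[n+1,n]$ trivially, so the invariant holds vacuously.

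For the inductive step, suppose $A_{j+1}$ is $G$-like on the suffix $[j+1,n]$. The hypothesis of the theorem then guarantees that $A_{j+1}$ is $G$-like extendable on $j$, so Definition~\ref{property:locally.optimal} supplies a bijection $\pi_j$ and an online algorithm, which I name $A_j$, such that $A_j$ agrees with $A_{j+1}$ on the first $j-1$ requests, is $G$-like on $\sigma_j$, and satisfies $\pi_j(\sigma)[1,j]=\sigma[1,j]$ together with the key cost inequality $A_j(\pi_j(\sigma))\le A_{j+1}(\sigma)$ for all $\sigma$. Since the $G$-like rule on the remaining positions $[j+1,n]$ is applied regardless of history, $A_j$ can be taken to be $G$-like on $[j+1,n]$ as well; combined with being $G$-like on $\sigma_j$, this re-establishes the invariant that $A_j$ is $G$-like on $[j,n]$. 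Iterating the step for $j=n,n-1,\dots,1$ terminates with $A_1$, which is $G$-like on all of $[1,n]$. As $G$ is precisely the algorithm that serves every request by the $G$-like rule from the common initial configuration, $A_1$ follows the same trajectory and incurs the same cost as $G$ on every sequence, so $A_1=G$.

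It then remains to compose the pieces. Chaining the per-step inequalities and substituting $\sigma=\pi_1^{-1}(\tau)$, then $\pi_2^{-1}$, and so on, I obtain
\[
G(\tau)=A_1(\tau)\le A_2(\pi_1^{-1}(\tau))\le\cdots\le A_{n+1}\bigl(\pi_n^{-1}\cdots\pi_1^{-1}(\tau)\bigr)=B\bigl(\pi^{*}(\tau)\bigr),
\]
where $\pi^{*}:=\pi_n^{-1}\circ\cdots\circ\pi_1^{-1}$ is a bijection as a composition of bijections. Thus $G\preceq_b B$, and since $B$ was arbitrary, $G$ is bijectively optimal.

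I expect the delicate point to be the bookkeeping of the invariant across steps: Definition~\ref{property:locally.optimal} explicitly pins down $A_j$ only on the first $j$ requests (agreement with $A_{j+1}$ up to $j-1$, and the $G$-like choice at $j$), so one must argue that $A_j$ remains $G$-like on the trailing positions $[j+1,n]$ for the induction to proceed. This is immediate once ``$G$-like on a suffix'' is read as a service rule rather than as a fixed trajectory, but it should be stated carefully. The second point requiring care is the direction of the composition: each application contributes a $\pi_j$ that maps sequences forward for $A_{j+1}$, whereas the telescoping inequality for $G$ requires the inverses, so $\pi^{*}$ is the composition of the $\pi_j^{-1}$ in the appropriate order. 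I emphasize that the genuine, problem-specific difficulty is hidden entirely in verifying the extendability hypothesis itself, which is assumed here and must be discharged separately for each application (weighted paging, reordering buffer management, and $2$-server on the circle).
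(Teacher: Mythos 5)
There is a genuine gap at exactly the point you flag as delicate, and your proposed resolution does not close it. Definition~\ref{property:locally.optimal} delivers an algorithm $B$ (your $A_j$) that agrees with $A_{j+1}$ on the first $j-1$ requests and is $G$-like on $\sigma_j$, but it places \emph{no constraint} on what $B$ does on positions $[j+1,n]$; the cost guarantee $A_j(\pi_j(\sigma))\le A_{j+1}(\sigma)$ is a guarantee about that specific, possibly non-$G$-like algorithm. (Indeed, in the paper's applications the constructed $B$ is deliberately not $G$-like after position $j$; e.g., in the proof of Theorem~\ref{thm:circle.2-server} it makes an extra, non-greedy move of a server right after serving $\sigma_j$.) Your claim that ``$A_j$ can be taken to be $G$-like on $[j+1,n]$ as well'' because the rule can be applied from any history is therefore unjustified: you can certainly \emph{define} such a modified algorithm, but overwriting $B$'s behaviour on the suffix with the $G$-like rule may destroy the inequality $A_j(\pi_j(\sigma))\le A_{j+1}(\sigma)$. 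Showing that this overwrite costs nothing is not a bookkeeping matter --- it is essentially the statement being proved.

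The paper closes this gap by strengthening the induction hypothesis: it is quantified over \emph{all} algorithms (for every algorithm $\alg$ there exist $C_i$ and $\mu_i$ such that $C_i$ agrees with $\alg$ on $[1,i-1]$, is $G$-like on $[i,n]$, and satisfies $C_i(\sigma)\le\alg(\mu_i(\sigma))$), and the inductive step invokes this hypothesis \emph{twice} --- once on $\alg$ to obtain $C_{i+1}$, and a second time on the algorithm $B_i$ produced by Definition~\ref{property:locally.optimal}, precisely in order to repair $B_i$ into an algorithm that is $G$-like on the trailing suffix without increasing cost in the bijective sense. Your single-chain construction omits this second invocation. The fix is to restate your invariant in the paper's universally quantified form (or, equivalently, to interleave a second backward induction that ``re-greedifies'' the suffix of each $A_j$); the rest of your argument --- the base case on the empty suffix, the direction of the composed bijection $\pi^{*}=\pi_n^{-1}\circ\cdots\circ\pi_1^{-1}$, and the identification $A_1=G$ --- is sound and matches the paper's.
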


\begin{proof}
  Consider an arbitrary algorithm $\alg$ and any request sequence $\sigma \in \In$. We will show that there exists a bijection $\pi: \In \to \In$ such that $G(\sigma) \le \alg(\pi(\sigma))$.

Fix an arbitrary $\sigma \in \In$, we will show by reverse induction on the requests that the theorem holds. More precisely, let $\Cls_i$ be the set of all algorithms that are $G$-like on $\sigma[i,n]$ and serve $\sigma[1,i-1]$ exactly as $\alg$. By reverse induction on the indexes of $\sigma$, we show that, for every algorithm $\alg$, there exists an algorithm $C_i \in \Cls_i$ and a bijection $\mu_i$ such that $C_i(\sigma) \le \alg(\mu_i(\sigma))$ and $\mu_i(\sigma)[1,i] = \sigma[1,i]$.

For the last request, define $\mu_n$ to be the identity function, and define $C_n$ to serve $\sigma[1,n-1]$ exactly as $\alg$ and to serve $\sigma_n$ in a $G$-like manner. The claim follows immediately from the fact that $G$ is greedy-like.

Consider the inductive step from $i+1$ to $i$. From the induction hypothesis, there exists an algorithm $C_{i+1} \in \Cls_{i+1}$ such that 
\begin{equation}\label{eq:Cind}
  C_{i+1}(\sigma) \le \alg(\mu_{i+1}(\sigma)) ~.
\end{equation}

By the theorem statement, $C_{i+1}$ is G-like extendable on $i$ and, by Definition~\ref{property:locally.optimal}, we have an algorithm $B_i$ and a bijection $\pi_i$ such that  
\begin{equation}\label{eq:Bi}
  B_i(\sigma) \le C_{i+1}(\pi_i^{-1}(\sigma)) \le \alg(\mu_{i+1}(\pi_i^{-1}(\sigma))) ~, 
\end{equation}
where the last inequality follows from \eqref{eq:Cind}. Note that $\pi_i^{-1}(\sigma)[1,i] = \pi_i(\sigma)[1,i] = \sigma[1,i]$

By applying the induction hypothesis on algorithm $B_i$, there exist an algorithm $C_{i} \in \Cls_{i+1}$ and a bijection $\mu'_{i+1}$ such that 
\begin{equation*}
  C_{i}(\sigma) \le B_i(\mu'_{i+1}(\sigma)) \le \alg(\mu_{i+1}(\pi^{-1}_i(\mu'_{i+1}(\sigma)))) ~,
\end{equation*}
where the last inequality follows from \eqref{eq:Bi}. 

Since $C_i \in \Cls_{i+1}$, $C_i$ is $G$-like on $\sigma[i+1,n]$. Moreover, as $C_i \in \Cls_{i+1}$ and is based on $B_i$, $C_i$ is $G$-like on $\sigma[i]$ as it serves $\sigma[1,i]$ exactly as $B_i$ and, by Definition~\ref{property:locally.optimal}, $B_i$ is $G$-like on $\sigma[i]$. It follows then that $C_i$ serves $\sigma[1,i-1]$ exactly as $\alg$ and is $G$-like on $\sigma[i,n]$. Hence, $C_i \in \Cls_i$. Define $\mu_i := \mu_{i+1} \circ \pi^{-1}_1 \circ \mu'_{i+1}$. Note that $\mu_i(\sigma)[1,i] = \mu_{i+1}(\pi^{-1}_i(\mu'_{i+1}(\sigma[1,i]))) = \sigma[1,i]$ and the inductive step follows. 

After the induction, there exists an algorithm $C_1 \in \Cls_1$. Algorithm $C_1$ is $G$-like on $\sigma[1,n]$ and, therefore, 
$G(\sigma) = C_1(\sigma) \le \alg(\pi(\sigma))$, where $\pi := \mu_1$. 
\end{proof}

We will demonstrate the applicability of this framework by showing optimality of greedy-like online algorithms for three 
well-known online problems: the $2$-server problem on the circle, the weighted paging problem and the reordering buffer management problem.

\subsection{The 2-server problem on the continuous circle}
We begin with the 2-server problem on the continuous circle. Here the candidate algorithm $G$ is the 
obvious greedy algorithm (with an arbitrary tie-breaking rule) that serves a request by moving the server closer to the request, and the $G$-like notion is obviously well-defined. 
\begin{theorem}
\gdy \ is optimal for 2-server on the circle.
\label{thm:circle.2-server}
\end{theorem}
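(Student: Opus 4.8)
The plan is to invoke Theorem~\ref{thm:optimality} directly: it suffices to verify that \gdy\ plays the role of the greedy-like algorithm $G$, and that every online algorithm $A$ which is $G$-like on a suffix $[j+1,n]$ is also $G$-like extendable on $j$. The candidate $G$ is the obvious greedy rule (serve each request with the nearer of the two servers), which is greedy-like by definition since moving the nearer server minimizes the cost incurred on a request treated as final. So the entire content of the proof is the construction, for a given such $A$, of a companion algorithm $B$ and a bijection $\pi$ satisfying the three bullets of Definition~\ref{property:locally.optimal}: $B$ agrees with $A$ on the first $j-1$ requests, $B$ is $G$-like (greedy) on $\sigma_j$, and $\pi$ fixes the prefix $\sigma[1,j]$ while guaranteeing $B(\pi(\sigma)) \le A(\sigma)$.

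First I would fix a sequence $\sigma$ together with the common configuration that $A$ and the intended $B$ reach after serving $\sigma[1,j-1]$; call the two server positions $s_1,s_2$ and let $x=\sigma_j$ be the $j$-th request. If $A$ already serves $x$ greedily, we take $B=A$ and $\pi$ the identity on the relevant fibre and there is nothing to do. The interesting case is when $A$ moves the \emph{farther} server to $x$ while $B$ moves the nearer one; after this single request the two algorithms occupy configurations that differ only in which of the two servers sits at $x$ and where the other server rests. The key observation on the circle is a local exchange: because $A$ and $B$ are thereafter both $G$-like (greedy) on the suffix $[j+1,n]$, the evolution of the two servers depends only on the \emph{pair} of occupied points and on the arrival points of future requests, and the two configurations can be made to ``mirror'' each other through a measure-preserving relabelling of the circle. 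Concretely, I would construct $\pi$ to act as the identity on $\sigma[1,j]$ and, on the suffix, as an interval exchange transformation that reflects or rotates the arc so as to swap the roles of the two server positions reached by $A$ versus $B$. Under this relabelling, greedy on the $B$-configuration facing $\pi(\sigma)$ incurs exactly the cost greedy (equivalently $A$, on the suffix where $A$ is greedy) incurs on the $A$-configuration facing $\sigma$, so the suffix costs match, while the single request $\sigma_j$ is cheaper for $B$ (nearer server) than for $A$ (farther server). Summing, $B(\pi(\sigma)) \le A(\sigma)$.

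The main obstacle, and where I would spend the most care, is proving that the suffix costs are genuinely equal under the constructed relabelling, i.e.\ that the local swap at step $j$ can be ``propagated'' consistently through all later greedy decisions by a single bijection $\pi$ that is simultaneously valid for all $\sigma$ sharing the prefix $\sigma[1,j]$. On a circle this is delicate because greedy's tie-breaking and the wrap-around geometry mean the reflection that works for one future request placement need not be the same transformation globally; one must argue that a fixed interval exchange transformation of the circle (reflection about the perpendicular bisector arc of the two server points, say) commutes with greedy service for every future request, so that the whole suffix is transported isometrically. I would establish this by an invariance claim: greedy service is equivariant under any isometry of the circle, hence if $\phi$ is the chosen measure-preserving reflection then the greedy trajectory on $\phi(\text{configuration})$ facing $\phi(\text{requests})$ is exactly the $\phi$-image of the greedy trajectory on the original, giving identical total distance. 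Verifying that the post-step-$j$ configurations of $A$ and $B$ are indeed related by such a single global isometry—using that on the circle the nearer and farther servers are symmetric about the bisector—is the crux, and once it holds the three bullets of Definition~\ref{property:locally.optimal} follow and Theorem~\ref{thm:optimality} yields the optimality of \gdy.
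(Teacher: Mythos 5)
Your overall framework is the paper's: invoke Theorem~\ref{thm:optimality}, reduce to the single request $\sigma_j$ on which $A$ deviates from greedy, and build the bijection on the suffix out of a measure-preserving isometry of the circle under which greedy service is equivariant. The gap is in your crux claim that the post-step-$j$ configurations of $A$ and $B$ are related by a single global isometry. They are not. After $\sigma_j$, $A$'s servers sit at $\{a_1,\sigma_j\}$, whose gap is $d(a_1,\sigma_j)$, while your (lazy, greedy) $B$'s servers sit at $\{\sigma_j,a_2\}$, whose gap is $d(a_2,\sigma_j)$; in the only interesting case these two distances differ by $D=d(a_2,\sigma_j)-d(a_1,\sigma_j)>0$, and no isometry of the circle can map a pair at one gap to a pair at a different gap. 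In particular the reflection about the bisector of $a_1,a_2$ sends $A$'s configuration to $\{a_2,\sigma_j'\}$ where $\sigma_j'$ is the mirror image of $\sigma_j$, which equals $B$'s configuration only when $\sigma_j$ lies on the bisector, i.e.\ precisely when there was no deviation to begin with. Consequently your assertion that ``the suffix costs match'' does not follow, and comparing greedy started from two non-isometric configurations is itself a nontrivial bijective comparison (a more spread-out pair is not pointwise better: a request at $a_2$ is free for $B$ but not for $A$), essentially another instance of the problem you are trying to solve.

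The paper closes this gap by making $B$ non-lazy at step $j$: after serving $\sigma_j$ with the nearer server, $B$ additionally slides its idle server $b_2$ by the distance $D$, which restores the inter-server gap to $d(a_1,\sigma_j)$ and makes $B$'s configuration exactly a rotation (by $d(a_1,\sigma_j)$) of $A$'s. The bijection $\pi$ then shifts the suffix requests by that same rotation, and by equivariance every subsequent request costs the two algorithms the same. Note the bookkeeping this forces: $B$ pays $d(a_1,\sigma_j)+D=d(a_2,\sigma_j)$ on step $j$, so one obtains $B(\pi(\sigma))=A(\sigma)$ with equality, not the strict saving on $\sigma_j$ that your argument relies on. If you want to keep $B$ lazy, you would need a genuinely different (non-isometric) bijection together with an argument that greedy from the wider configuration is bijectively no worse than greedy from the narrower one, which your proposal does not supply.
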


\begin{proof}
Let $A$ denote any online algorithm that is $G$-like on the suffix $[j+1,n]$, for some $j\in[1,n]$. From Theorem~\ref{thm:optimality},
it suffices to prove that $A$ is $G$-like extendable on $j$. We will show the existence of an appropriate online algorithm $B$ and 
a bijection $\pi$, according to Definition~\ref{property:locally.optimal}. In particular, since the definition requires that 
$\pi(\sigma)[1,j]=\sigma[1,j]$, and that $B$ makes the same decisions as $A$ on $\sigma[1,j-1]$, we only need to define $\pi(\sigma)[j+1,n]$,
as well as the decisions of $B$ while serving the latter sequence of requests. 

Consider the request $\sigma_{j}$: if $A$ serves this request in a $G$-like manner (i.e., greedily), 
then the lemma holds trivially. Otherwise, note that after serving 
$\sigma[1,j-1]$ and $\pi(\sigma)[1,j-1]$, respectively, $A$ and $B$ have the same configuration. Namely, if $a_1,a_2$ and $b_1,b_2$ 
denote the servers for the two algorithms at this configuration, we have that $a_1 \equiv b_1$
and $a_2 \equiv b_2$. Since $A$ does not serve $\sigma_j$  greedily, we can assume, without loss of generality, that 
$d(a_2,\sigma_{j}) \geq d(a_1,\sigma_{j})$ and that~$A$ serves the request using~$a_2$ (see Figure~\ref{fig:circle.optimality} for an illustration). 
Let $D=d(a_2,\sigma_{j})-d(a_1,\sigma_{j})$, and let $\overline{\sigma}[j+1,n]$ denote the sequence which is derived from $\sigma[j+1,n]$
by {\em shifting} each request by $d(a_1,\sigma_j)$ in the direction opposite to the move of $a_2$ (in the example of Figure~\ref{fig:circle.optimality}, 
this is done clockwise). We then define the mapping $\pi(\sigma)$ as
$\pi(\sigma)=\sigma[1,j]\cdot \overline{\sigma}[j+1,n]$; it is straightforward to show that this mapping is bijective in ${\cal I}_n$.

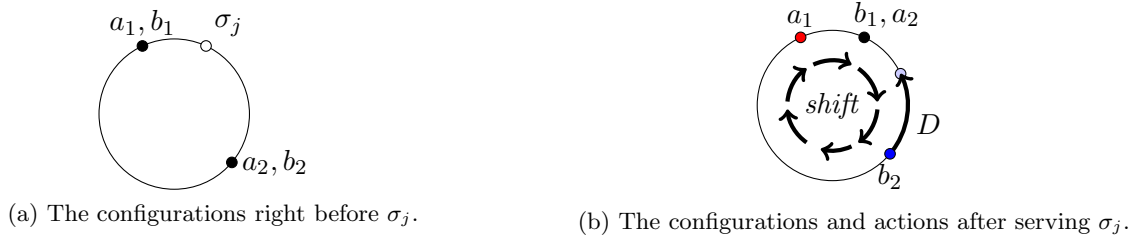
\begin{figure}[htb!]
   \subfloat[The configurations right before $\sigma_j$.]
   {
   \begin{minipage}{0.4\textwidth}
    \centering
    \begin{tikzpicture}[scale=1]
      \draw(0,0) circle (1cm);
      \draw[fill=black] (115:1cm) node[above](a1) {$a_1,b_1$} circle (2pt);
      \draw[fill=black] (320:1cm) node[right](a2) {$a_2,b_2$} circle (2pt);
      \draw[fill=white] (65:1cm) node[above,xshift=0.3cm](ri) {$\sigma_j$} circle (2pt);
    \end{tikzpicture}
    \end{minipage}
    \label{fig:beforeServCircle}
    } \qquad
  \subfloat[The configurations and actions after serving $\sigma_j$.]
  {
    \begin{minipage}{0.5\textwidth}
    \centering
    \begin{tikzpicture}[scale=1]
      \draw(0,0) circle (1cm);
      \draw[fill=red] (115:1cm) node[above](a1) {$a_1$} circle (2pt);
      \draw[fill=black] (65:1cm) node[above,xshift=0.3cm](a2) {$b_1,a_2$} circle (2pt);
      \draw[fill=blue!20] (25:1cm) circle (2pt);
      \draw[<-,ultra thick] (25:1cm) arc[radius=1cm, start angle = 25, end angle = -40] node {}; \node at (1.28,-0.25) {$D$};
      \draw[fill=blue] (320:1cm) node[below](g2) {$b_2$} circle (2pt);
      \foreach \a/\b in {55/5,115/65,175/125,235/185,295/255,355/305}
           \draw[->,ultra thick] (\a:0.6cm) arc[radius=0.6cm, start angle = \a, end angle = \b]; 
       \node at (0,0) {\emph{shift}};
    \end{tikzpicture}
    \end{minipage}
    \label{fig:afterServCircle} 
  }  
  \caption{An illustration of the bijection that \emph{shifts} the requests around the circle by a 
  distance of $d(a_1,\sigma_j)$ and the first action of $A$ after the configurations of $A$ and $B$ diverge on request $\sigma_j$}
  \label{fig:circle.optimality}
\end{figure}

Next, we define the actions of algorithm $B$ over the sequence $\pi(\sigma)[j+1,n]$. In particular, note that 
$B$ serves the request $\sigma_{j}=\pi(\sigma_j)$ greedily; moreover we require that $B$ subsequently moves the server
$b_2$ by a distance equal to $D$ (in the example of Figure~\ref{fig:circle.optimality}, this is done counter-clockwise). 
It can be shown 
by induction on $l$, that, for all $l\in[j+1,n]$, if servers $a_1,a_2$ are at distance $x$ right before $A$ serves $\sigma_l$,
then servers $b_1,b_2$ are at distance $x$ before $B$ serves $\pi(\sigma_l)$; that is, $B$ can serve the request $\pi(\sigma_l)$ by moving one of its
servers that is in the same position, relative to the shift, as the server of $A$ that serves $\sigma_l$, and thus the two costs are identical.  
In conclusion, the cost of $A$ on $\sigma[j+1,n]$ is the same as the cost of $B$ on $\pi(\sigma[j+1,n])$, which further implies that 
$A(\sigma)=B(\pi(\sigma))$, which concludes the proof. 
\end{proof}

\subsection{The weighted paging problem}

Next, we consider the {\em weighted paging problem}. This is a generalization of the standard (uniform) paging problem, in which each page 
$p$ is associated with an eviction cost $c_p$, and has
generated an impressive body of work 
from the point of view of competitive analysis (see e.g.,~\cite{DBLP:journals/siamdm/ChrobakKPV91, DBLP:conf/soda/Young91, 
DBLP:journals/jacm/BansalBN12} and references therein).
It is well known that the weighted paging problem for a cache of size
$k$ is equivalent to the $k$-server problem in a {\em discrete star graph}, assuming there are no requests to the center node of the star. 
More precisely, the star has as many edges as pages, and the weight of each edge
is equal to half the eviction cost of the corresponding page; last, requests may appear on any leaf of the star. 

Consider the simple greedy algorithm $G$ that, 
upon a fault, evicts from the cache a page of smallest cost; clearly, this algorithm is greedy-like. 
The proof of the next theorem relies on Theorem~\ref{thm:optimality} (as in the case of the proof of Theorem~\ref{thm:circle.2-server}).
However, unlike Theorem~\ref{thm:circle.2-server} (and unlike the proof of the bijective optimality of 
greedy/lazy algorithms for unweighted paging in~\cite{ADLO07:paging}), the proof is technically more involved, due to the asymmetry of the 
cost requests (which complicates the argument for the $G$-like extendability of all possible online algorithms).
\begin{theorem}
\gdy \ is bijectively optimal for weighted paging.
\label{thm:weighted.paging}
\end{theorem}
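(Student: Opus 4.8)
The plan is to invoke the general framework of Theorem~\ref{thm:optimality}. Using the standard equivalence between weighted paging with a cache of size $k$ and the $k$-server problem on a discrete star whose edge to page $p$ has length $c_p/2$ (with no requests to the center), it suffices to check two things about the greedy algorithm $G=\gdy$ that evicts a cheapest page on a fault: that it is greedy-like, and that every online algorithm $A$ which is $G$-like on the suffix $[j+1,n]$ is $G$-like extendable on $j$. Greediness is immediate, since on a fault at $q$ the serving cost is $(c_p+c_q)/2$ where $p$ is the evicted page, and this is minimized by evicting a cheapest page. Hence the entire content of the proof is the construction, for each such $A$ and each $j$, of an algorithm $B$ and a bijection $\pi$ satisfying Definition~\ref{property:locally.optimal}.

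For the construction I would let $B$ imitate $A$ on the first $j-1$ requests and serve $\sigma_j$ greedily, and let $\pi$ fix the prefix $\sigma[1,j]$ while transforming the suffix request-by-request. The key enabling fact is that both algorithms serve the suffix greedily ($A$ by hypothesis, $B$ by construction), so their runs can be coupled. If $A$ already serves $\sigma_j$ greedily, then $\pi=\mathrm{id}$ works; otherwise $A$ evicts some page $p_A$ while $G$ evicts a cheapest page $p_B$ with $c_{p_B}\le c_{p_A}$. Consequently, right after step $j$ the two configurations differ in exactly one page — $A$ holds $p_B$, while $B$ holds $p_A$ — and $B$ is ahead in cost by $(c_{p_A}-c_{p_B})/2$. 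I would then define $\pi$ on the suffix so that, at each step, $B$ receives $\sigma_l$ with the two current distinguishing pages transposed; since this transposition is determined by the already-processed prefix and is an involution on request values, the resulting suffix map is a bijection fixing $\sigma[1,j]$.

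The heart of the argument is a coupling invariant maintained across the suffix: the two configurations differ in at most one page — $A$ holding $a$, $B$ holding $b$ — and the accumulated cost slack $S$ (cost of $A$ minus cost of $B$) satisfies $S\ge\Phi:=(c_b-c_a)^{+}/2$. Feeding $B$ the request with $a$ and $b$ transposed keeps the two runs synchronized, so they fault on exactly the same steps and the configurations stay one page apart; the state can therefore change only on a simultaneous fault, which I would analyze by comparing $c_a$ and $c_b$ with the cost of a cheapest page of the common cache. Since $\Phi\ge0$ and $S=\Phi$ right after step $j$, maintaining the invariant gives $S\ge0$ at the end, i.e.\ $B(\pi(\sigma))\le A(\sigma)$, the last requirement of Definition~\ref{property:locally.optimal}. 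I expect this case analysis to be the main obstacle, and the reason the proof is more involved than the symmetric arguments for the circle and for unweighted paging: because eviction costs are page-dependent, greedy may evict one of the distinguishing pages, which both reassigns the pair $(a,b)$ and can flip which algorithm holds the costlier page (driving $\Phi$ to $0$). The potential $\Phi$ must be calibrated so that each decrease is exactly covered by the simultaneous gain in $S$, and the mirrored configuration in which $A$ holds the costlier page must be handled symmetrically.
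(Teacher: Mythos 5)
Your proposal is correct and shares its skeleton with the paper's proof: both reduce weighted paging to the $k$-server problem on a discrete star, invoke Theorem~\ref{thm:optimality}, fix the prefix $\sigma[1,j]$, build the suffix of the bijection request-by-request as a prefix-determined transposition of the two ``distinguishing'' nodes, and close with a coupling invariant bounding the accumulated cost difference between $A$ and $B$. The one substantive divergence is the choice of $B$ on the suffix $[j+1,n]$. You let $B$ run greedily there; as a consequence the symmetric difference of the two caches can drift onto new pages (greedy may evict a distinguishing page against a common one), which is what forces your adaptive potential $\Phi=(c_b-c_a)^{+}/2$ and the case analysis comparing $c_a$ and $c_b$ to the cheapest common page. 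The paper instead defines $B$ to \emph{mirror} $A$'s server moves under the fixed transposition of the two original nodes $v_1,v_2$ --- Definition~\ref{property:locally.optimal} only requires $B$ to be $G$-like on $\sigma_j$, not beyond --- so the symmetric difference stays pinned to $\{v_1,v_2\}$ until the configurations merge (Lemma~\ref{lemma:invariant.paging}) and the slack is at least the fixed constant $\delta=c_1-c_2$ throughout (Lemma~\ref{lemma:cost.paging}). Your version buys a cleaner description of $B$ (it is simply \gdy\ from step $j$ onward) at the price of a drifting pair and a genuine potential-function argument; the paper's buys a rigid, finite case analysis at the price of a more ad hoc $B$. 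Two small points of care in your route: the per-step transposition must be recoverable from the already-revealed prefix alone so that the concatenated map is a bijection on $\In$ (it is, since both algorithms are deterministic); and the phrase ``each decrease in $\Phi$ is covered by a gain in $S$'' should read ``the change in $S$ is at least the change in $\Phi$'', since in the step where both algorithms evict their own distinguishing page $S$ itself decreases by exactly $\Phi$.
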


We give the proof in the framework of the $k$-server problem on the discrete star (which as explained, is an equivalent formulation of the
weighted paging problem).
Let $A$ denote any online algorithm that is $G$-like on the suffix $[j+1,n]$ for some $j\in[1,n]$. From Theorem~\ref{thm:optimality},
it suffices to prove that $A$ is $G$-like extendable on $j$. We will show the existence of an appropriate online algorithm $B$ and 
a bijection $\pi$, according to Definition~\ref{property:locally.optimal}. In particular, since the definition requires that 
$\pi(\sigma)[1,j]=\sigma[1,j]$, and that $B$ makes the same decisions as $A$ on $\sigma[1,j-1]$, we only need to define $\pi(\sigma)[j+1,n]$,
as well as the decisions of $B$ while serving the latter sequence of requests. 

Consider the request $\sigma_{j}$: if $A$ serves this request greedily, then the lemma holds trivially. Otherwise, note that after serving 
$\sigma[1,j-1]$ and $\pi(\sigma)[1,j-1]$, respectively, $A$ and $B$ have the same configuration.
For concreteness, let  $a_1, \ldots ,a_k$, and $b_1, \ldots ,b_k$ denote the configurations of $A$ and $B$ after serving 
$\sigma[1,j-1]$, and $\pi(\sigma)[1,j-1]$, respectively, with $a_i \equiv b_i$, for all $i\in[1,k]$. Moreover, let 
$v_i$ denote the nodes on which $a_i$ and $b_i$ lie, right after $A$ and $B$ have served the last request of the sequence 
$\sigma[1,j-1] \equiv \pi(\sigma)[1,j-1]$, and 
let $c_i$ denote the cost of the edge to which $v_i$ is incident.
We can assume, without loss of generality, that $A$ serves request $\sigma_{j}$ by moving the server from position $a_1$, whereas $B$ serves 
request $\pi(\sigma_j)\equiv \sigma_j$ by moving server $b_2$ (hence $c_1 \geq c_2$). We emphasize that indices ``1'' and ``2'' 
will be used throughout the proof to identify
these specific servers $(a_1,b_1,a_2,b_2)$ as well as the nodes $v_1,v_2$, defined as above. 

Given a request $r$ to some node of the star (other than the center), we define $\overline{r}$ as follows:
\[
\overline{r}=
\begin{cases}
a_1, &\quad \text{if} \ r=a_2 \\
a_2, &\quad \text{if} \ r=a_1 \\
r, & \quad \text{otherwise}.
\end{cases}
\]
As a next step, we need to define an appropriate $\pi(\sigma)$ as well as the actions of the algorithm $B$ (relative to the decisions on $A$ on $\sigma$).
We already stipulated that $\pi(\sigma)[1,j]=\sigma[1,j]$, and we have also determined the decisions of $B$ on the first $j$ requests in 
$\pi(\sigma)[1,j]=\sigma[1,j]$. We will next define, in an inductive manner,  both the bijection, 
as well as the decisions of $B$, for all sequences in $\pi(\sigma)[j+1,n]$. For all $\sigma_l \geq j+1$, define $\pi(\sigma_l)$ as follows:
\[
\pi(\sigma_l)=
\begin{cases}
\sigma_l, &\text{$A$ and $B$ have the same configuration after serving $\sigma[1,l-1]$}\\&  \text{and $\pi(\sigma)[1,l-1]$, respectively}, \\
\overline{\sigma}_l, & \text{otherwise}.
\end{cases}
\]
The mapping $\pi(\sigma)$ is then defined, in the natural way, as $\pi(\sigma_1) \ldots \pi(\sigma_n)$. It is straightforward to verify that
this mapping is indeed bijective.

We will next inductively ($l\geq j+1$) define how algorithm $B$ serves request $\pi(\sigma_l)$. 
We first introduce some useful notation. Suppose that after serving sequences $\sigma[1,l]$ and $\pi(\sigma)[1,l]$, respectively, 
$A$ and $B$ have servers at the same node $x$. If $a_q$ denotes the server of $A$ that is located on $x$, then we define $b_{q'}$ to be
$B$'s server that is located on $x$.

We distinguish the following cases, in order to properly define $B$.
\bigskip
\begin{itemize}
\item If $A$ and $B$ have identical configurations right before serving $\sigma_l$ and $\pi(\sigma_l)$, respectively, then both $A$ and $B$
serve their requests identically (and thus are in the same configurations right after serving the corresponding requests).
\item If $A$ and $B$ do not have identical configurations right before serving $\sigma_l$ and $\pi(\sigma_l)$, then we consider the following subcases:
\begin{itemize}
\item {\bf Case 1:} If $\sigma_l\in\{v_1,v_2\}$ and server $a_2$ is at the node of request $\overline{\sigma}_l=\pi(\sigma_l)$, then:
\begin{itemize}
\item {\bf subcase 1a:} If $A$ serves $\sigma_l$ by moving server $a_2$, then $B$ serves $\pi(\sigma_l)$ by moving $b_1$.
\item {\bf subcase 1b:} If $A$ serves $\sigma_l$ by moving server $a_q$ with $q\neq 2$, then $B$ serves $\pi(\sigma_l)$ by moving $b_{q'}$.  
See Figure~\ref{fig:subcase1b} for an illustration.
\end{itemize}
\item {\bf Case 2:} If $\sigma_l\in\{v_1,v_2\}$ and $a_2$ is at the node of request $\sigma_l$ then $B$ serves $\pi(\sigma_l)$ by moving $b_1$. 
\item {\bf Case 3:} If $\sigma_l\notin\{v_1,v_2\}$ then we consider the following subcases:
\begin{itemize}
\item {\bf subcase 3a:} If $A$ serves $\sigma_l$ by moving server $a_q$, with $q\neq 2$, then $B$ moves server $b_{q'}$.
\item {\bf subcase 3b:} If $A$ serves $\sigma_l$ by moving server $a_2$, then $B$ serves $\pi(\sigma_l)$ by moving $b_1$.
See Figure~\ref{fig:subcase3b} for an illustration.
\end{itemize}
\end{itemize}
\end{itemize}
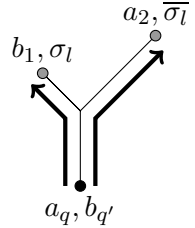
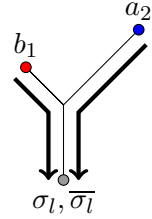
\begin{figure}[htb!]
    \centering
  \subfloat[An illustration of subcase 1b]{ 
     \begin{minipage} {0.4\textwidth}
    \centering
    \begin{tikzpicture}[scale=1]
      \draw (1,1) -- (0,0) -- (-0.5,0.5) -- (0,0) -- (0,-1);
      \draw[fill=black!40] (-0.5,0.5) node[above] {$b_1,\sigma_l$} circle (2pt);
      \draw[fill=black!40] (1,1) node[above] {$a_2,\overline{\sigma_l}$} circle (2pt);
      \draw[fill=black] (0,-1) node[below] {$a_q,b_{q'}$} circle (2pt);
      \begin{scope}[xshift=-0.2cm,yshift=-0.1cm,scale=0.9]
        \draw[->,ultra thick] (0,-1) -- (0,0) -- (-0.5,0.5);
      \end{scope}
      \begin{scope}[xshift=0.2cm,yshift=-0.1cm,scale=0.9]
        \draw[->,ultra thick] (0,-1) -- (0,0) -- (1,1);
      \end{scope}
    \end{tikzpicture}
    \end{minipage}
    \label{fig:subcase1b}
    } 
  \subfloat[An illustration of subcase 3b]{
  \begin{minipage}{0.4\textwidth}
    \centering
    \begin{tikzpicture}[scale=1]
      \draw (1,1) -- (0,0) -- (-0.5,0.5) -- (0,0) -- (0,-1);
      \draw[fill=red] (-0.5,0.5) node[above] {$b_1$} circle (2pt);
      \draw[fill=blue] (1,1) node[above] {$a_2$} circle (2pt);
      \draw[fill=black!40] (0,-1) node[below] {$\sigma_l,\overline{\sigma_l}$} circle (2pt);
      \begin{scope}[xshift=-0.2cm,yshift=-0.1cm,scale=0.9]
        \draw[<-,ultra thick] (0,-1) -- (0,0) -- (-0.5,0.5);
      \end{scope}
      \begin{scope}[xshift=0.2cm,yshift=-0.1cm,scale=0.9]
        \draw[<-,ultra thick] (0,-1) -- (0,0) -- (1,1);
      \end{scope}
    \end{tikzpicture}
    \end{minipage}
    \label{fig:subcase3b}
    }
  \caption{An illustration of subcases 1b and 3a in the statement of algorithm $B$}
  \label{fig:weightedPaging}  
\end{figure}

The following invariant will be instrumental in proving that algorithm $B$ (as shown above) is well-defined, and that $B$ is bijectively no worse 
than $A$.
\begin{lemma} [Invariant]
\begin{enumerate}
\item [(i)] For all requests $\sigma_l$,$\pi(\sigma_l)$, if $A$ and $B$ are not in the same configuration right before serving these requests, respectively,
then either $(a_1,b_2)=(v_1,v_2)$, or $(a_1,b_2)=(v_2,v_1)$. 
\item [(ii)]
Prior to serving $\sigma_l$,$\pi(\sigma_l)$, either $A$ and $B$ (respectively) are in identical configurations, or their configurations
only differ in that $b_1$ is not at a node occupied by a server of $A$, and, likewise, $a_2$ is not at a node occupied by a server of $B$. 
\end{enumerate}
\label{lemma:invariant.paging}
\end{lemma}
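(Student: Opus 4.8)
The plan is to prove both parts of the invariant simultaneously by induction on the request index $l$, treating the two statements as a joint loop invariant that holds immediately before $A$ serves $\sigma_l$ and $B$ serves $\pi(\sigma_l)$, for every $l\ge j+1$. Intuitively, the invariant asserts a \emph{single-server discrepancy}: at every step either $A$ and $B$ have matched up completely, or their configurations agree on $k-1$ servers and differ only through one \emph{unmatched} server on each side, namely $a_2$ for $A$ and $b_1$ for $B$; part (i) then pins these two unmatched servers to the two distinguished nodes $v_1,v_2$ (in one of the two orders). The role of the relabelling $\overline{\cdot}$, which transposes $v_1$ and $v_2$, is precisely to keep the requests aligned so that each algorithm can use its own unmatched server on ``its'' node.

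For the base case $l=j+1$ I would simply read off the configurations produced by the divergent service of $\sigma_j$. Since $A$ does not serve $\sigma_j$ greedily, the request is to a currently unoccupied node, $A$ moves the non-minimum-cost server $a_1$ to $\sigma_j$, and the greedy algorithm $B$ moves the minimum-cost server $b_2$ to $\sigma_j$; this is exactly where $c_1\ge c_2$ enters. All remaining servers are untouched and, by hypothesis, coincided before $\sigma_j$. Hence after the service $A$ occupies $\{\sigma_j,v_2,v_3,\ldots,v_k\}$ and $B$ occupies $\{v_1,\sigma_j,v_3,\ldots,v_k\}$, so the two configurations agree everywhere except that $a_2$ sits on $v_2$ while $v_2$ carries no server of $B$, and $b_1$ sits on $v_1$ while $v_1$ carries no server of $A$. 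This establishes part (ii) and, since the unmatched servers lie on $\{v_1,v_2\}$, also part (i). (Here one uses that $\sigma_j\notin\{v_1,\ldots,v_k\}$ and that the $v_i$ are distinct, which holds for lazy algorithms.)

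For the inductive step I would assume the invariant before $\sigma_l,\pi(\sigma_l)$ and split on the two top-level branches in the definition of $B$. If $A$ and $B$ already coincide, then $\pi(\sigma_l)=\sigma_l$, both algorithms serve identically, and they remain identical, so the invariant trivially persists. If they differ, then $\pi(\sigma_l)=\overline{\sigma_l}$ and I would go through Cases~1--3. In each case the first thing to check is \emph{well-definedness}: that the server $B$ is instructed to move exists and is positioned as claimed. This is where both parts of the invariant are needed: part (ii) guarantees that for $q\ne 2$ the server $a_q$ lies on a node shared with $B$, so the matched server $b_{q'}$ is well-defined (subcases 1b and 3a), while part (i) guarantees that $a_2$ and $b_1$ occupy $\{v_1,v_2\}$, which is what makes the instructions in Cases~1 and~2 meaningful. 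The second thing to check is the structural update: a short node-occupancy computation shows that Case~2, subcase~1a and subcase~3a preserve the single-server discrepancy (with subcase~1a transposing the two unmatched servers between $v_1$ and $v_2$, which is why part (i) lists both orders), whereas in subcase~1b and subcase~3b both algorithms move a server onto a common node, the previously unmatched servers become matched, and the configurations merge into an identical one---after which the first branch applies for all later requests.

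The main obstacle is the case-by-case bookkeeping of node occupancies rather than any single hard idea. The two delicate points are (a) confirming in Cases~1 and~2 that the transposition $\overline{\cdot}$ lines the requests up so that $A$ serves $\sigma_l$ and $B$ serves $\overline{\sigma_l}$ each using its own unmatched server (or each moving a matched server to the correct image node), and (b) verifying that the discrepancy is never enlarged to two servers and that, once the configurations merge, they stay merged. I would also track, in parallel with the occupancy argument, the per-request costs of $A$ and $B$: the inequality $c_1\ge c_2$ from the base case is exactly what makes $B$ no more expensive than $A$ in the merging subcase~1b, and this running cost comparison is what the invariant ultimately feeds into the proof that $B(\pi(\sigma))\le A(\sigma)$ in the sense of Definition~\ref{property:locally.optimal}.
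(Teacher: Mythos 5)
Your proposal is correct and takes essentially the same route as the paper: an induction on $l$ whose inductive step is a case analysis over the branches in the definition of $B$ (with subcases 1a, 2, 3a preserving the single-server discrepancy and subcases 1b, 3b merging the configurations), with the only addition being that you spell out the base case at $l=j+1$, which the paper leaves implicit. Note that you sensibly read part (i) as pinning the \emph{unmatched} servers $a_2$ and $b_1$ to the nodes $v_1,v_2$ -- which is what the dynamics of subcase 1a actually require and what part (ii) refers to -- whereas the lemma's literal wording ``$(a_1,b_2)$'' appears to be an index slip, since after $\sigma_j$ is served both $a_1$ and $b_2$ sit on $\sigma_j\notin\{v_1,v_2\}$.
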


\begin{proof}
The proof is by induction on $l$. Suppose that the invariant holds right before $A$ and $B$ serve requests $\sigma_l$ and $\pi(\sigma_l)$, respectively. 
We will show that the invariant holds after these requests are served by verifying that all cases in the statement of $B$ satisfy the 
invariant. For succinctness, we will use the expression ``before/after the requests'' to refer to 
``immediately before/after serving the corresponding requests''.
\begin{itemize}
\item If $A$ and $B$ have identical configurations before the requests, then so they do after the requests, and the invariant holds trivially. 
\item If $A$ and $B$ do not have identical configurations before the requests, we consider the corresponding cases and subcases of algorithm $B$. 
\begin{itemize}
\item {\bf subcase 1a}. This subcase maintains the invariant because if, say $(a_1,b_2)\in (v_1,v_2)$ prior to the requests, then $(a_1,b_2) \in (v_2,v_1)$
after the requests. Similarly if $(a_1,b_2)\in (v_2,v_1)$ before the requests, then  $(a_1,b_2) \in (v_1,v_2)$ after the requests. 
\item{\bf subcase 1b:} After the requests, $A$ and $B$ are in the same configuration, so the invariant is maintained 
(see also Figure~\ref{fig:weightedPaging}).
\item {\bf Case 2:} This case trivially maintains the invariant since $A$ and $B$ do not move any servers. 
\item {\bf subcase 3a:} Part (i) of the invariant holds trivially. Invariant (ii) is maintained because $A$ and $B$ move servers 
from the same node (say $x$) to the same node (say $y$), with $x,y \notin\{v_1,v_2\}.$
\item {\bf subcase 3b:} After the requests, $A$ and $B$ are in the same configuration, thus the invariant is maintained. 
\end{itemize}
\end{itemize}
\end{proof}

We can now use Lemma~\ref{lemma:invariant.paging} (and, in particular, Part (ii)) in order to show that $B$ is well-defined.
More precisely, Part (ii) of the lemma implies that $B$ makes well-defined decisions in case 2, as well as subcases 1b, 3a and 3b; all other 
cases or subcases are trivially well-defined.  

Having established the consistency of $B$, we proceed to the last step of the proof, namely to show that $A(\sigma)\leq B(\pi(\sigma))$.
Let $\delta$ be equal to $A(\sigma_{j})-B(\pi(\sigma_{j}))$; in words, $\delta$ is the difference in the cost incurred by $A$ and $B$ when serving 
$\sigma_{j}$ and $\pi(\sigma_{j})$, respectively (recalling the notation we introduced early in this proof, this cost is equal to $\delta=c_1-c_2$).
The following lemma shows, informally, that as long as $A$ and $B$ are in different configurations, $A$ has payed at least $\delta$ more than $B$,
and when $A$ and $B$ reach the same configuration, $A$ has paid at least as much as $B$.  
\begin{lemma}
Let $\sigma_l$ and $\pi(\sigma_l)$  denote the current requests that are about to be served by $A$ and $B$, respectively. Then 
\begin{itemize}
\item [(i)] If $A$ and $B$ are in the same configuration prior to serving $\sigma_l$ and $\pi(\sigma_l)$, respectively, 
then $A(\sigma[1,l])\geq B(\pi(\sigma[1,l]))$.
\item [(ii)]  If $A$ and $B$ are not at the same configuration prior to serving $\sigma_l$ and $\pi(\sigma_l)$, respectively, then 
$A(\sigma[1,l])\geq B(\pi(\sigma[1,l]))+\delta$.
\end{itemize}
\label{lemma:cost.paging}
\end{lemma}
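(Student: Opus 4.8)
The plan is to prove both parts together by induction on $l \geq j$, maintaining the running discrepancy $\Phi(l) := A(\sigma[1,l]) - B(\pi(\sigma[1,l]))$. Concretely, I would show that after $A$ and $B$ have served $\sigma_l$ and $\pi(\sigma_l)$ respectively, $\Phi(l) \geq 0$ if their configurations now coincide and $\Phi(l) \geq \delta$ if they still differ; parts (i) and (ii) are exactly these two statements read off at the appropriate step. For the base case $l = j$, the algorithms agree on $\sigma[1,j-1] = \pi(\sigma)[1,j-1]$, so their costs so far are equal; on $\sigma_j = \pi(\sigma_j)$ algorithm $A$ moves the server at $v_1$ while the greedy algorithm $B$ moves the cheaper server at $v_2$, giving $\Phi(j) = c_1 - c_2 = \delta$, and since the configurations now differ this is precisely the bound demanded by case (ii).

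For the inductive step I would analyse the single-step increment $\Phi(l) - \Phi(l-1)$, which equals the cost $A$ pays on $\sigma_l$ minus the cost $B$ pays on $\pi(\sigma_l)$, following the same case distinction (identical configurations; Cases~1a, 1b, 2, 3a, 3b) used to define $B$. The structural input here is Lemma~\ref{lemma:invariant.paging}: part (ii) guarantees that whenever the two configurations differ they agree on $k-1$ nodes and differ only in that $A$'s spare server $a_2$ and $B$'s spare server $b_1$ occupy two distinct nodes, which is what makes each case of $B$ well-defined; part (i) pins these two spare servers down to the leaves $v_1$ and $v_2$, one on each. I would first dispose of the configuration-preserving situations --- identical configurations and Cases~1a, 2, 3a --- by checking that in each of them $A$ and $B$ move their servers at identical cost (both serve a common leaf from a common leaf, or both move between $v_1$ and $v_2$, or both stay put), so $\Phi(l) = \Phi(l-1)$; combined with the induction hypothesis $\Phi(l-1) \geq \delta$ this preserves case (ii).

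The crux is the two configuration-merging cases, 1b and 3b, where one must show that the $\delta$ cushion accumulated under case (ii) is exactly enough to reach $\Phi(l) \geq 0$ under case (i). By Lemma~\ref{lemma:invariant.paging}(i) the two spare servers occupy $\{v_1,v_2\}$, so the merging step moves $A$ out of one of these leaves and $B$ out of the other; the resulting step cost therefore differs by precisely $\pm(c_1 - c_2) = \pm\delta$, and never by more. In the orientation where $A$ pays $\delta$ more, the configurations merge with slack, $\Phi(l) \geq 2\delta \geq 0$; the tight orientation is the one in which $A$ vacates the cheaper leaf $v_2$ while $B$ vacates the expensive leaf $v_1$ (arising in 3b when $A$'s spare sits on $v_2$, and symmetrically in 1b), so that $A$ saves $\delta$ and $\Phi(l) = \Phi(l-1) - \delta$. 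It is exactly here that the hypothesis $\Phi(l-1) \geq \delta$ of case (ii) is consumed to its full strength to yield $\Phi(l) \geq 0$, i.e.\ case (i). I expect this tight merging subcase to be the main obstacle, as it is the only point where both parts of the invariant, the exact value $\delta = c_1 - c_2$, and the greedy choice that made $v_2$ cheaper must all align simultaneously.

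Finally, I would read off the lemma at $l = n$: whether or not the configurations have merged, the two parts give $A(\sigma) = A(\sigma[1,n]) \geq B(\pi(\sigma[1,n])) = B(\pi(\sigma))$, using $\delta \geq 0$ in the unmerged case. This is exactly the inequality $B(\pi(\sigma)) \leq A(\sigma)$ required by Definition~\ref{property:locally.optimal}, which completes the verification that $A$ is $G$-like extendable on $j$ and hence, via Theorem~\ref{thm:optimality}, that $\gdy$ is bijectively optimal for weighted paging.
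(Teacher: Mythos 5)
Your proposal is correct and follows essentially the same route as the paper's proof: a reverse-engineered induction on $l$ over the case analysis defining $B$, using Lemma~\ref{lemma:invariant.paging} to locate the two divergent servers on $\{v_1,v_2\}$, observing that the configuration-preserving cases (identical, 1a, 2, 3a) leave the cost gap unchanged while the merging cases (1b, 3b) can reduce it by at most $\delta=c_1-c_2$, which is exactly the cushion guaranteed by part (ii). Your write-up is in fact somewhat more explicit than the paper's (which disposes of 1b and 3b with a one-line ``$A$ pays at most $\delta$ less'' remark), but the decomposition and the key quantitative step are identical.
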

\begin{proof}
The proof is by induction on $l$. 
Suffices to show that each case in the statement of $B$ maintains the statements (i) and (ii) of the lemma 
after $A$ and $B$ serve requests $\sigma_l$ and $\pi(\sigma_l)$, respectively. 

If $A$ and $B$ are in the same configuration prior to serving $\sigma_l$ and $\pi(\sigma_l)$, then $A$ and $B$ serve the requests identically,
and thus pay the same cost. Hence the lemma is satisfied trivially. 
Otherwise, we consider the remaining cases in the statement of $B$:
\begin{itemize}
\item {\bf subcase 1a:} In this case, the servers of $A$ and $B$ move the same distance, thus the lemma holds. 
\item {\bf subcase 1b:} In this case, $A$ pays on $\sigma_l$, in the worst case, a cost $\delta$ less than 
$B$ on $\pi(\sigma_l)$ (see also Figure~\ref{fig:subcase1b}). Thus the lemma holds.
\item {\bf Case 2:} The lemma holds trivially as both $A$ and $B$ pay zero cost (they have each a server at the corresponding request).
\item {\bf subcase 3a:} $A$ and $B$ serve their requests at the same cost, thus the lemma holds.
\item {\bf subcase 3b:} This case is similar to subcase 1b: namely, $A$ pays on $\sigma_l$, in the worst case, a cost $\delta$ less than $B$ 
pays on $\pi(\sigma_l)$. 
\end{itemize}
\end{proof}
The following corollary completes the proof that $A$ is $G$-like extendable and shows that the greedy algorithm
is optimal. This completes the proof of Theorem~\ref{thm:weighted.paging}.

\begin{corollary}
$A(\sigma)\geq B(\pi(\sigma))$.
\label{cor:cost.paging}
\end{corollary}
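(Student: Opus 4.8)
The plan is to read off the corollary as the special case $l=n$ of Lemma~\ref{lemma:cost.paging}, the bulk of the work having already been front-loaded into that lemma and its case analysis of algorithm $B$. Since $A(\sigma)=A(\sigma[1,n])$ and $B(\pi(\sigma))=B(\pi(\sigma[1,n]))$, both parts of the lemma, instantiated at the last request, directly bound the \emph{total} costs of the two algorithms over the whole sequence. The only thing left to handle is the additive slack term $\delta$ appearing in part (ii).

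The key point I would emphasize is the sign of $\delta$. Recall that $\delta=A(\sigma_j)-B(\pi(\sigma_j))=c_1-c_2$, where, by the choice of labeling made without loss of generality earlier in the proof, $A$ serves $\sigma_j$ from node $v_1$ and $B$ serves $\pi(\sigma_j)\equiv\sigma_j$ from node $v_2$, with $c_1\geq c_2$. Hence $\delta\geq 0$. This nonnegativity is precisely what makes the corollary non-vacuous: it says the cost advantage that $A$ concedes on request $\sigma_j$ is never recovered, so any residual slack in part (ii) can only work in our favour.

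I would then split into the two cases of Lemma~\ref{lemma:cost.paging} according to whether $A$ and $B$ are in the same configuration just before serving $\sigma_n$ and $\pi(\sigma_n)$. If they are, part (i) gives $A(\sigma)=A(\sigma[1,n])\geq B(\pi(\sigma[1,n]))=B(\pi(\sigma))$ immediately. If they are not, part (ii) gives $A(\sigma)\geq B(\pi(\sigma))+\delta\geq B(\pi(\sigma))$, using $\delta\geq 0$. In either case $A(\sigma)\geq B(\pi(\sigma))$, which is exactly the claim. There is no genuine obstacle remaining at this stage; the substance lives entirely in Lemma~\ref{lemma:cost.paging}. Establishing this inequality supplies the last outstanding requirement $B(\pi(\sigma))\leq A(\sigma)$ of Definition~\ref{property:locally.optimal}, thereby confirming that $A$ is $G$-like extendable on $j$ and, via Theorem~\ref{thm:optimality}, closing the proof of Theorem~\ref{thm:weighted.paging}.
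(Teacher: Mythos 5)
Your proposal is correct and follows essentially the same route as the paper: both deduce the corollary from the case dichotomy of Lemma~\ref{lemma:cost.paging} together with the observation that $\delta=c_1-c_2\geq 0$. Your direct instantiation at $l=n$ is in fact a slight streamlining of the paper's version, which instead locates the first index at which the configurations coincide and then separately argues that $A$ and $B$ incur identical costs on the remaining suffix.
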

\begin{proof}
Suppose there is an index $l$ such that, after serving $\sigma[1,l]$ and $\pi(\sigma[1,l])$, $A$ and $B$ are in the same configuration. Then 
from Lemma~\ref{lemma:cost.paging}, $A(\sigma[1,l])\geq B(\pi(\sigma[1,l]))$. For all subsequent requests in $\sigma[l+1,n]$ and $\pi(\sigma)[l+1,n]$,
from the statement of $B$, we deduce that $A$ and $B$ remain in the same configuration; furthermore, $A$ serves $\sigma(h)$ in the same way
as $B$ serves $\pi(\sigma(h))$ (for all $l \leq h\leq n$), and thus $A(\sigma(h))=B(\pi(\sigma(h))$. 
Otherwise, there is no such index $l$ such that, after serving $\sigma[1,l]$ and $\pi(\sigma[1,l])$, $A$ and $B$ are in the same configuration, and Lemma~\ref{lemma:cost.paging} shows that $A(\sigma)\geq B(\pi(\sigma))+\delta> B(\pi(\sigma))$.  
In both cases, we obtain that $A(\sigma)\geq B(\pi(\sigma))$.
\end{proof}

\subsection{Reordering buffer management}

As a third application of our framework, we consider the well-studied {\em reordering buffer management} problem, introduced by 
R\"acke et al.~\cite{DBLP:conf/esa/RackeSW02}. 
It consists of a service station that has some active colour, an initially empty buffer of size $k$ and a sequence of coloured requests. Requests enter the buffer sequentially and all items within the buffer that are the same colour as the active colour can be served  
by the service station. If none of the items in the buffer have the active colour, the service station must change its active colour at a fixed
cost. The goal is to minimize the number of colour switches.
As with the other problems we consider in this
paper, the reordering buffer management problem has been studied extensively in the context of competitive analysis 
(see, e.g.~\cite{DBLP:conf/stoc/AdamaszekCER11,DBLP:journals/talg/Avigdor-Elgrabli15,DBLP:conf/focs/Avigdor-ElgrabliR13} and references therein).

For this problem, we define $G$ as the greedy algorithm that switches (only if necessary) to a colour $c$ 
for which the number of items of colour $c$ in the buffer
is maximized among all colours (and is thus trivially greedy-like). 
We once again rely on Theorem~\ref{thm:optimality} in order to show bijective optimality. 
The nature of this problem gives rise to some technical complications in the optimality
proof, in the sense that an algorithm may delay processing a request,
(an option that is not meaningful in the context of paging/$k$-server problems),
which in turn complicates the comparison of $A(\sigma)$ and $B(\pi(\sigma))$ on a request-by-request manner. 

\begin{theorem}
\gdy \  is bijectively optimal for reordering buffer management.
\label{thm:buffer.optimality}
\end{theorem}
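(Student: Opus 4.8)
The plan is to apply Theorem~\ref{thm:optimality}: it suffices to show that every online algorithm $A$ that is $G$-like on the suffix $[j+1,n]$ is also $G$-like extendable on $j$, for every $1\le j\le n$. Following the template of the proofs of Theorems~\ref{thm:circle.2-server} and~\ref{thm:weighted.paging}, I would let $B$ agree with $A$ on the first $j-1$ requests, fix $\pi(\sigma)[1,j]=\sigma[1,j]$, and only specify $\pi$ and the actions of $B$ on the suffix $[j+1,n]$, so that the requirements of Definition~\ref{property:locally.optimal} reduce to constructing a $B$ that is $G$-like on $\sigma_j$ with $B(\pi(\sigma))\le A(\sigma)$. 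Since the configurations of $A$ and $B$ coincide just before step $j$, whether a switch is forced is the same for both, and the taxonomy of $A$'s possible behaviour at step $j$ is short: if $A$ serves $\sigma_j$ in a $G$-like manner the claim is immediate with $B=A$ and $\pi$ the identity; if $A$ performs a colour switch that $G$ would not (an unnecessary switch, when none is forced), then $B$ declines the switch, is thereafter weakly ahead in cost, and the remaining analysis is strictly easier. The substantive case is when both $A$ and $B$ are forced to switch at step $j$ but $A$ switches to a colour $1$ holding $m_1$ items in the buffer while $G$ switches to a colour $2$ of maximum count $m_2\ge m_1$.

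For this case I would, in analogy with the swap $\overline{r}$ of weighted paging, define a transformation $\overline{\cdot}$ that interchanges the two distinguished colours $1$ and $2$ (and is the identity on all other colours), and let $\pi$ apply this swap to every suffix request issued after the configurations of $A$ and $B$ diverge. The key structural observation is that, immediately after step $j$, both algorithms pay exactly one switch, yet $B=G$ has discharged the $m_2$ items of colour $2$ whereas $A$ has discharged only the $m_1$ items of colour $1$; consequently, once colours $1$ and $2$ are interchanged through $\pi$, the multiset of colours held in $B$'s buffer is dominated, entry by entry, by the one held in $A$'s buffer, the two active colours agree, and $B$ carries $m_2-m_1$ fewer items. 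This domination of colour counts, rather than the exact cost surplus $\delta=c_1-c_2$ that drove the weighted-paging argument, is the invariant that encodes the local advantage of greedy here: the benefit of the better switch is deferred, materialising as spare buffer capacity rather than as an immediate saving.

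I would then carry the argument over the suffix by maintaining, by induction on $l$ and in the spirit of Lemmas~\ref{lemma:invariant.paging} and~\ref{lemma:cost.paging}, the invariant that the multiset of colours in $B$'s buffer is dominated by the one in $A$'s buffer under the swap $\pi$, and that $B$ has performed at most as many switches as $A$. Algorithm $B$ is defined to shadow $A$: whenever $A$ is forced to switch to a colour and serve its items, $B$ switches to the $\pi$-image of that colour and serves the (no more numerous) items of that colour it holds, exploiting the freedom to \emph{delay} processing so as to keep its colour counts dominated by those of $A$; should $B$ already hold none of that colour, it skips the switch entirely, saving one unit and only improving its standing. Feasibility of $B$ is then automatic, since its buffer never holds more items than the feasible buffer of $A$. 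At the end of the sequence the invariant yields $B(\pi(\sigma))\le A(\sigma)$, establishing $G$-like extendability on $j$ and hence, through Theorem~\ref{thm:optimality}, the bijective optimality of \gdy.

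The main obstacle, and the reason the reordering-buffer case is more delicate than the previous two, is precisely the timing freedom flagged in the problem description: because an algorithm may delay serving buffered items, the correspondence between the steps of $A$ and $B$ is no longer request-by-request, and once $B$ saves a switch its active colour ceases to track that of $A$. The careful part of the proof is therefore the case analysis showing that $B$ can always realise a serving schedule that keeps its colour counts dominated by those of $A$, stays within capacity $k$, and never switches when $A$ does not, so that the domination invariant and the cost inequality both survive each of these desynchronised steps while $B$ remains well-defined.
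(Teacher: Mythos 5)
Your overall architecture---reducing to $G$-like extendability via Theorem~\ref{thm:optimality}, the colour-swap $\overline{\cdot}$ applied to the suffix to define $\pi$, and having $B$ shadow the (swapped) colour switches of $A$ while skipping switches to colours it does not hold---is exactly the paper's. But your central invariant points in the wrong direction, and with it the final cost comparison does not go through. You maintain that the colour counts in $B$'s buffer are dominated by those in $A$'s buffer (under the swap), that $B$ serves the ``no more numerous'' items at each switch, and that $B$ delays reading input precisely to preserve this domination; you even note that $B$ carries $m_2-m_1$ fewer items after step $j$. This makes $B$ fall behind $A$ in input position, and the deficit never shrinks: each mirrored switch discharges at most as many items as $A$'s, and each skipped switch discharges none while $A$ advances. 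When $A$ finishes the sequence using $A(\sigma)$ switches, $B$ has served strictly fewer than $n$ requests; the unread tail of the input and the leftover buffer contents still have to be served, and the additional colour switches this requires are charged nowhere in your accounting. Feasibility (never exceeding capacity $k$) is indeed automatic in your direction, but completion is not, so the concluding inequality $B(\pi(\sigma))\le A(\sigma)$ is not established.

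The invariant the paper maintains is the reverse one: $B$ is always at least as far along in the input as $A$ (Invariant~1 of Lemma~\ref{lem:bufferWellDef}, $\tau'\ge\tau$), every buffer colour block of $A$ is dominated in size by the corresponding block of $B$ when the latter exists (Invariant~2, $|Z|\le|Z'|$), and the two distinguished colours are controlled separately (Invariant~3). The local advantage of the greedy switch is thus banked as a lead of $\Phi=m_2-m_1$ positions in the input rather than as spare buffer capacity: because $B$ has read more of the sequence, each subsequent switch of $B$ discharges at least as many items as the corresponding switch of $A$, a skipped switch is a genuine saving (those items were already served by $B$), and $B$ terminates having made no more colour switches than $A$. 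Your argument needs to be rebuilt around this forward-domination invariant; as written, the ``$B$ stays behind and dominated'' version fails at the end of the sequence.
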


For the reordering buffer management problem, the next item in the request sequence to enter the buffer is called the \emph{current request}. It is useful to define some notion of time for this problem. At time step $i$, the current request is $\sigma_{i+1}$. That is, the items $\sigma_1,\ldots,\sigma_{i}$ have entered the buffered (and possible have been served). More precisely, ``at time step $i$'' refers to the precise moment that $\sigma_i$ enters the buffer. Moreover, without loss of generality, we assume that the current request enters the buffer as soon as a slot is freed.
At time $n$, all the items have entered the buffer and there is no current request.

For this problem, there is a natural notion of ``laziness'', as defined in \cite{DBLP:conf/esa/RackeSW02}, where the algorithm only changes its active colour when otherwise it can no longer make any progress in the input. Without loss of generality, we can assume that all the algorithms are lazy \cite{DBLP:conf/esa/RackeSW02}.

With these notions, we get the following observation that will be useful for Theorem~\ref{thm:buffer.optimality}.

\begin{observation}\label{obs:canServNotIn}
  For some $\tau < \tau'$, consider any lazy algorithm $A$ at time $\tau$ and any lazy algorithm $A'$ at time $\tau'$ for a given request sequence $\sigma$. All the items in the buffer of $A$ either have been served by $A'$ or are in the buffer of $A'$. This implies that, if $A$ switches to a colour $c$ that is not in the buffer of $A'$ at time $\tau'$, $A$ advances to some time $\tau'' \le \tau'$.
\end{observation}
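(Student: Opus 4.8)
The plan is to establish the two assertions of the observation in turn: the first by a direct indexing argument, and the second by combining that with a buffer-occupancy count.

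For the first assertion I would argue purely from the definition of logical time, using nothing about how $A$ or $A'$ choose their colours. Every item in the buffer of $A$ at time $\tau$ has index at most $\tau$, and since $\tau<\tau'$, all of $\sigma_1,\ldots,\sigma_\tau$ have, by the very definition of time step $\tau'$, already entered the buffer of $A'$ by time $\tau'$. Once a request has entered the buffer of a lazy algorithm it is, at every later logical time, either still present or already served; applying this to $A'$ at time $\tau'$ shows that each item of $A$'s buffer is either in the buffer of $A'$ or has been served by $A'$, which is exactly the claim. The same argument in fact gives the slightly stronger statement that \emph{every} request of index at most $\tau'$ is, at time $\tau'$, either in $A'$'s buffer or already served by $A'$.

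For the implication I would first convert the hypothesis into a statement about colour $c$. Combining this stronger form of the first assertion with the assumption that $c$ is not in the buffer of $A'$ at time $\tau'$, every colour-$c$ request among $\sigma[1,\tau']$ must already have been served by $A'$; in particular the $m_c\ge 1$ colour-$c$ items that $A$ serves in its block at time $\tau$ are served by $A'$ within the prefix $\sigma[1,\tau']$. It then remains to show that when $A$ serves this colour-$c$ block and the vacated slots are refilled by the following requests, the requests pulled in all have index at most $\tau'$, that is, $\tau''\le\tau'$. My plan here is an occupancy count: $A$'s buffer is full (of size $k$) just before the switch, serving the block frees exactly the slots held by colour-$c$ items, and refilling reads the next requests of the input; meanwhile $A'$, which by time $\tau'$ has read $\tau'$ requests and cleared every colour-$c$ request while never holding more than $k$ items, certifies that the window $\sigma[\tau+1,\tau']$ supplies enough non-$c$ requests to complete $A$'s refill. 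Making this precise reduces to showing that the number of non-$c$ requests in $\sigma[\tau+1,\tau']$ is at least the number $m_c$ of colour-$c$ items $A$ removes, which I would obtain by comparing the non-$c$ items served by the two algorithms and invoking the capacity bound on $A'$'s buffer throughout $[\,\tau,\tau'\,]$.

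The hard part will be exactly this last count, together with pinning down the semantics of ``$A$ advances to time $\tau''$''. The delicate point is that requests enter one per freed slot and several may enter at the same logical instant, so one must evaluate the buffers at stable configurations (when the buffer is full and a switch is pending, or at the end of the sequence) rather than at a transient moment in the middle of a refill, at which the inequality can degrade. Once the comparison is anchored at such a configuration of $A'$, the capacity bound forces the required number of non-$c$ requests into the window and $\tau''\le\tau'$ follows; the remaining situations (the block being refilled from fewer requests, or the sequence ending before $\tau'$) only make $\tau''$ smaller and are handled immediately.
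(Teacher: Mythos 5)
The paper gives no proof of this observation: the first sentence (every item in $A$'s buffer at time $\tau$ has index at most $\tau\le\tau'$, hence has entered $A'$'s buffer by time $\tau'$ and is either still there or already served) is the entire justification, and the implication is simply asserted. Your treatment of the first sentence is correct and matches this. You are also right that the implication is not immediate, and your reduction is the correct one: a lazy colour-$c$ block that starts at time $\tau$ with $m_c$ colour-$c$ items in the buffer ends exactly when $m_c$ non-$c$ requests have been read in, so $\tau''\le\tau'$ is equivalent to $\sigma(\tau,\tau']$ containing at least $m_c$ requests of colour other than $c$.

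The occupancy count you defer as ``the hard part'' is, however, exactly where the argument breaks: that count does not follow from the two stated hypotheses, so no capacity bound on $A'$'s buffer can deliver it. Take $k=3$ and $\sigma=R,B,G,R,R,R,B,G,\ldots$ with an initial active colour absent from $\sigma$. Let $A$ switch to $G$ at time $3$ and then to $c=R$ at time $\tau=4$ with buffer $\{\sigma_1,\sigma_2,\sigma_4\}$, so $m_c=2$; let $A'$ switch to $R$ at time $3$, absorb and serve $\sigma_4,\sigma_5,\sigma_6$ inside that same block, and finish it at time $\tau'=7$ with buffer $\{\sigma_2,\sigma_3,\sigma_7\}$, which contains no $R$. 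Both algorithms are lazy, $\tau<\tau'$, and $c$ is absent from $A'$'s buffer at $\tau'$, yet $\sigma(4,7]=R,R,B$ contains only one non-$R$ request and $A$'s red block runs to $\tau''=8>\tau'$. The structural reason your plan cannot be repaired is visible here: the non-$c$ requests that refill $A'$'s buffer while it clears colour $c$ are counted from the start of \emph{$A'$'s own} colour-$c$ block, which may begin before $\tau$ and with fewer colour-$c$ items than $A$ holds at time $\tau$, so those refilling requests can all have indices at most $\tau$ and contribute nothing to the window $\sigma(\tau,\tau']$. Any correct argument must therefore use more than the hypotheses of the observation as stated --- in its only application (Case 1 of Lemma~\ref{lem:bufferWellDef}) the two algorithms are additionally tied together by the inductive invariants on their buffer colour blocks --- so the step you flagged is a genuine gap and cannot be closed along the route you propose.
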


As in \cite{DBLP:conf/esa/RackeSW02}, a \emph{colour block} is the set of items of the same colour that are served with a single colour switch whereas a \emph{buffer colour block} is the set of items of the same colour in the buffer. For some algorithm $A$, $\mathcal{Z}^A_\tau(\sigma)$ are the buffer colour blocks at time $\tau$ for $\sigma$.

Let $A$ denote any online algorithm that is $G$-like on the suffix $\sigma[j+1,n]$ for some $j \in [1,n]$. 
From Theorem~\ref{thm:optimality}, it suffices to prove that $A$ is $G$-like extendable on $j$. We will show the existence of an appropriate online algorithm $B$ and a bijection $\pi$, according to Definition~\ref{property:locally.optimal}. In particular, since the definition requires that $\pi(\sigma)[1,j] = \sigma[1,j]$, and since $B$ makes the same decisions as $A$ on $\sigma[1,j-1]$, we only need to define $\pi(\sigma)[j+1,n]$, as well as the decisions of $B$ while serving the latter sequence of requests. If $A$ is $G$-like on $\sigma[i+1]$, we can define $B$ as $A$ and the claim follows. Hence, for the rest of the proof, we assume that $A$ is not $G$-like on $\sigma_i$.

Let $x$ be the active colour of $A$ after the colour switch at time $i$ and let $y \ne x$ be a colour of maximum cardinality in the buffer of $A$ at time $i$. 

In order to define $\pi$, we define a bijective mapping. Given a colour (or a request) $r$, we define $\overline{r}$ as follows.
$$
\overline{r} = \begin{cases}
          x, &\text{if } r = y \\ 
          y, &\text{if } r = x \\
          r, &\text{otherwise.}
          \end{cases}
$$
Given a sequence of requests $\sigma = \left<\sigma_1,\ldots,\sigma_n\right>$, we define $\overline{\sigma} = \left<\overline{\sigma_1},\ldots,\overline{\sigma_n}\right>$.
The bijection $\pi$ is defined such that all the request after $j$ have the colours $x$ changed to $y$ and $y$ changed to $x$. More formally,
$\pi(\sigma) = \sigma[1,j]\cdot\overline{\sigma}[j+1,n]$.

Now, we will define the actions of $B$ for $\pi(\sigma)$. For the requests $\pi(\sigma)[1,j] = \sigma[1,j]$, Algorithm $B$ performs the same colour changes as $A$. Note that this ensures that the contents of the buffers of $A$ and $B$ are the same at time $j$. For request $\sigma_{j}$, $B$ must make a colour switch since $A$ makes a colour switch. Algorithm $B$ switches to colour $y$, a colour of maximum cardinality in the buffer. Let $\Phi = |Z_y| - |Z_x|$, where $Z_y, Z_x \in \mathcal{Z}^A_j$ are the buffer blocks of colour $y$ and $x$.

For the remaining requests, $B$ will simulate $A$ on $\sigma$ and maintain a queue of the colour switches of $A$ after time $j$, where a colour switch to colour $c$ by $A$ is enqueued as $\overline{c}$. Whenever $B$ must make a colour switch, it removes colours from the queue until the dequeued colour matches the colour of a request in the buffer. This ensures that $B$ will not make more colour switches than $A$. In the following lemma, we show that $B$ is well-defined and serves all the remaining requests.

\begin{lemma}\label{lem:bufferWellDef}
  Algorithm $B$ is well-defined over $\pi(\sigma)$.
\end{lemma}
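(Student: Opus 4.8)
The plan is to prove that $B$ is well-defined by showing that whenever $B$ must make a colour switch (after time $j$), the dequeuing procedure always terminates by finding a colour that is present in $B$'s buffer; that is, the queue never empties while $B$ still needs to act, and the dequeued colour is always servable. The central tool will be Observation~\ref{obs:canServNotIn}, which controls the relationship between the buffer contents of a ``lagging'' algorithm and a ``leading'' one. Since $B$ simulates $A$ on $\sigma$ and makes at most as many switches as $A$ (it skips switches whose colours are no longer relevant), $B$ is always weakly ahead of (or synchronized with) $A$ in terms of the portion of $\sigma$ processed. The key structural fact I would establish first is an invariant, proved by induction on the requests after time $j$, relating the buffer of $B$ (serving $\pi(\sigma) = \sigma[1,j]\cdot\overline{\sigma}[j+1,n]$) to the buffer of $A$ (serving $\sigma$).

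Concretely, the invariant I would aim for is that the buffer of $B$ after processing its requests up to a given point equals $\overline{\cdot}$ applied to the buffer of $A$ at the corresponding (possibly later) time, up to the swap of colours $x$ and $y$ and the effect of the $\Phi = |Z_y| - |Z_x|$ discrepancy created at time $j$. Because $B$ switched to $y$ (a maximum-cardinality colour) while $A$ switched to $x$, $B$ initially serves at least as many items as $A$ did, so $B$ is never behind $A$; this is precisely where the greedy (maximum-cardinality) choice of $G$ is used. I would then argue that each colour $\overline{c}$ that $B$ dequeues corresponds to a genuine colour switch that $A$ made to colour $c$, and since $A$ is lazy and switched to $c$, colour $c$ was present in $A$'s buffer at that time; translating through the $x\leftrightarrow y$ swap and Observation~\ref{obs:canServNotIn}, the colour $\overline{c}$ is either already served by $B$ (in which case $B$ discards it and continues dequeuing) or present in $B$'s buffer (in which case $B$ switches to it and makes progress). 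The laziness assumption guarantees that $B$ only switches when it cannot otherwise advance, so the queue-draining step is always reached in a well-defined state.

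The main obstacle I anticipate is bookkeeping the timing offset between $A$ and $B$: because $B$ may skip some of $A$'s switches, the two algorithms are not synchronized request-for-request, and one must carefully track that $B$'s ``current time'' weakly leads $A$'s so that Observation~\ref{obs:canServNotIn} applies in the correct direction. In particular, the subtle case is ensuring that when $B$ dequeues a colour $\overline{c}$ for which $A$'s switch has already been rendered moot (all items of that colour already served by $B$), the queue still contains a subsequent entry matching a buffer colour block of $B$, rather than running dry. I would handle this by showing that $A$'s future switches are in bijective correspondence (via $\overline{\cdot}$) with the progress $B$ still needs to make, so the total number of servable switches available to $B$ never falls short; equivalently, $B$ finishes processing $\pi(\sigma)$ exactly when $A$ finishes $\sigma$, with every required switch accounted for. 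Once the timing invariant is in place, well-definedness of each individual case of $B$ (the switch to $y$ at time $j$, and the queue-based switches thereafter) follows directly, completing the proof of Lemma~\ref{lem:bufferWellDef}.
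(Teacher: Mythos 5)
Your plan follows essentially the same route as the paper's proof: an induction over the colours dequeued from the queue, maintaining the invariants that $B$'s current time weakly leads $A$'s (seeded by the maximum-cardinality choice of $y$, which puts $B$ ahead by $\Phi$) and that $A$'s buffer colour blocks are dominated by the corresponding blocks of $B$ modulo the $x\leftrightarrow y$ swap, with Observation~\ref{obs:canServNotIn} handling dequeued colours absent from $B$'s buffer. The only slight difference is that the paper states the buffer relation as a one-sided dominance ($|Z|\le|Z'|$ together with element containment) rather than an equality up to the $\Phi$ discrepancy, which is the form actually needed for the induction to close, since $B$ runs ahead and reads input that $A$ has not yet seen.
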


\begin{proof}
  Algorithm $B$ is well-defined on $\pi(\sigma)[1,j] = \sigma(1,j)$ as it performs the same actions as $A$ does over $\sigma(1,j-1)$ and then switches to a colour in the buffer at $\sigma_j$. 

Now, consider the request sequence $\pi(\sigma[j+1,n])$. 
  Let $Q$ be the sequence of colours that $B$ dequeues from its queue
  over $\pi(\sigma)$, i.e., the list of colour switches of $A$ after
  time $j$. We will show, by induction on the indexes of $Q$, that $B$
  is well-defined and will be able to serve the remaining requests of
  $\pi(\sigma)$.  Specifically, we will show that the following
  invariants are maintained throughout.

  For a dequeued colour $c$, let $\tau'$ be the time of $B$ when $c$
  is dequeued and let $\tau$ be the time step of the colour switch in
  $A$. Let $Y^\alg_\tau$ (resp. $X^\alg_\tau$) be the items of colour
  $y$ (resp. $x$) in the buffer at time $\tau$ with an index more than
  $j$.
  The following invariants imply the correctness and completeness of
  $B$:
  \begin{enumerate}
  \item $\tau' \ge \tau$;
  \item for every $Z \in \mathcal{Z}^A_\tau(\sigma)$, if there exists
    a $Z' \in \mathcal{Z}^B_{\tau'}(\pi(\sigma))$ of the same colour,
    then $|Z| \le |Z'|$ and, for every $\sigma_q \in Z$,
    $\pi(\sigma)_q \in Z'$; and
  \item $|Y^A_\tau| \le |X^B_{\tau'}|$ and
    $|X^A_\tau| \le |Y^B_{\tau'}|$ and, for every
    $\sigma_q \in Y^A_\tau \cup X^A_\tau$,
    $\pi(\sigma)_q \in X^B_{\tau'} \cup Y^B_{\tau'}$.
  \end{enumerate}

  Prior to the colour switch at time $i+1$, the buffers of $A$ and $B$
  are identical. After serving $X^A$, Algorithm $A$ is at time $\tau$
  and, after serving $Y^B$, Algorithm $B$ is at time
  $\tau' \ge \tau + \Phi$. Moreover, this means that $B$ will read
  more requests from the input than $A$ and, hence, the second and
  third invariants hold.

  Assume that the invariants hold from index $1,\ldots,\ell-1$ and
  consider the $\ell$-th element of $Q$. There are three cases to
  consider: (1) there are no items of colour $Q[\ell]$ in the buffer
  of $B$ at time $\tau'$, (2) there are items of colour
  $Q[\ell] \ne x$ in the buffer of $B$ at time $\tau'$, and (3) there
  are items of colour $Q[\ell] = x$ in the buffer of $B$ at time
  $\tau'$.

  \begin{itemize}
  \item \textbf{Case 1:} There are no requests of colour $Q[\ell]$ in
    the buffer of $B$ at time $\tau'$.
 \begin{itemize}
    \item \textbf{Invariant 1:} By
      Observation~\ref{obs:canServNotIn}\footnote{Note that, at this
        point, the sequences $\sigma$ and $\pi(\sigma)$ are the same
        modulo a relabelling of the colour $x$ to $y$ and $y$ to
        $x$.}, after the colour switch of $Q[\ell]$, $A$ is at time
      $\tau'' \le \tau'$.

    \item \textbf{Invariant 2:} Assume for contradiction that items of
      a colour $c$ enter the buffer of $A$ during this colour switch
      such that, for $Z_c \in \mathcal{Z}^A_\tau(\sigma)$ and
      $Z'_c \in \mathcal{Z}^B_{\tau'}(\pi(\sigma))$, $|Z_c| > |Z'_c|$.
      This can only occur if there exists an $\sigma_j \in Z_c$ with
      an index greater than all the requests in $Z'_c$, but that would
      contradict the fact that $\tau'' \le \tau'$.

    \item \textbf{Invariant 3:} By a similar argument, after time $i$,
      every request that enters the buffer of $A$ with colour $y$
      (resp. $x$) must be in the buffer of $B$ (but with colour
      $\bar{y} = x$ (resp. $\bar{x} = y$), maintaining Invariant 3.
    \end{itemize}

  \item \textbf{Case 2:} There are requests of colour $Q[\ell] \ne x$
    in the buffer of $B$ at time $\tau'$.  By Invariant 2, for
    $Z_{Q[\ell]} \in \mathcal{Z}^A_\tau(\sigma)$ and
    $Z'_{Q[\ell]} \in \mathcal{Z}^B_{\tau'}(\pi(\sigma))$,
    $|Z'_{Q[\ell]}| \ge |Z_c|$, and $B$ will make at least as much
    progress in the request sequence as $A$. This guarantees all three
    invariants. The same argument holds for $Q[\ell] = y$ from
    Invariant 3.

  \item \textbf{Case 3:} 
    \begin{itemize}
    \item There are requests of colour $Q[\ell] = x$ in the buffer of
      $B$ at time $\tau'$.  If this is the first colour switch to $x$
      after $i$, $A$ will read at most $\Phi$ requests that are not of
      colour $y$ that have already been read by $B$, taking $A$ to
      time $\tau'' \le \tau'$. At $\tau''$, there is additional space
      for $|Y^A_{\tau}|$ items not of colour $y$ in the space of the
      buffer of $A$.  By Invariant 3, $|Y^A_{\tau}|$ is no greater than
      the size of $X^B_{\tau'}$ in the buffer of $B$. The progress
      made by $A$ after time $\tau''$ is no greater than the progress
      made by $B$ after time $\tau'$, guaranteeing all three
      invariants.

    \item If this is not the first colour switch to $x$ after $i$, by
      Invariant 3, $|Y^A_{\tau}|$ is no greater than the size of
      $X^B_{\tau'}$ in the buffer of $B$. Again, the progress made by
      $A$ is no greater than the progress made by $B$, guaranteeing all
      three invariants.
    \end{itemize}
\end{itemize}
\end{proof}

\begin{proof}[Proof of Theorem~\ref{thm:buffer.optimality}]
  From Lemma~\ref{lem:bufferWellDef}, the algorithm $B$ is well-defined. By the definition of $B$, the number of colour switches of $B$ over $\pi(\sigma)[1,j]$ is the same as of $A$ over $\sigma[1,j]$ and the number of colours switches of $B$ over $\pi(\sigma)[j+1,n]$ is no more than $A$ over $\sigma[j+1,n]$. Hence, $B(\pi(\sigma)) \le A(\sigma)$.
\end{proof}

\section{The bijective ratio of the $k$-server problem on the line and the circle}
\label{sec:lineandcircle}
\subsection{Lower bounds}
\label{subsec:lower.bounds}
While Theorem~\ref{thm:circle.2-server} shows that \gdy \ is bijectively optimal for 2-server on the circle, a similar statement does not hold for
the case of the line metric. In fact, in Theorem~\ref{thm:greedy.nonoptimal} we prove a stronger statement,
namely, we design an explicit online algorithm $A$ which has a lower average-cost ratio against \gdy \
of at most $c$ for some constant $c<1$.
\begin{theorem}
For 2-server on the line, and requests over 
${\cal I}_n$, there is an online algorithm $A$ and constants $c,c'$ with $c<1$, 
such that $\sum_{\sigma \in {\cal I}_n} A(\sigma) \leq c\sum_{\sigma \in {\cal I}_n} \gdy(\sigma)+c'$.
\label{thm:greedy.nonoptimal}
\end{theorem}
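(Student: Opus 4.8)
The plan is to design an explicit online algorithm $A$ for 2-server on the line that agrees with $\gdy$ everywhere except on a carefully chosen family of short "bad" patterns for $\gdy$, and to show that over the uniform distribution the total savings strictly outweigh the total losses. Concretely, I would fix a small suffix window (say the last two or three requests, mirroring the semi-online construction of Calderbank et al.~\cite{MR812817}) and identify configurations where $\gdy$'s myopic tie-breaking or nearest-server rule provably overpays relative to an algorithm that anticipates the imminent requests. On the line, the canonical situation is when two closely spaced requests arrive on the same side of one server: $\gdy$ may shuttle a server back and forth, whereas an algorithm that holds position and serves both with one server saves a constant fraction of the cost. The first step is to isolate such a local gadget and quantify, for a single fixed prefix $\sigma[1,n-2]$, the exact cost difference summed over all completions of the last two requests.

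Next I would set up the averaging argument. Because $A$ is to be online, it cannot know the suffix in advance, so $A$ must be defined to hedge: I would have $A$ deviate from $\gdy$ only in response to configurations that arise sufficiently often, and design the deviation so that it is \emph{never much worse} than $\gdy$ (losing at most an additive constant per sequence) while being \emph{strictly better by a constant factor} on a constant fraction of all sequences. The key accounting step is to partition ${\cal I}_n$ according to the servers' positions just before the window and to show, via a local interval-exchange/shift argument like the one in the proof of Theorem~\ref{thm:circle.2-server}, that the completions pair up so that $\sum_\sigma A(\sigma) - \sum_\sigma \gdy(\sigma)$ decomposes into a sum of per-prefix local contributions. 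Each local contribution is negative (a genuine saving) on the gadget prefixes and zero elsewhere, giving a bound of the form $\sum_\sigma A(\sigma) \le \sum_\sigma \gdy(\sigma) - \alpha\,|{\cal I}_n| + c'$ for some constant $\alpha > 0$. Since $\sum_\sigma \gdy(\sigma) = \Theta(n\,|{\cal I}_n|)$ grows linearly in $n$ while the saving is a positive constant fraction of the total, rearranging yields $\sum_\sigma A(\sigma) \le c \sum_\sigma \gdy(\sigma) + c'$ with $c < 1$.

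The main obstacle, I expect, is controlling the \emph{global} effect of a \emph{local} deviation: changing $A$'s action near the end of the sequence alters its server configuration, and I must ensure this does not inflate $A$'s cost on the very last requests in a way that erases the saving, nor create correlations across different prefixes that break the clean per-prefix decomposition. To handle this I would confine the deviation to the final window so that the configurations of $A$ and $\gdy$ coincide on all but the last couple of requests, making the cost comparison reduce to a finite, explicitly enumerable case analysis over the window. The delicate part is verifying that in the worst completions $A$ loses only $O(1)$ (not a constant \emph{fraction}), so that summed over the roughly $|{\cal I}_n|$ sequences the losses contribute only to the additive $c'$ term and not to the multiplicative constant. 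Establishing that the gadget simultaneously guarantees a strict constant-factor gain on a positive-density subset and at most an additive constant loss everywhere else is the crux; once this local inequality is nailed down, the averaging and the conclusion $c<1$ follow by summation.
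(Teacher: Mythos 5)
Your proposal has the right local ingredients (a gadget configuration where $\gdy$ provably overpays in expectation, a deviation confined to a short window, and resynchronization of configurations so the comparison stays local), and these do mirror the paper's proof. But there is a genuine gap in the accounting, and it is fatal to the stated conclusion. You confine the deviation to the \emph{final} window of the sequence, and your own bookkeeping then yields
$\sum_\sigma A(\sigma) \le \sum_\sigma \gdy(\sigma) - \alpha\,|{\cal I}_n| + c'$,
i.e., an \emph{additive} saving of $O(1)$ per sequence. Since $\sum_\sigma \gdy(\sigma) = \Theta(n\,|{\cal I}_n|)$, this saving is a $\Theta(1/n)$ fraction of the total, not a constant fraction; it gives $c = 1 - \Theta(1/n) \to 1$, which does not establish a constant $c<1$. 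Your claim that ``the saving is a positive constant fraction of the total'' contradicts the displayed inequality. What such a one-shot end-of-sequence gadget proves is essentially the earlier result of Calderbank et al.\ (a semi-online algorithm beating $\gdy$ on the last two requests by an additive amount), which is precisely what this theorem is meant to strengthen. A secondary issue: deviating ``in the final window'' requires knowing $n$, so the resulting algorithm is semi-online, whereas the theorem asserts an online $A$.

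The missing idea is \emph{recurrence}: the deviation must be exploited $\Omega(n)$ times per sequence in expectation. The paper defines an ``unfavourable'' configuration for $\gdy$ (both servers clustered near one end, $x_1 \le t/3$ and $(2/3)t \le x_2 \le t$), shows that from \emph{any} configuration $\gdy$ enters such a configuration within a constant number of requests with constant probability (Step~2), and shows that from an unfavourable configuration a three-request deviation saves a constant expected amount while ending in the \emph{same} configuration as $\gdy$ (Step~1). The resynchronization is exactly what lets the argument be iterated: $A$ harvests a constant expected saving every $O(1)$ requests, for a total expected saving of $\Omega(n)$ against $\gdy$'s expected cost of $\Theta(n)$, which is what produces a genuine multiplicative constant $c<1$. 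Also, your specific gadget (``two closely spaced requests on the same side cause $\gdy$ to shuttle a server'') does not describe a failure of $\gdy$ --- $\gdy$ would serve both with the same nearby server at small cost; the actual exploit is that when both servers are clustered, $A$ pays slightly more now to keep one server better positioned for the likely event that a far-away request arrives next.
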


\begin{proof}
Suffices to show that, if we chose a sequence $\sigma \in {\cal I}_n$ uniformly at random, then 
$\mathbb{E}(A(\sigma)) \leq c\cdot \mathbb{E}(\gdy(\sigma))+c'$.
Let~$x_1, x_2\in [0,1]$ with~$x_1<x_2$ be the two server positions of a configuration~$C$.
We say that~$C$ is~\emph{unfavourable} for \gdy \ if~$x_1\leq t/3$ and~$(2/3)t \leq x_2\leq t$ with~$t\in (0,1)$ a small constant that we will choose later.

We prove the theorem in two steps. The first step shows that there is an algorithm that essentially simulates \gdy, but, when faced with an unfavourable situation for \gdy, it can outperform \gdy \ by a constant factor. The second step is then to show that after starting in an arbitrary configuration, 
\gdy \ will find itself in an unfavourable configuration within a constant number of requests with some probability~$p>0$.

We now formally define Step 1 and Step 2.

\smallskip
\noindent
\emph{Step 1.} Here we show that there is a~$c_1<1$ for which the following holds. If~$C$ is a configuration that is unfavourable for $\gdy$ then there is an algorithm~$A$ such that for request sequences of length exactly~$3$ we have~$\mathbb{E}(A) \leq c_1 \cdot \mathbb{E}(\gdy)$ and the final configuration after the three requests are processed by~$A$ or by~$\gdy$ is the same.

\smallskip
\noindent
\emph{Step 2.} Here we show that, for every fixed~$t\in (0,1)$, starting from an arbitrary configuration~$(y_1,y_2)$, $\gdy$ will find itself in an unfavourable configuration~$(x_1,x_2)$ with~$x_2<t$ within a constant number of steps (depending on~$t$) with positive probability~$p>0$ (also depending on~$t$).

\medskip
We now proceed with the details in the analysis of the two steps. 

\smallskip
\noindent
{\em Analysis of Step 1:}
Algorithm~$A$ works as follows. Let~$\sigma_1,\sigma_2,\sigma_3$ be three requests. Define the positions
\begin{align} 
  x_3 &= (5/8) x_2 + (3/8) x_1 \text{, and} \label{eq:x3} \\
  x_4 &= 10 x_2 \label{eq:x4}
\end{align}
(see Figure~\ref{outperform:greedy:on:average}).
On the first request, Algorithm~$A$ uses Server~1 (at that time positioned at~$x_1$) to serve~$\sigma_1$ if~$(1/2) x_1 + (1/2) x_2 \leq \sigma_1 \leq x_3$.  In this case we say~$A$ was successful (at outperforming $\gdy$) for the first request. If~$\sigma_1$ does not satisfy these inequalities then Algorithm~$A$ simply simulates $\gdy$ on all requests. 

\begin{figure}[htb!]
\caption{The figure depicts an unfavourable configuration for $\gdy$ and the intervals within which the three subsequent requests need to appear so that the new algorithm outperforms $\gdy$.}\label{outperform:greedy:on:average}
\begin{tikzpicture}[scale = 1]
   \draw (0,\lG) node[left] {$L$} --(14,\lG) node[right] {$R$};
  \draw[fill=red] (0.5,\lG) node[below](u1) {$x_1$} circle (2pt);
  \draw[fill=red] (2.5,\lG) node[below](u2) {$x_2$} circle (2pt);
   \draw[fill=blue] (1.75,\lG) node[below](u1) {$x_3$} circle (2pt); 
\draw [dashed] (1.5,\lG+0.3) -- (1.5,\lG-0.3);
   \draw[fill=blue] (8,\lG) node[below](u1) {$x_4$} circle (2pt); 
   
  \draw[decorate,decoration={brace}] ($(1.5,\lG) + (0,0.2)$) to node[midway,above,align=center] {$\sigma_1$} ($(1.75,\lG) + (0,0.2)$);
\draw[decorate,decoration={brace}] ($(8,\lG) + (0,0.2)$) to node[midway,above,align=center] {$\sigma_2$} ($(14,\lG) + (0,0.2)$);
  \draw[decorate,decoration={brace}] ($(2.5,\lG) + (0,0.2)$) to node[midway,above,align=center] {$\sigma_3$} ($(4,\lG) + (0,0.2)$);
\end{tikzpicture}
\end{figure}

On the second request, if~$A$ was successful for~$\sigma_1$, then the algorithm serves request~$\sigma_2$ using Server 2 whenever~$x_4 \leq\sigma_2$. We say that~$A$ was successful on the second request. In every other case,  Algorithm~$A$ simulates $\gdy$. For that purpose, it places both servers to the positions they would be in if $\gdy$ had been executed on~$\sigma_1 \sigma_2$ to begin with. 

On the third request, Algorithm~$A$ simply simulates $\gdy$, again meaning that it places both servers to the positions in which they would be if $\gdy$ had been executed on~$\sigma_1 \sigma_2 \sigma_3$. 

We will now compare the costs of~$A$ and the $\gdy$ algorithm.
If~$A$ is unsuccessful in the first step, then~$A$ and $\gdy$ have the same costs. The probability that~$A$ is successful in the first step is at least~$x_3 -(x_2/2+x_1/2) > k$ for some constant~$k>0$. 
Assuming thus that~$A$ is successful on the first request,~$A$ incurs a cost that is at most~$d(x_3,x_1) - d(x_2,x_1)$ larger than the cost of $\gdy$. Denoting by~$D_i$ the average of the cost of~$A$ minus the cost of $\gdy$ in step~$i\in\{1,2,3\}$, we obtain
\[D_1 \leq k\cdot \big( d(x_3,x_1) - d(x_2,x_1)\big).\]
We compare the cost of the second step under the assumption that~$A$ was successful in the first step. If~$A$ is successful in the second step, then~$A$ incurs a cost that is at least $d(x_2,x_3)$ smaller than the one of $\gdy$. This happens with probability~$(1-x_4)$. If~$A$ is unsuccessful, then~$A$ has a cost that is at most~$d(x_2,x_1)$ larger than the one of $\gdy$. We obtain an average difference in cost of
\[D_2 \leq k \cdot  \big(x_4 \cdot d(x_2,x_3) - (1-x_4)\cdot  d(x_2,x_3) \big).\]
We now compare the costs for the third request. If~$A$ was unsuccessful in one of the previous steps, then the cost of~$A$ on the third request is the same as that of $\gdy$. Otherwise, we consider two cases. If~$x_2\leq\sigma_3\leq (\sigma_2-x_1) /2$, which happens with probability at least~$(x_4-x_2)/2$, then $\gdy$ serves~$\sigma_3$ by moving Server 1. In comparison to the cost of $\gdy$, algorithm~$A$ saves at least~$d(x_1,x_2)/2$.
Otherwise, if~$\sigma_3$ is outside of said range, the cost of~$A$ is at most~$d(x_3,x_1)$ greater than the cost of $\gdy$ because the total difference of the two configurations is at most~$d(x_3,x_1)$. This happens with probability at most~$x_4$. We obtain
\[D_3 \leq k \cdot  (1-x_4) \cdot \big( x_4 d(x_3,x_1) - (x_4-x_2)/2 \cdot  d(x_1,x_2)/2 \big).\]
Substituting~$x_3$ and~$x_4$, using \eqref{eq:x3} and \eqref{eq:x4} respectively, overall we obtain that~$D_1+D_2+D_3 \leq -(1/2) k (x_2-x_1) x_2 (23- 80 x_2)$.
Thus, if~$t$ (and hence $x_2$) is sufficiently small, then, on average, the cost of algorithm~$A$ is smaller by a constant amount than the cost of $\gdy$.

\bigskip
\noindent
\emph{Analysis of Step 2:} 
Let~$t$ be a fixed number in~$(0,1)$. 
Starting from an arbitrary configuration~$C$ with constant probability after the first request, server two is in the right half of the line. Assuming this, after the second request, with probability at least~$t/3$, Server 1 is located at a position~$x_1\leq t/3$. On subsequent requests as long as Server~$2$ is located to the right of~$t$, there is a positive probability that Server 2 moves a distance of at least~$(1/6) t$ towards Server~$1$ but not beyond~$(2/3)t$. 
 Thus, with positive probability after at most~$6/t$ steps Server 2 is at a position with~$(2/3)t\leq x_2\leq t$ and Server 1 has not moved.
\end{proof}

Next, we will show a lower bound on the bijective ratio of {\em any} lazy, deterministic online algorithm for the 2-server problem on the line
(which also extends to general $k$-server on both the circle and the line). 
The following proposition is useful in establishing lower bounds on the bijective ratio of a given online algorithm $A$
against an algorithm $B$ that may be online or offline.
Its proof follows from the fact that under any bijection $\pi$, there is at least one sequence $\sigma$
such that $B(\pi(\sigma))\leq c$ and $A(\sigma)\geq \rho c$.
\begin{proposition}
Suppose that there are $c>0$ and $n_0$ such that, for all $n\geq n_0$, 
$ |\{\sigma \in {\cal I}_n : A(\sigma) < \rho \cdot c\}| < |\{\sigma \in {\cal I}_n : B(\sigma) \leq c\}|$. Then the bijective ratio 
of $A$ against $B$ is at least $\rho$.
\label{prop:lower.bound}
\end{proposition}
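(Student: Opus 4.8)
The plan is to establish the statement directly by a counting argument transported across the bijection: I will fix an arbitrary bijection $\pi\colon {\cal I}_n \to {\cal I}_n$ together with an arbitrary target value $\rho' < \rho$, and then exhibit a single sequence $\sigma$ that violates the defining inequality $A(\sigma)\le \rho'\cdot B(\pi(\sigma))$. Since no bijection will be able to certify a ratio below $\rho$, this is exactly the claim that the bijective ratio of $A$ against $B$ is at least $\rho$. To begin, I would fix some $n\ge n_0$ and introduce the two sets $S_A=\{\sigma\in{\cal I}_n : A(\sigma)<\rho c\}$ and $S_B=\{\sigma\in{\cal I}_n : B(\sigma)\le c\}$, so that the hypothesis is simply $|S_A|<|S_B|$.

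The crux is a pigeonhole argument that uses the fact that $\pi$, being a bijection, preserves cardinalities. Consider the set of sequences on which $B$ is cheap \emph{after} applying the map, namely $T=\{\sigma\in{\cal I}_n : B(\pi(\sigma))\le c\}=\pi^{-1}(S_B)$. Because $\pi^{-1}$ is also a bijection, $|T|=|S_B|>|S_A|$. Consequently $T$ cannot be a subset of $S_A$, for otherwise $|S_B|=|T|\le|S_A|$ would contradict the hypothesis. Hence there is some $\sigma\in T\setminus S_A$, i.e. a sequence satisfying $B(\pi(\sigma))\le c$ and, at the same time, $A(\sigma)\ge\rho c$.

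For this $\sigma$ the chain $A(\sigma)\ge \rho c > \rho' c \ge \rho'\cdot B(\pi(\sigma))$ (using $\rho'<\rho$, $c>0$, and $B(\pi(\sigma))\le c$) shows that $\pi$ fails the inequality required for a bijective ratio of at most $\rho'$; I note this also covers the degenerate case $B(\pi(\sigma))=0$, in which the right-hand side vanishes while $A(\sigma)\ge\rho c>0$. Since the bijection $\pi$ and the value $\rho'<\rho$ were arbitrary, and since the hypothesis holds for every $n\ge n_0$, no bijection on any ${\cal I}_n$ with $n\ge n_0$ can certify a bijective ratio strictly smaller than $\rho$, which is precisely the assertion of the proposition.

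I do not expect a genuine obstacle here, as the statement reduces entirely to comparing two cardinalities and carrying one of them through $\pi^{-1}$. The only points demanding care are the asymmetry of the inequalities in the hypothesis (strict ``$<\rho c$'' for $A$ against non-strict ``$\le c$'' for $B$), which is exactly what lets the counting argument yield a \emph{strict} violation $A(\sigma)>\rho' B(\pi(\sigma))$, and the boundary case $B(\pi(\sigma))=0$; both are handled in the concluding inequality chain above.
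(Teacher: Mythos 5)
Your proof is correct and follows essentially the same route as the paper, which disposes of the proposition with the single observation that under any bijection $\pi$ there must exist a $\sigma$ with $B(\pi(\sigma))\leq c$ and $A(\sigma)\geq \rho c$; your pigeonhole argument via $\pi^{-1}(S_B)$ is exactly the counting step the paper leaves implicit. The extra care you take with the strict versus non-strict inequalities and with the quantification over $\rho'<\rho$ is a welcome filling-in of details rather than a departure.
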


\begin{theorem}
Any deterministic online algorithm for 2-server even on a line metric of three equidistant points has strict offline bijective ratio at least 2.
\label{thm:2.bound}
\end{theorem}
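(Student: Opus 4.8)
The plan is to invoke Proposition~\ref{prop:lower.bound} with $\rho=2$, comparison algorithm $B=\opt$, and threshold $c=1$; it then suffices to prove, for an \emph{arbitrary} deterministic online algorithm $A$, the strict inequality $|\{\sigma\in\In : A(\sigma)<2\}| < |\{\sigma\in\In : \opt(\sigma)\le 1\}|$ for all large $n$. First I would recast the problem in terms of the single \emph{uncovered point}. Taking the two servers to start at the two extreme points, a configuration is determined by the unique point $u$ carrying no server; serving $\sigma_t$ amounts to ensuring $u\neq\sigma_t$, and the cost of a step equals the distance travelled by $u$ on the three-point line (a step to an adjacent point costs $1$, a step across costs $2$). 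In this dual view the initial uncovered point is the center.

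For the left-hand count the key observation is that the number of sequences forcing few moves is the \emph{same for every algorithm}. Assuming laziness without loss of generality, after serving $\sigma[1,t-1]$ the uncovered point is a deterministic function of the history, so exactly one of the three values of $\sigma_t$ forces a move (a \emph{fault}) while the other two are free. Enumerating sequences by their set of fault positions then yields $|\{\sigma : A \text{ faults exactly } j \text{ times}\}|=\binom{n}{j}2^{n-j}$, independently of $A$. Since every forced move costs at least $1$, we have $A(\sigma)\ge k_A(\sigma)$, where $k_A(\sigma)$ counts faults; hence $\{A(\sigma)<2\}\subseteq\{k_A(\sigma)\le 1\}$ and $|\{A(\sigma)<2\}|\le \binom{n}{0}2^{n}+\binom{n}{1}2^{n-1}=2^{n}+n2^{n-1}$. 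The bound extends to non-lazy $A$ by comparison with its lazy variant, whose cost is never larger.

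For the right-hand count I would characterize the sequences that $\opt$ serves at cost at most $1$: starting from the center, a token walk of total length at most $1$ either never moves or makes a single unit step to \emph{one of the two} endpoints. This produces two ``monotone'' families (those in which all requests to a fixed endpoint precede all requests to the center, with the opposite endpoint free), each of size $2^{n}+n2^{n-1}$ and with intersection of size $2^{n+1}-1$; inclusion--exclusion then gives $|\{\opt(\sigma)\le 1\}|=n2^{n}+1$. Comparing the two counts, $2^{n}+n2^{n-1}<n2^{n}+1$ for all $n\ge 2$, so Proposition~\ref{prop:lower.bound} delivers the claimed strict bijective ratio of at least $2$.

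The technical heart is the pair of closed-form enumerations, and the conceptual point is that it is exactly the center's \emph{two} cheap escape routes that let $\opt$ be at-cost-$1$ on (asymptotically) twice as many sequences as any online algorithm can be; the asymmetry between a universal one-in-three fault rate and the offline optimum's two-sided flexibility is what manufactures the factor of $2$. The main obstacle I anticipate is making the two counts fully rigorous---in particular arguing that every fault pattern is realizable, so that the distribution $\binom{n}{j}2^{n-j}$ holds verbatim for \emph{every} deterministic online algorithm, and carrying out the inclusion--exclusion for $\{\opt\le 1\}$---together with confirming that placing the servers at the extreme points (equivalently, uncovered center) is the configuration for which this clean factor-$2$ separation appears.
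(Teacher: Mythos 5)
Your proof is correct, but it takes a genuinely different route from the paper's. The paper's argument is a one-witness containment: it fixes the set $S$ of sequences whose first $n-2$ requests lie on the endpoints, whose $(n-1)$-st request is the center, and whose last request is an endpoint; $\opt$ serves every member of $S$ at cost exactly $d$, while an adversary argument produces at least one $\sigma\in S$ with $A(\sigma)\ge 2d$. Since costs are multiples of $d$ and $A(\sigma)\ge\opt(\sigma)$ pointwise, $\{\sigma: A(\sigma)<2d\}=\{\sigma:A(\sigma)\le d\}\subseteq\{\sigma:\opt(\sigma)\le d\}$, and the single witness makes the containment proper, which is all Proposition~\ref{prop:lower.bound} needs. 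You instead compute both level sets exactly: the fault-pattern enumeration $\binom{n}{j}2^{n-j}$ (which I verified is algorithm-independent for lazy algorithms, since exactly one of the three request values faults at each step) gives $|\{A(\sigma)<2\}|\le 2^n+n2^{n-1}$, and the inclusion--exclusion over the two monotone families correctly yields $|\{\opt(\sigma)\le 1\}|=n2^n+1$ (both closed forms check out, e.g.\ at $n=3$ one gets $20<25$). What your approach buys is a quantitatively stronger statement --- the online level set is asymptotically only half the size of the offline one, which pins down where the factor $2$ comes from (the center's two escape routes) and is robust to perturbations of the threshold --- at the price of considerably more bookkeeping; the paper's containment argument is shorter and sidesteps all enumeration, but certifies only that the gap in cardinalities is nonzero. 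Both correctly discharge the hypothesis of Proposition~\ref{prop:lower.bound}, and both rely on the same choice of initial configuration (servers at the two extreme points), which the theorem permits.
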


\begin{proof}
We will show that there exists an initial server configuration which yields the bound.
Suppose that the three points on the line metric are numbered 1,2,3, from left to right, and that the initial 
server positions are at the end points of the line, i.e., points 1 and 3. 
Let also $d$ denote the distance between two consecutive points on the line. For a given $n\geq n_0$, define 
the set of sequences $S \subseteq {\cal I}_n$ such that $\sigma[i]\in \{1,3\}$, for all $i\leq n-2$, 
$\sigma[n-1] \in \{2\}$, and $\sigma[n]\in\{1,3\}$. The optimal offline algorithm can serve every sequence in $S$
at a cost equal to $d$. In contrast, there exists a $\sigma \in S$ such that any 
deterministic online algorithm $A$ must pay at least $2d$ to serve $\sigma$. Namely, if $A$ serves the request $\sigma[n-1]$ 
by moving server 1, the sequence $\sigma$ with $\sigma[n]\in\{1\}$ has this property (symmetrically if $A$ serves the request $\sigma[n-1]$ 
by moving server 3). Last, note that $d$ is the cheapest non-zero cost at which a sequence in ${\cal I}_n$ can be served.
We thus obtain that $ |\{\sigma \in {\cal I}_n : A(\sigma)< 2 \cdot d\}| < |\{\sigma \in {\cal I}_n : \opt(\sigma) \leq d\}|$, and the 
theorem follows from Proposition~\ref{prop:lower.bound}.
\end{proof}
We note that the lower bound of Theorem~\ref{thm:2.bound} extends to general $k$-server on both the circle and the line. 

Last, in the following theorem, we show a lower bound on $\gdy$ under the Max/Max ratio for the line which implies a lower bound on the bijective ratio.
\begin{theorem}
  For any $\eps > 0$, the bijective ratio of $\gdy$ is at least $\frac{k}{2} - \eps$ for the line and at least $\frac{k}{3} - \eps$ for the circle.
\label{lem:linek2}
\end{theorem}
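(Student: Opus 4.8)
The plan is to establish the lower bound via the Max/Max ratio, exploiting Proposition~\ref{prop:lower.bound}: it suffices to exhibit, for a suitable initial configuration, a threshold cost $c$ such that the number of sequences on which $\gdy$ stays cheap (strictly below $\rho c$) is smaller than the number of sequences on which the offline optimum stays at most $c$. Since the bijective ratio compares $\gdy$ against \emph{every} $B$, it is enough to beat $B=\opt$. First I would identify the worst-case configuration for $\gdy$ that forces many servers to travel. For the line, the natural choice is to place all $k$ servers clustered at one endpoint (or more precisely, at positions that let a single ``sweep'' of requests drag servers maximally); then a request sequence that marches across the metric forces $\gdy$ to relocate servers greedily over a total distance on the order of $k$ times the metric diameter, whereas the offline optimum can anticipate and pre-position, paying roughly a $2/k$ fraction of that.

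The key steps, in order, are: (1) fix the initial configuration and a parameter controlling how far requests probe into the metric; (2) describe a family $S$ of request sequences — built from a short ``expensive'' gadget repeated or appended — on which $\gdy$'s greedy choices provably accumulate cost close to $k/2$ (line) or $k/3$ (circle) times the offline cost; (3) count: show that the number of sequences on which $\gdy$ pays less than $(\tfrac{k}{2}-\eps)c$ is strictly dominated by the number on which $\opt$ pays at most $c$, for $c$ chosen as a small multiple of the edge length $d$ in the equispaced discretization; (4) invoke Proposition~\ref{prop:lower.bound}. Because the statement asserts a bound for every $\eps>0$, I would work in the discretized model (path/cycle with $N\to\infty$ equispaced nodes, as set up in the preliminaries) and let the number of probing requests grow, so that the ratio $\gdy/\opt$ on the gadget tends to $k/2$ (resp.\ $k/3$) from below, absorbing the $\eps$.

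The crux of the argument is the cost comparison on the gadget: I must show that $\gdy$'s myopic rule — always move the nearest server — can be forced to incur a factor-$\tfrac{k}{2}$ overhead on the line and only $\tfrac{k}{3}$ on the circle. The discrepancy between the two bounds is geometric: on the circle, requests can ``wrap around,'' so an adversary sweeping requests can only exploit $\gdy$ over a fraction of the configuration before servers on the far side become relevant, weakening the attainable ratio to $k/3$; on the line there is no wraparound, so the full $k/2$ leverage is available. I expect the main obstacle to be making the Max/Max (worst-case ratio) statement sharp enough to feed into the counting argument of Proposition~\ref{prop:lower.bound} — that is, it is not enough that the single most expensive $\gdy$-sequence is $(\tfrac{k}{2}-\eps)$ times the corresponding $\opt$-sequence; I need the \emph{cardinality} inequality on the level sets $\{\sigma:\gdy(\sigma)<\rho c\}$ versus $\{\sigma:\opt(\sigma)\le c\}$. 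The cleanest route is to design the gadget so that a positive-measure (in the discretization, a positive-fraction) family of sequences simultaneously forces the near-maximal $\gdy$ cost while $\opt$ stays at the cheap threshold, guaranteeing the strict inequality on counts as $N$ and the gadget length grow.
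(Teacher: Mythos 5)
Your high-level frame is right---reduce to a worst-sequence, Max/Max-style comparison, which is exactly the instance of Proposition~\ref{prop:lower.bound} obtained by taking $c=\max_\tau B(\tau)$---but the gadget that has to do the work is missing, and the one you sketch does not work. A sequence that ``marches across'' the line does \emph{not} force $\gdy$ to travel on the order of $k$ times the diameter: $\gdy$ always moves the nearest server, so a sweep simply drags the front server along, for a total cost of about one diameter regardless of $k$. What is needed is a sequence on which $\gdy$ pays $\Omega(1)$ \emph{per request} for all but $O(1)$ requests. The paper's construction has two phases: first, a constant number of requests (about $(k-2)\lceil\log(1/\delta')\rceil$ of them, each placed just past the midpoint of two consecutive servers so that $\gdy$ repeatedly halves the gaps) herds $s_1,\dots,s_{k-1}$ into an interval of length $O(\eps/k)$ at the left endpoint; then the remaining requests alternate forever between $x=\tfrac12+\tfrac{k-1}{2}\delta'$ and $1$. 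Since $x$ lies past the midpoint of $s_{k-1}$ and $1$, every one of these requests is served by the lone server $s_k$, which shuttles between $x$ and $1$ at cost at least $\tfrac12-\eps/k$ each. On the circle, the two alternating points and the cluster partition the circumference into three arcs, so the per-request cost is capped near $1/3$; that, not a ``sweeping fraction'' heuristic, is where $k/3$ comes from. Without this (or an equivalent) construction, the claim that $\gdy$ accumulates cost close to $\tfrac{k}{2}$ times the offline cost is precisely the unproved step.

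You have also made the reduction harder than necessary. With $c=\max_\tau B(\tau)$ the set $\{\sigma: B(\sigma)\le c\}$ is all of $\In$, so the cardinality inequality in Proposition~\ref{prop:lower.bound} collapses to exhibiting a \emph{single} sequence with $\gdy(\sigma)\ge\rho c$; no positive-fraction family and no level-set counting is required (any bijection must send the most expensive $\gdy$-sequence somewhere, and $B$ costs at most $\max_\tau B(\tau)$ there). For the comparison algorithm take $B=\kCtr$: it pays at most $1/k$ per request ($1/(2k)$ to reach the request and $1/(2k)$ to return), so $\max_\tau\kCtr(\tau)\le n/k$, and since $\opt(\tau)\le\kCtr(\tau)$ for every $\tau$ the same bound holds against the offline optimum. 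One adversarial sequence of cost $(n-O(1))(\tfrac12-\eps/k)$ for $\gdy$ then yields the ratio for large $n$. Note also that the paper does not assume a favourable initial configuration: the clustering phase forces it from an arbitrary start, which is what makes the bound unconditional.
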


\begin{proof}
We show the result for the line. (The bound for the circle follows with a similar argument.)  
Consider a line that runs from $0$ at the left-most point to $1$ at the right-most point. 
Let the servers be labelled $s_1,s_2,\ldots,s_k$ from left to right (in any configuration, including the initial).
First, we show that it is possible to force $\gdy$ to move all the servers close to $0$ in a constant number of requests. Specifically, 
define $\delta := \eps/k$. For any constant $\delta'$, $\frac{2\delta}{k-1} \ge \delta' > 0$, we force $\gdy$ to move $s_1$ to $0$ and to move all the other servers towards $0$ so that $d(s_i,s_{1+1}) < \delta'$. An initial request is placed at $0$, forcing $\gdy$ to move $s_1$ to $0$. Iterating on $i$ from $2$ to $k-1$, requests are placed as close as possible to the mid-point between $s_{i-1}$ and $s_i$ so that $\gdy$ moves $s_i$, continuing until $d(s_{i-1},s_i) < \delta'$. In total, this requires at most $\lceil\log(1/\delta')\rceil(k-2) + 1$ requests.

The remaining requests alternate between $x := \frac{1}{2} + \frac{k-1}{2}\delta'$ and 1. Note that $x$ is past the mid-point of $1$ and $s_{k-1}$. Hence, these requests (except possibly the first one) will be served by $s_k$ at a cost of more than $1/2 - \delta = 1/2 - \eps/k$. 

The worst-case for \kCtr \ occurs when all the requests are to an extreme point from any server. This has a total cost of $1/k$ ($1/(2k)$ to serve the request and $1/(2k)$ to return). For large enough $n$, this gives a Max/Max ratio of $k/2 - \eps$ which implies the theorem. 

By a similar argument, it is possible to force a server of $\gdy$ to travel at least $1/3 - \eps/k$ on the circle for all but a constant number of requests.
\end{proof}

\subsection{Upper bounds}
\label{subsec:upper.bounds}
We now consider upper bounds and provide sufficient conditions for showing that 
the bijective ratio of an online algorithm $A$ against an algorithm $B$ (that may be online or offline) is at most $c$;
these conditions are formally described in Lemma~\ref{lem:genApproach} and Lemma~\ref{lem:amort}. Both lemmas require 
two conditions stated in terms of individual requests, and their combination yields the desired bound. We use the 
notation $A(\sigma[i] | B(\sigma[1,i-1]))$ to denote the cost of $A$ for serving request $\sigma[i]$ assuming a configuration
resulting from $B$ serving sequence $\sigma[1,i-1]$, where the suffix of the request sequence is implied. For the $k$-server problem in particular, and the algorithms we consider, 
this is a well-defined concept. Since $B$ can be either offline or online, we further emphasize that, in the former case, it
is implicit that the decisions of $B$ on $\sigma[1,j]$ are contingent on its acting on the entire sequence $\sigma[1,n]$.
Last, we use $A(\sigma[i])$ to denote $A(\sigma[i]|A(\sigma[1,i-1]))$. 

\begin{lemma}\label{lem:genApproach}
  Suppose that there exists a $c>1$, a $d > 0$ and a bijection $\pi$ over ${\cal I}_n$ such that, given an online algorithm $A$ and an algorithm 
  $B$, for all $\sigma \in {\cal I}_n$ and all $i \leq n$, the following hold: 
    \vspace{0.5em} \\  
    \indent
 \begin{minipage}{0.9\textwidth}
  \begin{packed_item}
  \item[(i)]$A(\sigma[i] | B(\sigma[1,i-1])) \le d \cdot B(\sigma[i])$, and
  \item[(ii)]$A(\sigma[i]) - A(\pi(\sigma)[i]| B(\pi(\sigma)[1,i-1])) \leq (c-d) \cdot B(\pi(\sigma)[i])$, 
  \end{packed_item}
    \end{minipage} \vspace{0.5em} \\ 
then, $A(\sigma) \le c \cdot B(\pi(\sigma))$.
\end{lemma}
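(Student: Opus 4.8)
The plan is to prove the bound request by request and then sum. First I would fix an arbitrary $\sigma \in \mathcal{I}_n$ and abbreviate $\tau = \pi(\sigma)$. Since the per-request cost $A(\sigma[i])$ is, by definition, the cost $A$ incurs on $\sigma[i]$ starting from its own configuration after $\sigma[1,i-1]$, the total cost decomposes additively as $A(\sigma) = \sum_{i=1}^n A(\sigma[i])$, and likewise $B(\tau) = \sum_{i=1}^n B(\tau[i])$. Consequently it suffices to establish the stronger pointwise inequality $A(\sigma[i]) \le c \cdot B(\tau[i])$ for every $i \le n$, since summing it over $i$ immediately yields $A(\sigma) \le c \cdot B(\pi(\sigma))$.

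The key observation is that hypothesis (i) is assumed for \emph{every} sequence in $\mathcal{I}_n$, and because $\pi$ is a bijection of $\mathcal{I}_n$ we have $\tau = \pi(\sigma) \in \mathcal{I}_n$. Instantiating (i) at $\tau$ therefore gives $A(\tau[i] \mid B(\tau[1,i-1])) \le d \cdot B(\tau[i])$. Meanwhile hypothesis (ii), which already mixes $\sigma$ and $\pi(\sigma)$, reads $A(\sigma[i]) - A(\tau[i] \mid B(\tau[1,i-1])) \le (c-d) \cdot B(\tau[i])$. Adding these two inequalities cancels the common intermediate term $A(\tau[i] \mid B(\tau[1,i-1]))$ and leaves $A(\sigma[i]) \le (c-d)\,B(\tau[i]) + d\,B(\tau[i]) = c \cdot B(\tau[i])$, which is exactly the pointwise bound I want.

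The argument is essentially a two-line combination once the bookkeeping is in place, so I do not expect a genuine obstacle; the care required is conceptual rather than computational. The first delicate point is justifying that (i) may be applied to $\pi(\sigma)$ rather than to $\sigma$ itself---this is precisely where the bijectivity of $\pi$ enters, and it is what lets the term $A(\pi(\sigma)[i] \mid B(\pi(\sigma)[1,i-1]))$ occurring in (ii) be controlled by $B$. The second is confirming that the total costs genuinely split as sums of the per-request costs $A(\sigma[i]) = A(\sigma[i] \mid A(\sigma[1,i-1]))$ and $B(\pi(\sigma)[i])$, which holds for the $k$-server problem and the algorithms under consideration, where cost accumulates additively along each algorithm's own trajectory. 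Granting these, the purpose of splitting the requirement into (i) and (ii) becomes transparent: condition (i) charges $A$'s cost measured from $B$'s configuration against $B$, while condition (ii) accounts for the discrepancy between $A$ acting on its own trajectory and $A$ acting from $B$'s configuration, and the two budgets $d$ and $c-d$ are calibrated so that they sum to the target ratio $c$.
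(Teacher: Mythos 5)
Your proof is correct and is essentially identical to the paper's argument: apply hypothesis (i) to the sequence $\pi(\sigma)$, add the resulting inequality to (ii) so the term $A(\pi(\sigma)[i]\mid B(\pi(\sigma)[1,i-1]))$ cancels, and sum the pointwise bound $A(\sigma[i])\le c\cdot B(\pi(\sigma)[i])$ over all requests. The paper states this in two lines; you simply make explicit the same bookkeeping about bijectivity and the additive decomposition of costs.
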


\begin{proof}
  Using $\pi(\sigma)$ as the request sequence for (i) and adding both inequalities, we get $A(\sigma[i]) \le c \cdot B(\pi(\sigma)[i])$. The lemma follows by summing over all the requests.
\end{proof}

The following lemma formalizes a similar approach using amortized analysis.

\begin{lemma}
\label{lem:amort}
Given an online algorithm $A$ and an algorithm $B$,
let $\Phi$ be any potential function such that the amortized cost of $A$ for $\sigma[i]$ is $a_i = A(\sigma[i]) + \Delta\Phi_i$, 
where $\Delta\Phi_i = \Phi_i - \Phi_{i-1}$, and $\Phi_0$ is the potential prior to serving the first request. Suppose also 
that there exist $c,d>0$ and a bijection $\pi$ over ${\cal I}_n$ such that, for all $\sigma \in {\cal I}_n$ and all $i \leq n$, the following hold:
      \vspace{0.5em} \\  \indent
  \begin{minipage}{0.9\textwidth}
   \begin{packed_item}
   \item[(i)]$A(\sigma[i] | B(\sigma[1,i-1])) \le d \cdot B(\sigma[i])$, and 
   \item[(ii)] $a_i \le c \cdot A(\pi(\sigma)[i] | B(\pi(\sigma)[1,i-1]))$, 
   \end{packed_item}
   \end{minipage} \vspace{0.5em} \\ 
  then, $A(\sigma) \le c \cdot d \cdot B(\pi(\sigma)) + \Phi_0 - \Phi_n$.
\end{lemma}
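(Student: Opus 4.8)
The plan is to mimic the proof of Lemma~\ref{lem:genApproach}, replacing the direct cost comparison with a telescoping argument over the potential function. The starting observation is that the amortized costs telescope: since $a_i = A(\sigma[i]) + \Delta\Phi_i$ with $\Delta\Phi_i = \Phi_i - \Phi_{i-1}$, summing over $i$ yields $\sum_{i=1}^n a_i = A(\sigma) + \Phi_n - \Phi_0$, and hence $A(\sigma) = \sum_{i=1}^n a_i + \Phi_0 - \Phi_n$. This reduces the task to bounding the sum of amortized costs, with the two boundary potential terms simply carried along for the ride.

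First I would apply hypothesis (ii) to each request and sum, obtaining $\sum_{i=1}^n a_i \le c \cdot \sum_{i=1}^n A(\pi(\sigma)[i] | B(\pi(\sigma)[1,i-1]))$. Next I would bound this remaining sum using hypothesis (i). The crucial point---exactly as in the proof of Lemma~\ref{lem:genApproach}---is that (i) is assumed to hold for \emph{every} sequence in ${\cal I}_n$, and since $\pi$ is a bijection on ${\cal I}_n$, the image $\pi(\sigma)$ again lies in ${\cal I}_n$. I may therefore instantiate (i) with the request sequence $\pi(\sigma)$ in place of $\sigma$, giving $A(\pi(\sigma)[i] | B(\pi(\sigma)[1,i-1])) \le d \cdot B(\pi(\sigma)[i])$ for each $i$; summing over $i$ yields $\sum_{i=1}^n A(\pi(\sigma)[i] | B(\pi(\sigma)[1,i-1])) \le d \cdot B(\pi(\sigma))$.

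Chaining these estimates then gives $A(\sigma) = \sum_{i=1}^n a_i + \Phi_0 - \Phi_n \le c \cdot d \cdot B(\pi(\sigma)) + \Phi_0 - \Phi_n$, which is precisely the claim. I expect the only conceptual obstacle to be the legitimacy of applying (i) to the image sequence $\pi(\sigma)$ rather than to $\sigma$ itself; this is exactly what the ``for all $\sigma \in {\cal I}_n$'' quantifier in the hypothesis is there to license, and it is the same device that drives the proof of Lemma~\ref{lem:genApproach}. Everything else is a routine rearrangement of sums, so no further difficulties are anticipated.
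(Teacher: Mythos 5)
Your proposal is correct and follows essentially the same route as the paper's own proof: sum hypothesis (ii), use the telescoping of the potential, and then bound the resulting sum via hypothesis (i) instantiated with $\pi(\sigma)$ as the request sequence. The paper's proof is exactly this argument, merely stated more tersely.
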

 
\begin{proof}
Summing Inequality (ii) over all the requests gives  
$$\sum_{i=1}^n a_i = A(\sigma) + \Phi_n - \Phi_0 \le c \sum_{i=1}^n A(\pi(\sigma)[i] | B(\pi(\sigma)[1,i-1]))~.$$
Thus,
$$
  A(\sigma) \le c \left(\sum_{i=1}^n A(\pi(\sigma)[i] | B(\pi(\sigma)[1,i-1]))\right) + \Phi_0 - \Phi_n
            \le cd \cdot B(\pi(\sigma)) + \Phi_0 - \Phi_n ~,
$$
where the last inequality follows from (i), using $\pi(\sigma)$ as the request sequence.
\end{proof}

\subsubsection{Defining the bijection}
\label{subsec:ordered.bijection}
This section addresses the definition of a suitable bijection.
Let $\delta > 0$ be a positive value, representing the distance between any two adjacent points in the metric space. 
For a given server configuration $C$, let $\cP_C^\delta$ be the sequence of points in the metric space, ordered by their
distance from the closest server in $C$. 
That is, for all $i \leq |\cP_C^\delta|$, $\minD(\cP_C^\delta[i]) \le \minD(\cP_C^\delta[i+1])$, where $\minD(\cP_C^\delta[i])$ 
is the distance from $\cP_C^\delta[i]$ to the nearest server in configuration $C$. 
\begin{definition}
  For the server configurations $C_1$ and $C_2$, 
  an \emph{ordered bijection} $(\omb)$ is a bijection such that, for all $i$, $\cP_{C_1}^\delta[i]$ is matched to $\cP_{C_2}^\delta[i]$.
  Let $A$ and $B$ be two algorithms; we define $\pi^{A,B}(\sigma)=\pi^{A,B}(\sigma[1] \ldots \sigma[n])$ to be any bijection of the form $\pi^{A,B}_1(\sigma[1]) \ldots \pi^{A,B}_n(\sigma[n])$
  such that $\pi^{A,B}_i$ is the \omb \ of the server configurations of $A$ and $B$ right after serving the sequences $\sigma[1,i]$ and 
  $\pi^{A,B}_1(\sigma[1]) \ldots \pi^{A,B}_{i-1}(\sigma[i-1])$, respectively. Note that this definition is applicable to all metric spaces. (When clear from the context, we drop the superscript of~$\pi$.)
  \label{def:ob}
\end{definition}

As there may be multiple points at some distance $d$ from the nearest server, each permutation of the points at 
distance $d$ represents a different bijection. In \omb, we assume that ties are broken arbitrarily.
We assume for simplicity that both the line and the circle have unit length and that $\delta$ is chosen such that there exist 
points at positions $i/(2k)$, for $i\in[0,2k]$ (assuming an arbitrary ``0'' point for the circle).
A configuration has the $k$ servers \emph{spaced uniformly} along the line/circle if there is a server at 
point $i/(2k)$ for all {\em odd} $i\in[0,2k]$. 

We define the \emph{best configuration}
\footnote{For a given metric, the best configurations may not exist; however, 
 for the circle and the line it does.} 
under $\omb$ as a configuration $C^*$ such that, for any other configuration $C$, 
$\minD(\cP^\delta_{C^*}[i]) \le \minD(\cP^\delta_C[i])$ for all points $i$. The following lemma defines the best configuration for the line and the circle.

\begin{lemma}
\label{lem:bestConf}
For the line and the circle, a configuration in which the servers are spaced uniformly along the metric space is the best configuration. 
\end{lemma}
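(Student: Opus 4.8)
The plan is to show that the uniformly spaced configuration $C^*$ minimizes $\minD(\cP^\delta_{C}[i])$ simultaneously for every rank $i$, which is exactly the requirement in the definition of the best configuration. I would reformulate the quantity $\minD(\cP^\delta_{C}[i])$ as follows: for a threshold $r \geq 0$, let $N_C(r)$ denote the number of points of the metric whose distance to the nearest server of $C$ is at most $r$. Then $\minD(\cP^\delta_C[i]) \leq r$ precisely when $N_C(r) \geq i$, so the sorted distance sequence is the ``inverse'' of the coverage-count function $N_C$. Consequently, proving $\minD(\cP^\delta_{C^*}[i]) \le \minD(\cP^\delta_C[i])$ for all $i$ is equivalent to proving $N_{C^*}(r) \geq N_C(r)$ for all radii $r$; that is, the uniform configuration covers at least as many points within any given radius as any other configuration. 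This reduction is the conceptual heart of the argument, and I would state it as the first step.

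The second step is to compute $N_{C^*}(r)$ and bound $N_C(r)$ from above. Since the servers in $C^*$ sit at the points $i/(2k)$ for odd $i$, they partition the circle (resp.\ line) into $k$ arcs each of length $1/k$, with one server centered in each arc; the set of points within distance $r$ of the nearest server is the union of $k$ disjoint intervals, each of length $\min(2r, 1/k)$ on the circle (with the obvious boundary modification on the line). For an arbitrary configuration $C$ with $k$ servers, the covered region is a union of at most $k$ intervals each of length at most $\min(2r,\ldots)$, and since these may overlap, the total covered length is at most $k \cdot 2r$ and at most $1$. The key point is that $C^*$ attains the maximum possible covered measure for every $r$: its $k$ balls of radius $r$ are pairwise disjoint as long as $2r \le 1/k$, so their union has measure exactly $\min(2kr, 1)$, which is an upper bound for any configuration. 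Since in the equispaced discretization the number of covered points is proportional to the covered length (up to the fixed spacing $\delta$), I would conclude $N_{C^*}(r) \geq N_C(r)$ for every $r$.

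The main obstacle I anticipate is handling the passage between continuous measure and the discrete point count cleanly, together with the boundary effects on the line (where servers near the endpoints ``waste'' part of their covering balls outside the metric, so the extremal configuration is genuinely the uniform one rather than, say, one crowding servers to one side). On the circle the symmetry makes the disjointness argument for $C^*$ immediate, but on the line I would need to argue carefully that pushing servers to be equispaced (with the half-spacing offset at the ends) never decreases coverage at any radius; a local exchange argument—showing that moving any two adjacent servers toward the equispaced positions can only increase $N_C(r)$ for all $r$—would make this rigorous. I would also need to confirm that the chosen $\delta$ (with points at $i/(2k)$) makes the discrete counts match the continuous measures exactly at the relevant radii, so that no rounding discrepancy reverses the inequality.

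Finally, I would assemble these pieces: the coverage inequality $N_{C^*}(r) \geq N_C(r)$ for all $r$ gives, by the inverse relationship from Step 1, $\minD(\cP^\delta_{C^*}[i]) \le \minD(\cP^\delta_C[i])$ for all ranks $i$, which is exactly the definition of $C^*$ being the best configuration under $\omb$. This completes the proof for both the line and the circle, with the circle case being the cleaner of the two and the line case requiring the extra boundary care noted above.
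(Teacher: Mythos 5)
Your proof is correct and is essentially the paper's argument in dual form: where you show the coverage count $N_C(r)$ increases by at most $2k$ points per $\delta$ of radius and that the uniform configuration attains this bound, the paper states the equivalent fact that the sorted distance sequence of any configuration increases by $\delta$ at least every $2k$ positions while the uniform one increases exactly every $2k$ positions. The extra "local exchange" step you anticipate for the line is not actually needed, since the uniform configuration's balls remain disjoint and untruncated until the whole segment is covered, so your upper bound is already attained there.
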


\begin{proof}
We give the proof for the line (a very similar argument applies for the circle).
Let $C$ be the configuration in which all the servers are uniformly spaced along the line. 
That is, the furthest point on the line from the set of servers is at distance $1/2k$. Note that, under our assumptions about $\delta$ such a configuration always exists.
 By the definition of $C$, ignoring the first $k$ values of $0$, the values of $\minD(\cP^{\delta}_C[i])$ increase by $\delta$ every $2k$ steps. More formally, 
$$
\minD(\cP^{\delta}_C[i]) = \begin{cases}
0, &\text{for } 1 \le i \le k \\
\left\lceil\frac{i - k}{2k}\right\rceil \cdot \delta, &\text{for } k < i~.
\end{cases}
$$

Let $C'$ be a configuration of the servers that is not $C$. In the configuration of $C'$, there must be at least one 
point on the line with a cost higher than $1/2k$.
 By the definition of $C'$, ignoring the first $k$ values of $0$, the values of $\minD(\cP^{\delta}_{C'})$ increase by $\delta$ at most every $2k$ steps (i.e., $\minD(\cP^{\delta}_{C'}[\ell - 2k]) + \delta \le \minD(\cP^{\delta}_{C'}[\ell])$ for all $\ell > 3k$). 

Hence, there is a point $j$ such that, for every point $j'' < j$, $\minD(\cP^{\delta}_C[j'']) = \minD(\cP^{\delta}_{C'}[j''])$ and, for every point $j' \ge j$, $\minD(\cP^{\delta}_C[j']) \ge \minD(\cP^{\delta}_{C'}[j'])$.
Thus $C$ is a best configuration.

\end{proof}

Informally, in the following lemma, we compare the ``worst'' configuration to the best configuration. That is, we bound from above the distances to all points from a server with respect to any two configurations.

\begin{lemma}
\label{thm:biMatch}
  For any $\delta > 0$ and any two configurations $C_1$ and $C_2$ of $k$ servers on the line, $\minD(\cP^\delta_{C_1}[i]) \le 2k\minD(\cP^\delta_{C_2}[i])$ for every $i$. 
\end{lemma}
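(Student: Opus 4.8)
The plan is to translate the statement about sorted distances into a statement about covering counts. For a configuration $C$ and a radius $r\ge 0$, let $N_C(r)$ denote the number of points of the discretized line that lie within distance $r$ of the nearest server of $C$. By the definition of $\cP^\delta_C$, for every index $i$ we have $\minD(\cP^\delta_C[i])\le r$ if and only if $N_C(r)\ge i$. Fixing $i$ and setting $r=\minD(\cP^\delta_{C_2}[i])$, we immediately have $N_{C_2}(r)\ge i$, and the target inequality $\minD(\cP^\delta_{C_1}[i])\le 2kr$ is equivalent to $N_{C_1}(2kr)\ge i$. Hence it suffices to bound $N_{C_2}$ from above and $N_{C_1}$ from below by matching quantities, and the factor $2k$ should emerge from the ratio of how many points a configuration can cover per unit of radius.

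For the upper bound, I would observe that the ball of radius $r$ around a single server contains at most $2r/\delta+1$ grid points, so summing over the $k$ servers (overcounting overlaps only helps an upper bound) gives $N_{C_2}(r)\le 2kr/\delta+k$. Writing $r=\minD(\cP^\delta_{C_2}[i])$ and using $i\le N_{C_2}(r)$ then yields $\minD(\cP^\delta_{C_2}[i])\ge (i-k)\delta/(2k)$ for $i>k$ (and trivially $\ge 0$ for $i\le k$). For the lower bound, I claim that for every configuration and every integer $m\ge 0$ with $k+m$ at most the number of grid points, $N_{C_1}(m\delta)\ge k+m$; equivalently $\minD(\cP^\delta_{C_1}[i])\le (i-k)\delta$ for $i>k$ and $\minD(\cP^\delta_{C_1}[i])=0$ for $i\le k$. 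Here I use that the $k$ servers occupy distinct points, which is what accounts for the $k$ points at distance $0$.

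Combining the two bounds finishes the argument: for $i\le k$ both sides vanish (the left side because the servers are distinct), while for $i>k$ we chain
\[
  \minD(\cP^\delta_{C_1}[i]) \le (i-k)\delta = 2k\cdot\frac{(i-k)\delta}{2k} \le 2k\cdot\minD(\cP^\delta_{C_2}[i]).
\]
Note that this makes no use of the precise best-configuration formula of Lemma~\ref{lem:bestConf}; the coarse counting bounds above are already enough.

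The main obstacle is establishing the lower bound $N_{C_1}(m\delta)\ge k+m$ uniformly over all configurations, whose extremal (least-covering) case is all servers clustered at one endpoint of the line. I would prove it by grouping the servers' balls of radius $m\delta$ into maximal \emph{super-intervals}: a super-interval containing $k_t$ servers spans length at least $(k_t-1+2m)\delta$ and therefore contains at least $k_t+2m$ grid points, and the only deficit comes from clamping the balls to $[0,1]$, which can lose at most $m$ points at each of the two ends of the line. A short case analysis—several super-intervals (where the $2m$ surplus per interval dominates the clamping loss), versus a single super-interval that may stick out on both sides (in which case it already covers the entire line, so $N_{C_1}=$ the total number of points)—then yields $N_{C_1}(m\delta)\ge k+m$ in the relevant range. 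This boundary clamping, together with the distinctness-of-servers assumption needed to handle the $i\le k$ case, is where the care lies; everything else is the elementary counting described above.
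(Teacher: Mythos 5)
Your proof is correct, and at its core it follows the same strategy as the paper: bound $\minD(\cP^\delta_{C_1}[i])$ above by the ``all servers at one endpoint'' profile $(i-k)\delta$, bound $\minD(\cP^\delta_{C_2}[i])$ below by the uniformly-spaced profile $(i-k)\delta/(2k)$, and observe that the ratio of the two is $2k$. Where you differ is in how you justify these two extremal bounds. The paper simply names the best configuration (via Lemma~\ref{lem:bestConf}) and the worst configuration and asserts their extremality with a brief monotonicity remark; you instead dualize to covering counts $N_C(r)$ and prove both bounds by elementary counting --- the upper bound on $N_{C_2}(r)$ by summing ball sizes over servers, and the lower bound $N_{C_1}(m\delta)\ge k+m$ by the super-interval decomposition with the boundary-clamping case analysis. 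This buys you a self-contained and fully rigorous justification of exactly the step the paper's proof leaves informal (that \emph{every} configuration is dominated pointwise by the endpoint-clustered one), at the cost of a slightly longer argument; your version also makes explicit that the distinctness of server positions is what handles the indices $i\le k$, an assumption the paper states but uses silently. Both arguments deliver the same tight constant $2k$, attained in the same extremal pair of configurations.
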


\begin{proof}
We make the natural assumption that in any configuration, there is no point that is occupied by more than one server.
We define $W$ as 
the \emph{worst configuration}, namely as the one that has the property that, for any configuration $C$, 
$\minD(\cP^\delta_{W}[i]) \ge \minD(\cP^\delta_C[i])$ for all points $i$ (as for the best configuration, a worst configuration may
not necessarily exist for every metric, but we will show that it exists for the line and the circle).
For the line, we claim that $W$ corresponds to all the servers being at one of ends of the line. In such a configuration, ignoring the first $k$ points at distance $0$, the values of $\minD(\cP^\delta_{W}[i])$ increase by $\delta$ at each step. More formally, 
$$\minD(\cP^\delta_{W}[i]) = \begin{cases}
  0, &\text{for } 1 \le i \le k \\
  (i - k)\delta, &\text{for } k < i~.
\end{cases}$$ 

For every other configuration $C$, $\minD(\cP^\delta_{W}[i]) \ge \minD(\cP^\delta_{C}[i])$.

Let $C^*$ be the configuration with the servers uniformly spaced along the line. By Lemma~\ref{lem:bestConf}, $C^*$ is the best configuration.

  Consider the ratio of $\minD(\cP_W^{\delta}[i])$ to
    $\minD(\cP_{C^*}^{\delta}[i])$. Modulo the initial $k$ points at distance $0$, the values of $\minD(\cP_W^{\delta}[i])$ increase by
    $\delta$ every step (i.e., $\minD(\cP^{\delta}_{W}[\ell - 1]) + \delta =  \minD(\cP^{\delta}_{W}[\ell])$ for all $\ell > k$) and the values of $\cP_{C^*}^{\delta}$ increase by
    $\delta$ every $2k$ steps (i.e., $\minD(\cP^{\delta}_{C^*}[\ell - 2k]) + \delta = \minD(\cP^{\delta}_{C^*}[\ell])$ for all $\ell > 3k$). This ratio is maximized when $i \ge 3k$ and
    $(i - k) \bmod{2k} = 0$, for which it attains a value of $2k$.
\end{proof}

From Lemma~\ref{thm:biMatch} and the notions of the proof, we also obtain the following.

\begin{lemma}
\label{cor:btw2servers}
For any $\delta > 0$ and any two configurations $C_1$ and $C_2$ of $k$ servers on the line or the circle, 
if $\cP^\delta_{C_1}[i]$ is  located between two adjacent servers of $C_1$, then 
$\minD(\cP^\delta_{C_1}[i]) \le k\minD(\cP^\delta_{C_2}[i])$.
\end{lemma}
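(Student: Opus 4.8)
The plan is to reuse the best configuration $C^*$ and its explicit distance profile from the proof of Lemma~\ref{thm:biMatch}, and to show that the hypothesis ``$\cP^\delta_{C_1}[i]$ lies between two adjacent servers'' improves the relevant growth estimate by a factor of two. Intuitively, the factor $2k$ in Lemma~\ref{thm:biMatch} is forced by the worst configuration (all servers at one end), whose extremal points lie on a single one-sided ray where $\minD$ grows by $\delta$ at every index. An \emph{interior} point instead lives in a symmetric gap, so the index accumulates two points per distance level, and this halving of the growth rate is exactly what turns $2k$ into $k$.

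First I would record the $C^*$ facts. By Lemma~\ref{lem:bestConf} the uniformly spaced configuration $C^*$ is a best configuration, so for every index $i$ and every $C_2$ we have $\minD(\cP^\delta_{C_2}[i]) \ge \minD(\cP^\delta_{C^*}[i])$, with the profile $\minD(\cP^\delta_{C^*}[i]) = \lceil (i-k)/(2k)\rceil\,\delta$ for $i>k$ (the same formula holding on the circle, where $C^*$ has $k$ gaps rather than $k-1$). Hence it suffices to compare $C_1$ against $C^*$.

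The heart of the argument will be a counting bound on $i$. Writing $m\delta := \minD(\cP^\delta_{C_1}[i])$ and letting $s,s'$ be the two adjacent servers of $C_1$ flanking the point, with $s$ the nearer one, the point sits at distance $m\delta$ from $s$ and at distance at least $m\delta$ from $s'$, so the gap satisfies $d(s,s')\ge 2m\delta$. I would then exhibit, inside this gap and strictly closer than $m\delta$, the $m-1$ grid points at distances $\delta,\dots,(m-1)\delta$ from $s$ together with the $m-1$ grid points at distances $\delta,\dots,(m-1)\delta$ from $s'$; these two families are disjoint precisely because $d(s,s')\ge 2m\delta$. Together with the $k$ servers (all at distance $0$), this produces at least $k+2(m-1)$ points strictly preceding $\cP^\delta_{C_1}[i]$ in the ordered sequence, whence $i\ge k+2m-1$.

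Finally I would combine the two facts: since the profile of $C^*$ is non-decreasing in the index and $i\ge k+2m-1$, we get $k\cdot\minD(\cP^\delta_{C_2}[i]) \ge k\lceil (2m-1)/(2k)\rceil\,\delta$, and I would conclude $\ge m\delta=\minD(\cP^\delta_{C_1}[i])$. The hard part here is an apparent off-by-one: the counting only gives $m\le (i-k+1)/2$, which naively yields a ratio a hair above $k$. I expect to resolve this by keeping the ceiling intact, observing that $k\lceil (2m-1)/(2k)\rceil \ge (2m-1)/2 = m-\tfrac12$ while the left-hand side is an integer, and that the least integer exceeding $m-\tfrac12$ is $m$ itself. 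The circle case needs no new idea, as the gap counting is identical and $C^*$ has the same profile, so the conclusion carries over verbatim.
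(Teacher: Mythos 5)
Your proof is correct and follows essentially the same route as the paper's: both compare $C_1$ against the uniformly spaced best configuration $C^*$ (whose profile is $\lceil (i-k)/(2k)\rceil\delta$) and exploit that an interior gap accumulates two points per distance level, so that the interior profile grows only as $\lceil (i-k)/2\rceil\delta$. The only difference is presentational — you lower-bound the index $i$ in terms of the distance $m\delta$ and handle the resulting off-by-one via integrality, whereas the paper upper-bounds the distance in terms of the index — but these are the same counting argument read in opposite directions.
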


\begin{proof}
From Theorem~\ref{thm:biMatch}, the best configuration $C^*$ places the servers uniformly along the line and, ignoring the first $k$ values at $0$, $\minD(\cP^{\delta}_{C^*}[i])$ increase by $\delta$ every $2k$ steps (i.e., $\minD(\cP^{\delta}_{C^*}[\ell - 2k]) + \delta = \minD(\cP^{\delta}_{C^*}[\ell])$ for all $\ell > 3k$). Hence, $\minD(\cP^{\delta}_{C^*}[i]) \ge \lceil (i-k)/2k \rceil \delta$ and the claim follows if $\minD(\cP^{\delta}_{C}[i]) \le \lceil (i-k)/2 \rceil \delta$, which we show in the following.
 
For any other configuration $C$, let $\cP^\delta_{C(s,t)} \subset \cP^\delta_{C}$ be the set of points between the servers $s$ and $t$ in configuration $C$, ordered by the distance to the nearest server. As the points are between two servers, the values of $\minD(\cP^{\delta}_{C(s,t)}[i])$ begin at $\delta$ and increase by $\delta$ every $2$ steps (i.e., $\minD(\cP^{\delta}_{C(s,t)}[1]) = \minD(\cP^{\delta}_{C(s,t)}[2]) = \delta$, and $\minD(\cP^{\delta}_{C(s,t)}[\ell - 2]) + \delta = \minD(\cP^{\delta}_{C(s,t)}[\ell])$ for all $\ell > 1$). 
Hence, as $\cP^\delta_{C(s,t)} \subset \cP^\delta_{C}$, it follows that $\minD(\cP^{\delta}_{C}[i]) \le \lceil (i-k)/2 \rceil \delta$.
\end{proof}

\subsubsection{Completing the analysis }
\label{subsec:greedy.line}
We will now use the bijection $\pi$ as defined explicitly in Definition~\ref{def:ob}, so as
to establish our upper bounds on the bijective ratio. An important observation is the following.
\begin{observation}\label{obs:costGdyKCnt}
  For any $\delta > 0$ and any server configuration $C$, the cost of $\gdy$ to serve a request at point $\cP^\delta_C[i]$ is $\minD(\cP^\delta_C[i])$ 
  and the cost of \kCtr \ is $2\minD(\cP^\delta_C[i])$.
\end{observation}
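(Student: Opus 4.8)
The plan is to read both statements directly off the definitions of the two algorithms, using the fact that, by construction, $\minD(\cP^\delta_C[i])$ is exactly the distance from the point $\cP^\delta_C[i]$ to its nearest server in configuration $C$, i.e.\ the distance the closest server must travel to reach that point.

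For $\gdy$ the claim is immediate: by definition $\gdy$ serves a request by moving the server that is closest to it. Hence, starting from configuration $C$, serving a request at $\cP^\delta_C[i]$ moves the nearest server onto that point at a cost equal to the distance between them, namely $\minD(\cP^\delta_C[i])$.

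For $\kCtr$ the only subtlety is that $C$ must be read as the anchor configuration: since $\kCtr$ always returns its servers to their anchor points after each request, it is precisely in this configuration immediately before serving any request. I would first note, exactly as for $\gdy$, that moving the closest anchored server to $\cP^\delta_C[i]$ costs $\minD(\cP^\delta_C[i])$. By the specification of $\kCtr$, this server then retraces its path back to its anchor, incurring the same cost a second time; summing the outbound and return legs gives a total of $2\minD(\cP^\delta_C[i])$. Since everything reduces to reading distances off the definitions, there is no genuine obstacle; the only point to keep straight is that for $\kCtr$ the relevant configuration is the fixed anchor configuration, so that the outbound and return distances coincide.
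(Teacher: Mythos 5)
Your proof is correct and takes essentially the same approach as the paper, which treats this as an immediate consequence of the definitions: \gdy\ pays the distance to the nearest server, and \kCtr\ pays that distance twice because the serving server returns to its anchor. Your explicit remark that for \kCtr\ the configuration $C$ must be read as the anchor configuration (the only configuration \kCtr\ ever occupies before a request) is a useful clarification that the paper leaves implicit.
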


From its statement, \kCtr \ anchors its servers in the best configuration under \omb. 
By definition, to serve a request, the algorithm moves a server to a request and back to its original position. 
Hence, we obtain that, for any $\delta > 0$ and the best server configuration $C^*$, the cost of serving a request for \kCtr \  is
$2\minD(\cP^\delta_{C^*}[i])$. This immediately implies the following:
\begin{theorem}
  \kCtr \ has an asymptotic bijective ratio of at most $2$ for the $k$-server problem on the line and the circle.
  \label{thm:kcenter.2}
\end{theorem}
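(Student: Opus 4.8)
The plan is to compare \kCtr \ directly against an arbitrary algorithm $B$ (online or offline) via the ordered bijection $\pi=\pi^{\kCtr,B}$ of Definition~\ref{def:ob}, exploiting the fact that \kCtr \ is \emph{static}: after a one-time relocation of its servers into the best configuration $C^*$, it serves every request from $C^*$. First I would use Observation~\ref{obs:costGdyKCnt} to record that when \kCtr \ serves a request whose rank among the points ordered by distance from $C^*$ equals $j$ (that is, the request is $\cP^\delta_{C^*}[j]$), it pays exactly $2\minD(\cP^\delta_{C^*}[j])$. The cost of the initial relocation into $C^*$ from the given starting configuration is bounded by a constant independent of $n$, and this is precisely the source of the additive term, hence of the \emph{asymptotic} qualifier.

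Next I would analyze the bijection request by request. Fixing $\sigma$, consider its $i$-th request $\sigma[i]=\cP^\delta_{C^*}[j]$. By the definition of the ordered bijection, $\pi_i$ maps $\sigma[i]$ to the point of equal rank $\cP^\delta_{C_B}[j]$, where $C_B$ is $B$'s configuration immediately before it serves $\pi(\sigma)[i]$. Since $B$ must bring one of its servers onto $\pi(\sigma)[i]$, it incurs at that step a cost of at least $\minD(\cP^\delta_{C_B}[j])$, independently of whether $B$ is lazy, online, or offline; offline foresight only influences how $C_B$ was reached, not this per-step lower bound. This trivial bound is the only place where the nature of $B$ enters the argument.

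I would then combine these observations with the optimality of $C^*$. By Lemma~\ref{lem:bestConf}, $C^*$ is the best configuration, so $\minD(\cP^\delta_{C^*}[j]) \le \minD(\cP^\delta_{C_B}[j])$ for every rank $j$ and every configuration $C_B$ in which $B$ can be. Consequently, for each request,
\[
  \kCtr(\sigma[i]) = 2\minD(\cP^\delta_{C^*}[j]) \le 2\minD(\cP^\delta_{C_B}[j]) \le 2\,B(\pi(\sigma)[i]).
\]
Summing over all $i$ and adding the constant setup cost $c$ gives $\kCtr(\sigma) \le 2\,B(\pi(\sigma))+c$, uniformly over all $B$ and all $\sigma$, which is the claimed asymptotic bijective ratio of $2$.

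I do not expect a real obstacle: because \kCtr \ is memoryless and static, the decoupling machinery of Lemma~\ref{lem:genApproach} is not needed, and the whole proof collapses to the rank-preserving property of the ordered bijection together with the domination $\minD(\cP^\delta_{C^*}[\cdot]) \le \minD(\cP^\delta_{C_B}[\cdot])$ supplied by Lemma~\ref{lem:bestConf}. The only points meriting care are confirming that the map $\pi$ assembled request-by-request is a genuine bijection of $\In$ (each $\pi_i$ is a permutation of the metric's points, so the induced map on sequences is bijective), and isolating the initial relocation cost so that the statement is phrased asymptotically rather than strictly.
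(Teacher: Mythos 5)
Your proposal is correct and follows essentially the same route as the paper, which states the result as an immediate consequence of Observation~\ref{obs:costGdyKCnt} (the per-request cost $2\minD(\cP^\delta_{C^*}[j])$ of \kCtr), Lemma~\ref{lem:bestConf} (domination by the best configuration $C^*$), and the rank-preserving ordered bijection of Definition~\ref{def:ob}, with the initial anchoring cost absorbed into the additive constant. You have merely written out explicitly the per-request chain of inequalities that the paper leaves implicit.
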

We will now use the framework of Lemma~\ref{lem:genApproach}, and we begin by applying it to the circle metric. 
Let $B$ be any online or offline algorithm. First note that by the definitions of \gdy \ and \kCtr, we have the following inequalities.
\begin{align}\label{eq:prop1} 
\gdy(\sigma[i] | B(\sigma[1,i-1])) &\le B(\sigma[i]) \\
\label{eq:propkCtr}
2\cdot \gdy(\sigma[i] | \kCtr(\sigma[1,i-1])) &\le \kCtr(\sigma[i]) 
\end{align} 

\begin{theorem}
\label{thm:circleK}
 \gdy \ has a bijective ratio of at most $k$ for the $k$-server problem on the circle.
 Moreover, \gdy \ has a bijective ratio of at most $k/2$ for the $k$-server problem on the circle against \kCtr.
\end{theorem}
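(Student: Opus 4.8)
The plan is to apply Lemma~\ref{lem:genApproach} twice, in both cases with $A=\gdy$ and with $\pi=\pi^{\gdy,B}$ the ordered bijection of Definition~\ref{def:ob}, taken between the configurations of the two algorithms \emph{immediately before} each request is served. The reason the ordered bijection is the right choice is a rank-matching property: if $\sigma[i]$ occupies rank $j$ in $\gdy$'s configuration $C_A$, that is $\sigma[i]=\cP^\delta_{C_A}[j]$, then by construction its image $\pi_i(\sigma[i])$ occupies the same rank $j$ in $B$'s configuration $C_B$, that is $\pi_i(\sigma[i])=\cP^\delta_{C_B}[j]$. Combined with Observation~\ref{obs:costGdyKCnt}, this lets me rewrite each per-request cost as a value of the form $\minD(\cP^\delta_{\cdot}[j])$ at a common rank, so that each hypothesis of Lemma~\ref{lem:genApproach} collapses to a comparison of two such $\minD$-values.

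For the bound of $k$, I would take an arbitrary (online or offline) algorithm $B$ and set $d=1$ and $c=k$. Hypothesis~(i) is then exactly inequality~\eqref{eq:prop1}. For hypothesis~(ii), writing $j$ for the rank of $\sigma[i]$ in $C_A$, Observation~\ref{obs:costGdyKCnt} gives $\gdy(\sigma[i])=\minD(\cP^\delta_{C_A}[j])$ and $\gdy(\pi(\sigma)[i]\mid B(\pi(\sigma)[1,i-1]))=\minD(\cP^\delta_{C_B}[j])$, while $B(\pi(\sigma)[i])\ge \minD(\cP^\delta_{C_B}[j])$ because $B$ must move some server onto a request whose nearest server sits at distance $\minD(\cP^\delta_{C_B}[j])$. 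Hence hypothesis~(ii) follows once I establish $\minD(\cP^\delta_{C_A}[j])\le k\cdot\minD(\cP^\delta_{C_B}[j])$. This is exactly Lemma~\ref{cor:btw2servers}: on the circle every point lies between two adjacent servers of $C_A$, so its hypothesis is met at every rank $j$. Lemma~\ref{lem:genApproach} then yields $\gdy(\sigma)\le k\cdot B(\pi(\sigma))$ for every $B$, giving the claimed ratio.

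For the bound of $k/2$ against \kCtr, I would set $B=\kCtr$, $d=\tfrac12$, and $c=\tfrac{k}{2}$. Hypothesis~(i) is now exactly~\eqref{eq:propkCtr}. For hypothesis~(ii) I would exploit that \kCtr keeps its servers anchored at the best configuration $C^*$, so $C_B=C^*$ throughout, and that by Observation~\ref{obs:costGdyKCnt} it pays exactly $\kCtr(\pi(\sigma)[i])=2\minD(\cP^\delta_{C^*}[j])$. As before the left-hand side of~(ii) equals $\minD(\cP^\delta_{C_A}[j])-\minD(\cP^\delta_{C^*}[j])$, so~(ii) reduces to $\minD(\cP^\delta_{C_A}[j])\le 2c\cdot\minD(\cP^\delta_{C^*}[j])=k\cdot\minD(\cP^\delta_{C^*}[j])$, which is again Lemma~\ref{cor:btw2servers}. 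The extra factor of two that is saved because \kCtr pays $2\minD$ rather than $\minD$ is precisely what converts the bound of $k$ into $k/2$.

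The step I expect to be the main obstacle is the bookkeeping around the ordered bijection: I must verify that the two per-request costs appearing in hypothesis~(ii) really are $\minD$-values at the \emph{same} rank. This forces the bijection to be read off the configurations reached just before serving $\sigma[i]$ and $\pi(\sigma)[i]$ (not after), and requires confirming that $\sigma[i]$ and its image $\pi_i(\sigma[i])$ occupy matched positions of $\cP^\delta$ under $\omb$, together with the inductive consistency of building $\pi$ request-by-request. Once this alignment is secured, both parts collapse to the single inequality $\minD(\cP^\delta_{C_A}[j])\le k\cdot\minD(\cP^\delta_{C_B}[j])$, and the only genuinely geometric input---that on the circle every point lies between two adjacent servers, which yields the factor $k$ rather than the factor $2k$ of Lemma~\ref{thm:biMatch} for the line---supplies the rest.
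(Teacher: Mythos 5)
Your proposal is correct and follows essentially the same route as the paper: Lemma~\ref{lem:genApproach} with the ordered bijection of Definition~\ref{def:ob}, inequalities~\eqref{eq:prop1} and~\eqref{eq:propkCtr} supplying hypothesis~(i) with $d=1$ and $d=\tfrac12$ respectively, and Lemma~\ref{cor:btw2servers} (applicable at every rank since on the circle every point lies between two adjacent servers) supplying hypothesis~(ii). Your explicit rank-matching bookkeeping is just a more detailed unpacking of the step the paper compresses into adding \eqref{eq:Gdy1Circ} and \eqref{eq:Gdy2Circ}.
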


\begin{proof}
We begin by proving the first part of the theorem, the second part follows along the same lines.

\medskip
\noindent{\emph{Part 1}:}
From~\eqref{eq:prop1}, we have that 
\begin{equation}\label{eq:Gdy1Circ}
  (k-1)B(\pi(\sigma)[i]) \ge (k-1)\gdy(\sigma[i] | B(\sigma[1,i-1])) ~.
\end{equation} 
In addition, from Lemma~\ref{cor:btw2servers}, it follows that 
\begin{equation}\label{eq:Gdy2Circ}
  k \cdot  \gdy(\pi(\sigma)[i] | B(\pi(\sigma)[1,i-1])) \ge \gdy(\sigma[i])
\end{equation}
since every request on the circle is located between two servers. Adding \eqref{eq:Gdy1Circ} and \eqref{eq:Gdy2Circ}, we thus obtain that, 
for any $\sigma$, 
$$(k-1)B(\pi(\sigma)[i]) \ge \gdy(\sigma[i]) - \gdy(\pi(\sigma)[i] | B(\pi(\sigma)[1,i-1])~.$$
Using this with~\eqref{eq:prop1} and applying Lemma~\ref{lem:genApproach}, we obtain the result.

\bigskip
\noindent{\emph{Part 2}:}
Inequality~\eqref{eq:propkCtr} implies 
\begin{equation}\label{eq:kCtr1Circ}
(k-1)\kCtr(\pi(\sigma)[i]) \ge 2(k-1) \cdot \gdy(\sigma[i] | \kCtr(\sigma[1,i-1]))~.
\end{equation}
From Lemma~\ref{cor:btw2servers} it follows that
\begin{equation}\label{eq:kCtr2Circ} 
  2k \cdot \gdy(\sigma[i] | \kCtr(\sigma[1,i-1])) \ge 2 \cdot \gdy(\sigma[i])
\end{equation}
since every request on the circle is located between two servers. 
Adding \eqref{eq:kCtr1Circ} and \eqref{eq:kCtr2Circ}, we thus obtain that, for any $\sigma$,
  $$(k-1)\kCtr(\pi(\sigma)[i]) \ge 2 (\gdy(\sigma[i]) - \gdy(\pi(\sigma)[i] | \kCtr(\pi(\sigma)[1,i-1])))~.$$
  Using this with~\eqref{eq:propkCtr}, and applying Lemma~\ref{lem:genApproach}, we obtain the result.
\end{proof}

We now move to the line metric. We first note that by combining~\eqref{eq:prop1} 
and Lemma~\ref{thm:biMatch} and by applying Lemma~\ref{lem:genApproach}, we obtain a {\em strict} 
bijective ratio of $2k$ for $\gdy$ on the line. We will also show a stronger, albeit asymptotic  
bound of $4k/3$, using amortized analysis based on Lemma~\ref{lem:amort}. 
First, we show, 
a general bound for $\gdy$ on the line as compared to some 
algorithm $B$. 
Later, we will choose appropriately the parameters in the statement of the lemma to compare $\gdy$ to an arbitrary algorithm and to \kCtr\ in particular. 
\begin{lemma}
\label{lem:genLineBetter}
    Let $B$ be any algorithm for the $k$-server problem on the line such that \\
$\gdy(\sigma[i] | B(\sigma[1,i-1])) \le d B(\sigma[i])$ for $d>0$. 
Suppose that there exist $c_1,c_2>0$ such that in the configuration of $\gdy$ right before $\sigma_i$, 
$$ \gdy(\sigma[i]) \le \begin{cases}
c_1 \gdy(\sigma[i] | B(\pi(\sigma)[1,i-1])), &\text{if  $\sigma[i]$ is between two servers} \\
c_2 \gdy(\sigma[i] | B(\pi(\sigma)[1,i-1])), &\text{otherwise.}
\end{cases}$$
Then, for any $\sigma$, $\gdy(\sigma) \le d \cdot \frac{2c_2c_1}{c_2 + c_1} \cdot B(\pi(\sigma)) + \eta~,$
where $\eta$ is a constant that depends on the diameter of the line. 
\end{lemma}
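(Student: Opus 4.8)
The plan is to invoke the amortized decoupling Lemma~\ref{lem:amort} with $A=\gdy$ and target ratio $c=\frac{2c_1c_2}{c_1+c_2}$, the harmonic mean of $c_1$ and $c_2$; then $c\cdot d=d\cdot\frac{2c_2c_1}{c_2+c_1}$ is exactly the claimed leading factor and the additive term $\Phi_0-\Phi_n$ will play the role of $\eta$. Hypothesis (i) of Lemma~\ref{lem:amort}, namely $\gdy(\sigma[i]\mid B(\sigma[1,i-1]))\le d\,B(\sigma[i])$, is given verbatim. The entire burden is therefore to exhibit a potential $\Phi$ for which hypothesis (ii), $a_i:=\gdy(\sigma[i])+\Delta\Phi_i\le c\cdot\gdy(\pi(\sigma)[i]\mid B(\pi(\sigma)[1,i-1]))$, holds request by request. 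Writing $G_i:=\gdy(\pi(\sigma)[i]\mid B(\pi(\sigma)[1,i-1]))$, the given two-case hypothesis reads $\gdy(\sigma[i])\le c_1 G_i$ when $\sigma[i]$ lies between two servers of $\gdy$, and $\gdy(\sigma[i])\le c_2 G_i$ otherwise.

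The key structural observation about $\gdy$ on the line concerns the span $s$ of its configuration, i.e., the distance between its leftmost and rightmost server. When $\gdy$ serves a request outside the span, it necessarily uses an extreme server, so the span grows by exactly the cost $\gdy(\sigma[i])$; when it serves a request between two servers, the span either stays fixed (an interior server moves) or shrinks by exactly the cost (an extreme server moves inward), so in every inside case the span decreases by at most $\gdy(\sigma[i])$. This motivates setting $\Phi:=\beta(\mathrm{Diam}-s)$ for a constant $\beta\in[0,1)$ to be fixed, where $\mathrm{Diam}$ is the diameter of the line. Then $\Delta\Phi_i=-\beta\,\gdy(\sigma[i])$ on outside requests, while $\Delta\Phi_i\in[0,\beta\,\gdy(\sigma[i])]$ on inside requests, so the potential pays down expensive outside moves using credit accumulated on the cheaper inside moves.

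I would then verify hypothesis (ii) by a short case check. On an outside request, $a_i=(1-\beta)\gdy(\sigma[i])\le(1-\beta)c_2 G_i$, which is at most $c G_i$ exactly when $\beta\ge\frac{c_2-c_1}{c_1+c_2}$. On an inside request, $a_i\le(1+\beta)\gdy(\sigma[i])\le(1+\beta)c_1 G_i$, which is at most $c G_i$ exactly when $\beta\le\frac{c_2-c_1}{c_1+c_2}$. The two constraints force the single value $\beta=\frac{c_2-c_1}{c_1+c_2}$ (using $c_1\le c_2$, as holds in our applications, e.g.\ $c_1=k$, $c_2=2k$, giving $c=\tfrac{4k}{3}$), and this is precisely where the harmonic mean surfaces: it is the unique $c$ for which one constant $\beta$ simultaneously absorbs the surplus of the outside case and the deficit of the inside case. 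With (i) and (ii) in hand, Lemma~\ref{lem:amort} yields $\gdy(\sigma)\le cd\,B(\pi(\sigma))+\Phi_0-\Phi_n$, and since $0\le\Phi\le\beta\,\mathrm{Diam}$ we may take $\eta:=\Phi_0-\Phi_n\le\beta\,\mathrm{Diam}$, a constant depending only on the diameter.

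The main obstacle I anticipate is pinning down the exact behaviour of the span under $\gdy$'s moves: arguing that an inside request never enlarges the span and that an outside request is always served by an extreme server, so that the span grows by exactly the cost. Getting these geometric claims precisely right, including degenerate cases such as requests landing on a server position, is what legitimizes the tight constant $\beta$ and hence the harmonic mean $c$; the remainder is the bookkeeping sketched above.
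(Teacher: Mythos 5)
Your proof is correct and follows essentially the same route as the paper: the paper uses the potential $-\alpha\sum_{i}d(g_i,g_{i-1})$ with $\alpha=\frac{c_2-c_1}{c_2+c_1}$, which is your $\beta(\mathrm{Diam}-s)$ up to an additive constant (the sum of adjacent-server gaps on the line is exactly the span), and it performs the same case analysis on how the span changes before invoking Lemma~\ref{lem:amort}. Your two-case grouping (inside/outside) cleanly subsumes the paper's three cases, and your derivation of the unique $\beta$ forcing the harmonic mean matches the paper's computation $c_2(1-\alpha)=c_1(1+\alpha)$.
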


\begin{proof}
This proof makes use of a potential function argument. Define $\alpha$ to be equal to $\frac{c_2-c_1}{c_2+c_1}$. 
We define the potential function $\Phi := -\alpha \sum_{i=1}^{k-1} d(g_i,g_{i-1})$, i.e., $-\alpha$ times the sum of the distances between adjacent servers.  Let $a_i$ denote the amortized cost for $\sigma[i]$, i.e., $a_i = \gdy(\sigma[i]) + \Delta\Phi_i$, where $\Delta\Phi_i = \Phi_i - \Phi_{i-1}$. We distinguish between the following cases, concerning each request $\sigma[i]$. 
\begin{packed_item}
\item{\em Case 1: $\sigma[i]$ is between two non-outer-most servers.} \\
In this case, the change in potential is $\Delta\Phi_i = 0$ as the server that moves approaches one adjacent server by a distance of $B(\sigma[i])$ and moves away from its other adjacent server by the same distance. Hence, we have an amortized cost $$a_i = \gdy(\sigma[i]) \le c_1 \gdy(\sigma[i] | B(\pi(\sigma)[1,i-1]))~.$$
\item{\em Case 2: $\sigma[i]$  is between an end-point and an outer-most server.} \\
In this case, $\Delta\Phi_i = -\alpha\gdy(\sigma[i])$ and $$a_i = (1-\alpha) \gdy(\sigma[i]) \le c_2 (1-\alpha) \gdy(\sigma[i] | B(\pi(\sigma)[1,i-1]))~.$$
\item {\em Case 3: $\sigma[i]$ is between an outer-most server and its adjacent server.} \\
In this case, $\Delta\Phi_i = \alpha \gdy(\sigma[i])$ and $$a_i = (1 + \alpha) \gdy(\sigma[i]) \le c_1 (1+\alpha) \gdy(\sigma[i] | B(\pi(\sigma)[1,i-1]))~.$$
\end{packed_item}
Overall, for any $i$, $a_i \le c_1 (1+\alpha) \gdy(\sigma[i] | B(\pi(\sigma)[1,i-1]))$ as $c_2 (1 -\alpha) = c_1 (1+\alpha) \ge c_1$. The lemma follows by applying Lemma~\ref{lem:amort}.
\end{proof}

Using Lemma~\ref{lem:genLineBetter} and the properties of the $\omb$, we obtain the following bounds
for $\gdy$ against any algorithm (including offline), as well as $\kCtr$. 
\begin{theorem}
  Let $B$ be any algorithm for the $k$-server problem on the line. Then for any $\sigma$, $\gdy(\sigma) \le \frac{4k}{3} 
  B(\pi(\sigma)) + \eta ~,$ where $\eta$ is a constant. Moreover, 
 $\gdy(\sigma) \le \frac{2k}{3} \kCtr(\pi(\sigma)) + \eta$, where again $\eta$ is a constant. 
  \label{thm:4k.over.3}
\end{theorem}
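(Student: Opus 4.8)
The plan is to derive both inequalities as direct instantiations of Lemma~\ref{lem:genLineBetter}, so that the only real work is to pin down the three constants $d$, $c_1$, $c_2$ appearing in its hypotheses. In both parts I would take $\pi$ to be the ordered bijection $\pi^{\gdy,B}$ of Definition~\ref{def:ob} (with $B=\kCtr$ in the second part). The point of this bijection is that if $\sigma[i]$ is the $m$-th closest point to the servers of $\gdy$, then $\pi(\sigma)[i]$ is by construction the $m$-th closest point to the servers of $B$. Combined with Observation~\ref{obs:costGdyKCnt}, this identifies $\gdy(\sigma[i])=\minD(\cP^\delta_{\gdy}[m])$ and $\gdy(\sigma[i] \mid B(\pi(\sigma)[1,i-1]))=\minD(\cP^\delta_{B}[m])$, so the per-request comparison demanded by Lemma~\ref{lem:genLineBetter} reduces to comparing $\minD$ values at the same index across two configurations.

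For the first bound I would set $d=1$, which is exactly inequality~\eqref{eq:prop1}. The two remaining constants are read off from the configuration lemmas of Section~\ref{subsec:ordered.bijection}. When $\sigma[i]$ lies between two adjacent servers of $\gdy$, Lemma~\ref{cor:btw2servers} gives $\minD(\cP^\delta_{\gdy}[m]) \le k\,\minD(\cP^\delta_{B}[m])$, hence $c_1=k$; when $\sigma[i]$ lies outside the outermost servers, the only available estimate is the worst-versus-best bound of Lemma~\ref{thm:biMatch}, namely $\minD(\cP^\delta_{\gdy}[m]) \le 2k\,\minD(\cP^\delta_{B}[m])$, hence $c_2=2k$. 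Feeding these into the conclusion of Lemma~\ref{lem:genLineBetter} yields $\gdy(\sigma) \le d\cdot\frac{2c_2 c_1}{c_2+c_1}\,B(\pi(\sigma)) + \eta = \frac{2\cdot 2k\cdot k}{2k+k}\,B(\pi(\sigma)) + \eta = \frac{4k}{3}\,B(\pi(\sigma)) + \eta$, as claimed.

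For the second bound the constants $c_1=k$ and $c_2=2k$ are unchanged, since Lemmas~\ref{cor:btw2servers} and~\ref{thm:biMatch} hold for arbitrary pairs of configurations, and in particular for the pair consisting of the configuration of $\gdy$ and the best configuration that $\kCtr$ maintains. The only change is the value of $d$: inequality~\eqref{eq:propkCtr} gives $\gdy(\sigma[i] \mid \kCtr(\sigma[1,i-1])) \le \tfrac12\,\kCtr(\sigma[i])$, so $d=\tfrac12$. The conclusion of Lemma~\ref{lem:genLineBetter} then reads $\gdy(\sigma) \le \tfrac12\cdot\frac{4k}{3}\,\kCtr(\pi(\sigma)) + \eta = \frac{2k}{3}\,\kCtr(\pi(\sigma)) + \eta$, which is the second inequality.

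The step I expect to require the most care is not the arithmetic but the bookkeeping that makes the hypotheses of Lemma~\ref{lem:genLineBetter} literally applicable: verifying that the ``between two servers'' versus ``outside the outermost server'' dichotomy in that lemma is precisely the dichotomy governing whether Lemma~\ref{cor:btw2servers} (rate $k$) or Lemma~\ref{thm:biMatch} (rate $2k$) applies, and that both are measured relative to the configuration of $\gdy$ immediately before $\sigma[i]$, exactly as the lemma requires. One must also confirm that the additive term $\eta=\Phi_0-\Phi_n$ inherited from the potential-function argument of Lemma~\ref{lem:genLineBetter} is a genuine constant: it is bounded by $|\alpha|$ times the sum of inter-server distances, hence by a constant depending only on the diameter of the line, so the resulting bounds are asymptotic with a constant additive loss. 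Once these correspondences are checked, both inequalities follow immediately.
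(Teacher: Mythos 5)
Your proposal is correct and follows essentially the same route as the paper: both instantiate Lemma~\ref{lem:genLineBetter} with $d=1$ (from~\eqref{eq:prop1}), $c_1=k$ (from Lemma~\ref{cor:btw2servers}), and $c_2=2k$ (from Lemma~\ref{thm:biMatch}) to get $\frac{2c_2c_1}{c_2+c_1}=\frac{4k}{3}$, and then replace $d$ by $1/2$ via~\eqref{eq:propkCtr} for the \kCtr\ bound. Your additional bookkeeping about the ordered bijection and the constancy of $\eta$ is consistent with the paper's setup and adds nothing that conflicts with it.
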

\begin{proof}
We determine appropriate values for $d$, $c_1$, and $c_2$ in the statement of Lemma~\ref{lem:genLineBetter}. 
For the first part of the theorem, $d=1$ from~\eqref{eq:prop1}, $c_1 \le k$ from Lemma~\ref{cor:btw2servers},  and 
$c_2 \le 2k$ from Lemma~\ref{thm:biMatch}; hence  the first part of the theorem follows. For the second part of the theorem, 
$c_1$ and $c_2$ are as above, and, from~\eqref{eq:propkCtr}, we have $d = 1/2$.
\end{proof}

\section{The bijective ratio of the $k$-server problem on the star}
\label{sec:star}

In this section, we study the bijective ratio of the continuous $k$-server problem on the star. 
Here, a star consists of $m$ line segments
(called {\em rays}), not necessarily of the same length, 
which have a common origin called the {\em center}. 
One can think of such a metric as a transitional metric when moving from the line 
to trees that allows us to draw certain interesting conclusions concerning
the performance of algorithms under the bijective ratio. 
Similar to the line, we represent this metric by a {\em spider} graph, which
consists of a set of paths (of potentially different lengths) that intersect at the center, and in which all edges have the same length. 

Recall that on the line, Theorem~\ref{thm:kcenter.2} shows that \kCtr \ has a bijective ratio of 2,
whereas Theorems~\ref{thm:4k.over.3} and~\ref{lem:linek2} show that \gdy \ has a bijective ratio of $\Theta(k)$. 
In particular, our analysis of the bijective ratio of \kCtr \ matches its Max/Max ratio~\cite{BenBor94:maxmax}.
In contrast to our analysis of the bijective ratio of $\kCtr$ and $\gdy$, 
we show that on stars the bijective performance of these algorithms changes in a dramatic way. More precisely, in Theorem~\ref{thm:kctr.sucks.stars}, 
the bijective ratio of \kCtr \ is unbounded, while,
in Theorem~\ref{thm:starGdy}, we show that the bijective ratio of \gdy \ is at most $4k$. 
These results 
demonstrate that the bijective ratio is not only a generalization of Max/Max ratio, but it also classifies algorithms very differently
in terms of performance. 

\begin{theorem}
There exists a star $S$, and an online algorithm $A$ such that \kCtr \ has unbounded {\em asymptotic} bijective ratio against $A$ on $S$. 
\label{thm:kctr.sucks.stars}
\end{theorem}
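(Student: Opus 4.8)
The plan is to exhibit a concrete star together with $A=\gdy$ and to apply the counting criterion of Proposition~\ref{prop:lower.bound}, in its asymptotic form (with the additive constant absorbed into the threshold). I would take $S$ to be a star with $m=k+1$ rays of equal length, and restrict attention to request sequences whose requests lie at (or in small neighbourhoods of) the $m$ tips: in the discrete model each tip is a single requestable point, while in the genuinely continuous model one uses tip-neighbourhoods of a fixed positive measure together with the measure-preserving (interval-exchange) variant of the bijection. The decisive structural observation is that the $k$-Center objective forces \kCtr\ to squander a server at the centre. Indeed, since $m>k$, at least one ray carries no server in its outer part, so the nearest server to that ray's tip is reached through the centre; this distance is minimised, and equals exactly one ray-length, only when some server sits at the centre. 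Hence every configuration attaining the optimal covering radius keeps a server at the centre, so \kCtr\ can place servers on \emph{at most} $k-1$ tips. A request at any of the remaining (at least two) tips is then served by a round trip to the central server and costs exactly two ray-lengths, independently of the length of the input.

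By contrast, \gdy\ uses all $k$ servers productively: on any sequence requesting at most $k$ distinct tips it parks one server per tip after an initial transient and serves every subsequent request at zero cost, so $\gdy(\sigma)\le k\cdot\mathrm{diam}(S)=:c$, a constant. This already yields $|\{\sigma\in\In:\gdy(\sigma)\le c\}|\ge k^{n}$, counting the sequences supported on one fixed $k$-subset of tips. On the other hand, fix any target ratio $\rho$ and additive constant $c_0$ and set $T=\rho c+c_0$. For $\kCtr(\sigma)<T$ the number of requests landing on ``uncovered'' tips, each costing two ray-lengths, must be smaller than $s:=T/2$, which is a constant once $\rho$ is fixed, and all remaining requests must fall on the at most $k-1$ covered tips. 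Counting such sequences gives $|\{\sigma\in\In:\kCtr(\sigma)<T\}|\le \mathrm{poly}_\rho(n)\cdot(k-1)^{n}$, with $\mathrm{poly}_\rho(n)=O(n^{s}m^{s})$.

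Comparing the two estimates, the ratio of the cheap sets is at least $(k/(k-1))^{n}/\mathrm{poly}_\rho(n)$, which tends to infinity with $n$. Thus for every $\rho$ there is an $n_\rho$ such that for all $n\ge n_\rho$ we have $|\{\kCtr<\rho c+c_0\}|<|\{\gdy\le c\}|$, and the asymptotic form of Proposition~\ref{prop:lower.bound} shows that the bijective ratio of \kCtr\ against \gdy\ exceeds $\rho$. Since this holds for every $\rho$, the asymptotic bijective ratio of \kCtr\ against \gdy\ on $S$ is unbounded.

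I expect the main obstacle to be the first, structural step: pinning down \kCtr's anchoring well enough to guarantee that a full server is wasted at the centre \emph{regardless of how ties in the $k$-Center objective are broken}. It suffices, and is what I would prove, that no optimal-radius configuration can occupy all $m$ tips, so that \kCtr\ covers at most $k-1$ of them; this is precisely where the choice $m\ge k+1$ is used. A secondary technical point is the passage to the continuous metric: because tips have measure zero, I would replace them by fixed positive-measure neighbourhoods kept bounded away from the centre, so that the ``central cheapness'' of \kCtr\ never helps on these inputs, and replace the combinatorial count by its measure-theoretic analogue. The case $k=1$ is immediate and can be handled separately, since there \kCtr\ anchors at the centre and pays two ray-lengths on \emph{every} admissible request, giving $\kCtr(\sigma)=2n$ while \gdy\ serves each constant sequence at cost $O(1)$, which already precludes any bijective bound.
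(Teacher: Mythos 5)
There is a genuine gap, and it lies in the construction itself rather than in the bookkeeping. Your star ($m=k+1$ rays of equal length) forces \kCtr\ to anchor a server at the centre --- you prove this correctly --- but that is precisely the configuration that makes \kCtr\ \emph{good}, not bad. A centre server covers $m$ points per distance level, so by Corollary~\ref{cor:spiderGraph} any configuration containing a centre server satisfies $\minD(\cP^\delta_C[i])\le 2k\,\minD(\cP^\delta_{C'}[i])$ for every other configuration $C'$; feeding this into the ordered-bijection argument behind Theorem~\ref{thm:kcenter.2} shows that on your star \kCtr\ has bijective ratio at most $4k$ against \emph{every} algorithm. Hence no choice of $A$ can witness unboundedness there. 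The paper's construction is engineered to produce the opposite situation: a single long ray of length $4kd-d$ forces the $k$-Center optimum to string all anchors along that ray, \emph{away} from the centre, while $m-1$ short rays with $m=(kd)^3$ make the centre the location that covers $\Theta(m)$ points per unit distance; the comparison algorithm $A$ is then an anchoring algorithm with one server at the centre, and a counting argument shows that the $j$-th cheapest sequence of \kCtr\ costs $\Omega(d)$ times the $j$-th cheapest sequence of $A$.

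Two further problems would persist even with a better star. First, your bound on $|\{\sigma\in\In:\kCtr(\sigma)<T\}|$ with a \emph{constant} threshold $T$ accounts only for requests at tips, but Proposition~\ref{prop:lower.bound} counts over all of $\In$: requests at distance $O(T/n)$ from any \kCtr\ anchor are also cheap, and in the discretised (or continuous, measure-theoretic) model such sequences vastly outnumber $(k-1)^n$, so the inequality $|\{\kCtr<T\}|\le\mathrm{poly}(n)\,(k-1)^n$ is false. The paper avoids this by taking cost thresholds proportional to $n$ (namely $2\varphi n$ and $2cn$), so that ``cheap'' means within a fixed distance of an anchor. Second, the claim that \gdy\ parks one server per requested tip and thereafter serves at zero cost depends on tie-breaking and the initial configuration: on an equal-length star every unoccupied tip is at distance exactly $2L$ from every tip-parked server, so \gdy\ can ping-pong one server among the requested tips and pay $\Theta(nL)$ on sequences confined to $k$ tips. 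The paper sidesteps both issues by taking $A$ to be an explicit anchoring algorithm rather than \gdy.
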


\begin{proof}
Consider a star $S$ that consists of $m-1$ rays of length $d$ and one longer ray of length $4kd - d$. 
For this star, \kCtr \ anchors its servers on the long ray (see Figure~\ref{fig:kCtrSpider}). 
More specifically, the first server is placed at a distance $d$ from the center of the star with the remaining servers placed along the long 
ray with a spacing of $4d$ between them.
We also define an algorithm $A$ that anchors a server at the center of $S$, and the remaining $k-1$ servers
as in Figure~\ref{fig:kCtrSpider}. Similar to \kCtr, $A$ serves a request with the closest server, which then returns it to its anchor position. 
For this star, we show that there exist integers $j$ and $n$ such that the $j$-th cheapest sequence of \kCtr \ (among sequences in ${\cal I}_n$)
is at least $\Omega(d)$ times as costly as the $j$-th cheapest sequence of $A$. 

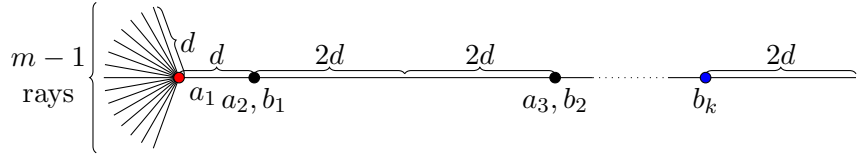
\begin{figure}[htb!]
  \centering
  \begin{tikzpicture}[scale=1]

    \foreach \a in {110,120,...,250}
      \draw (\a:0) -- (\a:1);

    \draw[decorate,decoration={brace}] (0.05,0) +(110:1) to node[midway,right] {$d$} (0.05,0);
    \draw[decorate,decoration={brace}] (-1.1,-1) to node[midway,left,align=center] {$m-1$ \\ rays} (-1.1,1);
    \draw[decorate,decoration={brace},yshift=0.05cm] (0,0) to node[midway,above] {$d$} (1,0);
    \draw[decorate,decoration={brace},yshift=0.05cm] (1,0) to node[midway,above] {$2d$} (3,0);
    \draw[decorate,decoration={brace},yshift=0.05cm] (3,0) to node[midway,above] {$2d$} (5,0);
    \draw[decorate,decoration={brace},yshift=0.05cm] (7,0) to node[midway,above] {$2d$} (9,0);

    \draw (0,0) -- (5.5,0);
    \draw[dotted] (5.5,0) -- (6.5,0);
    \draw (6.5,0) -- (9,0);
    \draw[fill=red] (0,0) node[below right] {$a_1$} circle (2pt);
    \draw[fill=black] (1,0) node[below] {$a_2,b_1$} circle (2pt);
    \draw[fill=black] (5,0) node[below] {$a_3,b_2$} circle (2pt);
    \draw[fill=blue] (7,0) node[below] {$b_k$} circle (2pt);

  \end{tikzpicture}
  \caption{An illustration of the lower bound construction for the \kCtr \ algorithm.  Here, we denote by $a_i$, $b_i$ the servers of $A$
  and \kCtr, respectively.}
  \label{fig:kCtrSpider}  
\end{figure}

We assume that the anchor position of the servers for algorithm~$A$ is the initial configuration of servers for the two algorithms. 

We  observe that after the initial anchoring of the servers for $A$ and $\kCtr$, each requested point always incurs the same cost for 
$A$, and always incurs the same cost for $\kCtr$. 

For some $\varphi \le d$ to be determined later, let $S^\kCtr_{\varphi,n}$ be the sequences of cost at most $2\varphi n$ for $\kCtr$. For a given sequence $\sigma \in S^\kCtr_{\varphi,n}$, let $\eps$ be the number of requests that have cost more than $2d$. Hence, $2\eps d \le 2\varphi n \iff \eps \le \frac{\varphi n}{d}$. Based on the anchor points for $\kCtr$, there are $2kd+k$ points with cost at most $2d$ and $N-2kd-k$ point with cost more than $2d$, where $N$ is the total number of nodes in the metric space.
We can bound from above the number of sequences in $S^\kCtr_{\varphi,n}$ as follows. (For this recall that the cost of a request is twice the distance to the nearest server since the server will be moved back to its original position afterwards.) Let 
\begin{align}
|S^\kCtr_{\varphi,n}| &\le \sum_{\eps=0}^{\frac{\varphi n}{d}} \binom{n}{\eps} (2kd+k)^{n-\eps}(N-2kd-k)^{\eps} \notag \\
             &\le 2^n \max_{\{0,\frac{\varphi n}{d}\}}\{(2kd+k)^{n-\eps}(md + 2kd)^{\eps}\} \notag \\
             &\le 2^n (2kd+k)^{n-\frac{\varphi n}{d}}(md + 2kd)^{\frac{\varphi n}{d}}~,\text{for $m > k$,} \notag \\
             &\le 2^n (3kd)^{n-\frac{\varphi n}{d}}(3md)^{\frac{\varphi n}{d}} \notag \\
             &\le (6kd)^{n}(3md)^{\frac{\varphi n}{d}} ~.\label{eq:ubSeqsK}
\end{align} 

Similarly, for some positive constant $c \ll d$, let $S^A_{c,n}$ be the sequences of cost at most $2cn$ for $A$. We can bound from below the number of sequences in $S^A_{c,n}$ as follows.
\begin{equation}\label{eq:lbSeqsA}
|S^A_{c,n}| \ge (c\cdot(m+2k-2)+k)^n \ge (cm)^n~.
\end{equation}

Setting $m = (kd)^3$ and $\varphi = d/3$ and using \eqref{eq:ubSeqsK}, we get that $|S^\kCtr_{d/3,n}| = 6^n3^{\frac{n}{3}}m^n$. For $c \ge 9$, $|S^A_{9,n}| > |S^\kCtr_{d/3,n}|$. Hence, for $j = |S^\kCtr_{d/3,n}|$, the bijective ratio for the $j$-th cheapest sequence is at least $d/27$.

This implies that the bijective ratio of \kCtr \ against $A$ is at least $\Omega(d)$ and, hence,
unbounded. 
\end{proof}

We define our bijection $\pi$ according to Definition~\ref{def:ob}.
Note that in the star, unlike the line and the circle, a best configuration, as defined in Section~\ref{subsec:ordered.bijection}
may not necessarily exist. However, the following corollary (which follows from Lemma~\ref{thm:biMatch}) shows that there exists a configuration that
is good enough. 

\begin{corollary}\label{cor:spiderGraph}
  Let $C$ be a configuration with a server at the centre. For any $\delta > 0$, $\cP^\delta_{C}[i] \le 2k\cP^\delta_{C'}[i]$, where $C'$ is any other configuration.
\end{corollary}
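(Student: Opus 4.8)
The plan is to reduce the claim to a statement about the single extremal configuration that places all servers at the center, and then to compare coverage counts. First I would exploit the hypothesis that $C$ has a server at the center: for every point $p$ of the metric, its distance to the nearest server of $C$ is at most its distance to the center. Thus the per-point distances realized by $C$ are pointwise dominated by the distance-to-center function, which is exactly the min-distance function of the configuration $C_0$ that places all $k$ servers at the center. Since a pointwise inequality between two functions on the same set of points is preserved under sorting, we get $\minD(\cP^\delta_C[i]) \le \minD(\cP^\delta_{C_0}[i])$ for every $i$, so it suffices to prove the corollary with $C$ replaced by $C_0$.

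Next I would recast everything in terms of coverage counts. Writing $N_{C}(r)$ for the number of points within distance $r$ of some server of a configuration $C$, we have $\minD(\cP^\delta_C[i]) = \min\{r : N_C(r) \ge i\}$. Consequently it is enough to establish the single inequality $N_{C_0}(2kr) \ge N_{C'}(r)$ for every radius $r$ and every configuration $C'$: whenever $N_{C'}(r) \ge i$ this yields $N_{C_0}(2kr) \ge i$, i.e.\ $\minD(\cP^\delta_{C_0}[i]) \le 2k\,\minD(\cP^\delta_{C'}[i])$, which is the desired bound.

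To prove this coverage inequality I would compute $N_{C_0}(R)$ exactly as the number of points within distance $R$ of the center, namely $\sum_j \min(R,L_j)/\delta$ summed over the rays (here $L_j$ denotes the length of ray $j$), and then bound $N_{C'}(r)$ by summing the coverage of its $k$ servers. The key estimates are that a single server of $C'$ reaches points on rays other than its own only by routing through the center, so its cross-ray coverage within radius $r$ is itself controlled by a center term, while on its own ray it gains at most two points per $\delta$ of radius. Combining these contributions over the $k$ servers and comparing with $N_{C_0}(2kr)$ is intended to produce the factor $2k$; this is precisely the star analogue of the worst-versus-best growth-rate comparison underlying Lemma~\ref{thm:biMatch}, where a single server gains two points per $\delta$-increment while the center gains one point per ray.

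The main obstacle is the branching at the center. Unlike on the line, a single server of $C'$ can cover points on every ray simultaneously, so the comparison does not split cleanly ray-by-ray; one must charge each server's cross-ray coverage to the center term and, at the same time, handle rays that are already fully covered (saturated) at small radius, where a naive $k$-fold count over-counts the points that are actually available. Making these two effects combine into exactly $2k$ is the crux of the argument, and I expect the bound to be tight: a star with many short rays and one long ray (as in the construction used for \kCtr) forces the ratio up to $2k$ at an intermediate index. This tightness also explains why the hypothesis that $C$ has a server at the center is indispensable, since without it the distance-to-center bound, and hence the entire reduction, breaks down.
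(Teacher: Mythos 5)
Your opening reduction is where the argument breaks. It is true that a server at the centre gives the pointwise bound $\minD_C(p)\le d(p,c)$ and that sorting preserves pointwise domination, so $\minD(\cP^\delta_C[i])\le \minD(\cP^\delta_{C_0}[i])$ where $C_0$ places all $k$ servers at the centre. But the statement you reduce to, $\minD(\cP^\delta_{C_0}[i])\le 2k\,\minD(\cP^\delta_{C'}[i])$, is false: already on a single ray (a line with the centre at its left end), $C_0$ has sorted distances $0,\delta,2\delta,\dots$, i.e.\ $\minD(\cP^\delta_{C_0}[i])=(i-1)\delta$, whereas the uniformly spaced $C'$ has $\minD(\cP^\delta_{C'}[i])=\lceil (i-k)/(2k)\rceil\delta$; at $i=3k$ this gives $(3k-1)\delta$ versus $2k\delta$. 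Equivalently, in your coverage-count language, $k$ disjoint radius-$r$ balls of $C'$ contain $k(2r/\delta+1)=2kr/\delta+k$ grid points while $N_{C_0}(2kr)=2kr/\delta+1$. The $k-1$ missing points are exactly the ones supplied by the other $k-1$ servers of $C$, which occupy distinct points (hence contribute $k$ zeros to the sorted sequence and extend the centre ball); your reduction discards them, and since the constant $2k$ is tight already on the line there is no slack to absorb that loss. A correct proof must keep $C$ itself in the accounting --- this is how Lemma~\ref{thm:biMatch}, which the paper cites for this corollary, handles the line: its worst configuration has $k$ servers at $k$ distinct points near one end, not $k$ coincident servers.

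The second gap is the one you flag yourself: the step that is supposed to produce the factor $2k$ --- combining the per-server own-ray growth of two points per $\delta$ with the cross-ray coverage charged to the centre ball --- is deferred as ``the crux''. It does not follow from the natural estimates: between radii $r$ and $2kr$ the centre ball gains only $(2k-1)r/\delta$ points on a single long ray, while the own-ray excess of $k$ servers placed on that ray can reach $2kr/\delta$, so the direct comparison falls short by $r/\delta$; closing it requires a ray-by-ray charging that also credits the points of the radius-$r$ centre ball that $C'$ fails to cover, together with the contribution of $C$'s non-centre servers. Your identification of the right ingredients (cross-ray coverage passes through the centre and is thus dominated by a centre ball of the same radius; tightness on a star with one long ray; necessity of the centre hypothesis) is sound, but as written the proposal neither establishes the corollary nor reduces it to a true statement. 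For what it is worth, the paper gives no explicit proof here --- it only asserts that the corollary follows from Lemma~\ref{thm:biMatch}, whose proof is carried out for the line --- so the careful bookkeeping above is genuinely what is missing.
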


We note that, unlike the line, the approach of Lemma~\ref{lem:genApproach} cannot yield a bounded bijective ratio for \gdy. 
This is because the best and worst configurations (as defined in Section~\ref{subsec:ordered.bijection}) can be unbounded with respect to $\minD$. 
We will thus resort to amortized analysis, using a different potential function than the one used in Lemma~\ref{lem:genLineBetter}.

\begin{theorem}
\label{thm:starGdy}
    Let $B$ be any algorithm for the $k$-server problem on the uniform spider graph. 
    For any $\sigma$, $\gdy(\sigma) \le 4k B(\pi(\sigma)) + \eta$, where $\eta$ is a constant.
\end{theorem}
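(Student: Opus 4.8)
The plan is to invoke the amortized framework of Lemma~\ref{lem:amort} with $A=\gdy$, using the ordered bijection $\pi=\pi^{\gdy,B}$ of Definition~\ref{def:ob}. Condition~(i) is already available as \eqref{eq:prop1}, which gives $d=1$; hence it suffices to exhibit a potential $\Phi$ (depending only on \gdy's configuration) for which condition~(ii) holds with $c=4k$, and for which $\Phi_0-\Phi_n$ is bounded independently of $n$. Writing $g_0$ for the distance from the centre to \gdy's nearest server, I would take $\Phi$ to be a fixed multiple of $g_0$, so that $\Phi$ measures exactly the feature that the spider admits no best configuration, namely that \gdy\ need not keep a server at the centre. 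Since $g_0$ never exceeds the length of the longest ray, $\Phi_0-\Phi_n=\eta$ is automatically a constant, as required.

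The geometric heart of the argument is a single-request inequality relating \gdy\ to $B$. Let $j$ be the rank of $\sigma[i]$ in \gdy's configuration $C_A$, so that $\gdy(\sigma[i])=\minD(\cP^\delta_{C_A}[j])$ and $\gdy(\pi(\sigma)[i]\mid B(\pi(\sigma)[1,i-1]))=\minD(\cP^\delta_{C_B}[j])$, where $C_B$ is $B$'s configuration. Let $C_A^\ast$ be obtained from $C_A$ by relocating its innermost server to the centre. Because moving one server a distance $g_0$ perturbs $\minD$ pointwise by at most $g_0$, the $j$-th smallest value is perturbed by at most $g_0$, giving $\minD(\cP^\delta_{C_A}[j])\le\minD(\cP^\delta_{C_A^\ast}[j])+g_0$. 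As $C_A^\ast$ now contains a server at the centre, Corollary~\ref{cor:spiderGraph} yields $\minD(\cP^\delta_{C_A^\ast}[j])\le 2k\,\minD(\cP^\delta_{C_B}[j])$. Combining the two estimates gives the key inequality
\begin{equation*}
  \gdy(\sigma[i])\ \le\ 2k\cdot\gdy(\pi(\sigma)[i]\mid B(\pi(\sigma)[1,i-1]))\ +\ g_0 .
\end{equation*}

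It then remains to absorb the additive term $g_0$ into $\Phi$, which I would attempt through a case analysis on how \gdy's move changes $g_0$. When the served server moves toward the centre (typically for requests near the centre), $g_0$ drops by essentially the cost incurred, so for a multiplier at least $1$ the potential drop $\Delta\Phi$ already cancels the amortized cost, comfortably within the budget $4k\cdot\gdy(\pi(\sigma)[i]\mid B)$. When the served server moves away from the centre, the request necessarily has high rank $j$, which should force $\minD(\cP^\delta_{C_B}[j])$ to be large enough that the bound holds even with $\Delta\Phi$ nonnegative. I expect this second situation to be the main obstacle: $g_0$ can remain unchanged while \gdy\ pays a large cost by moving a \emph{non}-innermost server outward, so the estimate above becomes too lossy and a potential of the form $\Theta(g_0)$ does not react. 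Overcoming this will likely require either sharpening the single-request estimate—exploiting that a high rank $j$ on a spider genuinely forces a large comparison cost for \emph{every} configuration—or enriching $\Phi$ so that it records the radial positions of all servers rather than only the innermost one. Once condition~(ii) is verified in every case, Lemma~\ref{lem:amort} delivers $\gdy(\sigma)\le 4k\,B(\pi(\sigma))+\eta$.
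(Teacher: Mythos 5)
Your proposal follows the paper's skeleton (Lemma~\ref{lem:amort} with $d=1$ via~\eqref{eq:prop1}, the ordered bijection of Definition~\ref{def:ob}, and Corollary~\ref{cor:spiderGraph}), but it stalls exactly where you say it does, and that gap is genuine: a potential of the form $\Theta(g_0)$ is inert whenever the innermost server does not move, while your single-request estimate charges an additive $g_0$ on \emph{every} request, so nothing ever pays for that charge when \gdy\ repeatedly serves requests with non-innermost servers. Even your ``good'' case is fragile: a server moving toward the centre need not decrease $g_0$ at all unless it is (or becomes) the innermost one. So as written the argument does not close.

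The missing idea is the one you gesture at in your final sentence: the paper takes $\Phi:=\sum_{i=1}^{k} d(c,g_i)$, the total radial displacement of \emph{all} of \gdy's servers, and with this choice the additive perturbation term never arises, because the amortized cost of every \gdy\ move is at most $2d(c,\sigma[i])$, the round-trip distance from the centre to the request. Concretely: a move away from the centre along the request's ray costs at most $d(c,\sigma[i])$ (the server starts between the centre and the request) and raises $\Phi$ by exactly the cost, so $a_i=2\gdy(\sigma[i])\le 2d(c,\sigma[i])$; a move toward the centre on the same ray has $a_i=0$; and a cross-ray move of length $d(g,c)+d(c,\sigma[i])$ changes $\Phi$ by $d(c,\sigma[i])-d(g,c)$, so again $a_i=2d(c,\sigma[i])$. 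A single application of Corollary~\ref{cor:spiderGraph} then yields $2d(c,\sigma[i])\le 4k\cdot\gdy(\pi(\sigma)[i]\mid B(\pi(\sigma)[1,i-1]))$, which is condition~(ii) of Lemma~\ref{lem:amort} with $c=4k$, and $\Phi_0-\Phi_n$ is bounded by $k$ times the longest ray, giving the constant $\eta$. In short, the correct quantity to charge against is $2d(c,\sigma[i])$ rather than your $2k\cdot\gdy(\pi(\sigma)[i]\mid B(\cdots))+g_0$ bound; once you switch to the full-sum potential, the case analysis closes with no residual additive term to absorb.
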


\begin{proof}
  The potential function is $\Phi := \sum_{i=1}^{k} d(c,g_i)$, where $c$ is the centre of the star  and $g_i$ is the $i$-th server. 
  Let $a_i$ denote the amortized cost for $\sigma[i]$, i.e., 
  $a_i = \gdy(\sigma[i]) + \Delta\Phi_i$, where $\Delta\Phi_i = \Phi_i - \Phi_{i-1}$. We distinguish three cases with respect to $\sigma[i]$.
\begin{packed_item}
\item {\em Case 1: A server serves $\sigma[i]$ away from the centre on the same ray.}
In this case, the change in potential is $\Delta\Phi_i = \gdy(\sigma[i])$. Hence, the amortized cost $a_i = 2\gdy(\sigma[i]) \le 2d(c,\sigma[i]) \le 4k \cdot \gdy(\pi(\sigma)[i] | B(\pi(\sigma)[1,i-1]))$, where the last inequality follows from Corollary~\ref{cor:spiderGraph}.

\item {\em Case 2: A server serves $\sigma[i]$ towards the centre on the same ray.}
In this case, the change in potential is $\Delta\Phi_i = -\gdy(\sigma[i])$. Hence, the amortized cost $a_i = 0$.

\item {\em Case 3: A server serves $\sigma[i]$ by moving a sever that lies on a different ray than the one of $\sigma[i]$.}
In this case, the change in potential is $\Delta\Phi_i = d(c,\sigma[i])-d(s_1,c)$. Hence, for the amortized cost we get $a_i = \gdy(\sigma[i]) + d(c,\sigma[i])-d(s_1,c) = 2d(c,\sigma[i]) \le 4k \cdot \gdy(\pi(\sigma)[i] | B(\pi(\sigma)[1,i-1]))$, where the last inequality follows from Corollary~\ref{cor:spiderGraph}.
\end{packed_item}
Thus, $a_i \le 4k \cdot \gdy(\sigma[i] | B(\pi(\sigma)[1,i-1])$ for all $i$. The theorem follows from~\eqref{eq:prop1} and Lemma~\ref{lem:amort}.
\end{proof}

\bibliographystyle{plain}
\bibliography{bijective}

\end{document}